\newtheorem{theorem}{Theorem}[section]
\newtheorem{lemma}{Lemma}[section]
\newcommand\be{\begin{equation}}
\newcommand\ee{\end{equation}}
\newcommand\ber{\begin{eqnarray}}
\newcommand\eer{\end{eqnarray}}
\newcommand\berr{\begin{eqnarray*}}
\newcommand\eerr{\end{eqnarray*}}
\newcommand\bea{\begin{eqnarray}}
\newcommand\eea{\end{eqnarray}}
\newcommand{\nn}{\nonumber}
\newcommand\vep{\varepsilon}\newcommand{\ii}{\mathrm{i}}
\newcommand{\dd}{\mathrm{d}}
\newcommand\e{\mathrm{e}}\newcommand\pa{\partial}
\begin{document}

\title{Vortices for the magnetic Ginzburg--Landau theory in curved space}
\author{Lei Cao, Yilu Xu, Shouxin Chen\\School of Mathematics and Statistics, Henan University,\\
Kaifeng, Henan 475004, PR China\\ }
\date{}
\maketitle

\begin{abstract}

Since the Ginzburg--Landau theory is concerned with macroscopic phenomena, and gravity affects how objects interact at the macroscopic level. It becomes relevant to study the Ginzburg--Landau theory in curved space, that is, in the presence of gravity. In this paper, some existence theorems are established for the vortex solutions of the magnetic Ginzburg--Landau theory coupled to the Einstein equations. First, when the coupling constant $\lambda=1$, we get a self--dual structure from the Ginzburg--Landau theory, then a partial differential equation with a gravitational term that has power--type singularities is deduced from the coupled system. To overcome the difficulty arising from the orders of singularities at the vortices, a constraint minimization method and a monotone iteration method are employed. We also show that the quantized flux and total curvature are determined by the number of vortices. Second, when the coupling constant $\lambda>0$, we use a suitable ansatz to get the radially symmetric case for the magnetic Ginzburg--Landau theory in curved space. The existence of the symmetric vortex solutions are obtained through combining a two--step iterative shooting argument and a fixed-point theorem approach. Some fundamental properties of the solutions are established via applying a series of analysis techniques.

\end{abstract}

\medskip
\begin{enumerate}

\item[]
{Keywords:} Ginzburg--Landau theory, magnetic, gravity, vortices, constrained variational method, weighted Sobolev spaces, monotone iteration method, shooting argument, fixed--point theorem.

\item[]
{MSC numbers(2020):} 35J57, 35Q75, 81T10.

\end{enumerate}

\section{Introduction}\label{s0}
\setcounter{equation}{0}

In 1950, Ginzburg and Landau established the basic dimensional theory of superconductivity, the Ginzburg--Landau theory. This is a "semi--quantum theory". In 1957, Abrikosov classified superconductors based on the Meissner effect within the framework of the Ginzburg--Landau theory. In the same year, Bardeen, Cooper and Schrieffer \cite{B,Ba} proposed a complete quantum theory of superconductivity, the BCS theory, based on Cooper's electron pairing theory. In 1959, Gorkov \cite{G} proved that the fundamental equation of Ginzburg--Landau theory, the Ginzburg--Landau equations, can be derived from BCS theory.
Since then, the literature on Ginzburg--Landau theory showcases its versatility in describing a wide array of physical phenomena, from phase transitions \cite{H,Sc} and superconductivity \cite{Ch,G,Pa,Tr} to domain walls \cite{Fa,Ma} and skyrmions \cite{Ag,Li}. The rigorous derivations and applications presented in these studies highlight the importance of Ginzburg--Landau theory in both theoretical and computational physics.

The Ginzburg--Landau energy has been the subject of numerous studies since the pioneering work of Bethuel, Brezis, and H\'{e}lein in \cite{Be}. A series of techniques and concepts were developed for Ginzburg--Landau theory, such as: renormalized energies, the Pohozaev identities, lower bounds for the energy in terms of the vortices, and stationarity conditions, etc. However, the model without magnetic field does not exhibit all the phenomena observed in superconductors. So it seems more interesting to study the full Ginzburg--Landau theory, which represents the magnetic field.
It is well known that vortices always repel each other. Like magnets with the same poles, there seems to be a natural repulsive force between them that keeps them away from each other. However, it's not that simple. In this physical system, the vortexes were originally separate, mutual exclusion and scattered around, but the magnetic field is like pulling up an invisible fence, they are confined to the centre of the domain. Besides, when an externally applied magnetic field intervenes in this physical system, it triggers a series of complex and wonderful changes. This applied magnetic field is like a special key that induces a phase transition in matter. In this physical context, the intervention of the magnetic field causes the internal structure and state of matter to change essentially. Moreover, the applied magnetic field has a unique selective power that determines the number of vortices. It is as if the magnetic field is selecting the number of vortices that should be present in the system according to a specific rule, and this precision and regularity is what is so fascinating in physics.

Gravity has no significant effect at the level of subatomic particles. However, at the macroscopic level, gravity is the most important interaction between objects, determining the motion of planets, stars, galaxies and even light.
Galaxy formation in early cosmology \cite{V,Vi} induces local concentrations of energy and gravitational curvature, a phenomenon that arises from the coupling of a suitable system of matter and gauge fields to Einstein equations.
As is well known, Ginzburg--Landau theory is used in physics to describe phase transitions and critical phenomena \cite{Mi,De} in macroscopic systems, and is closely related to macroscopic properties. The Ginzburg--Landau equations exhibit vortex solutions, which can be viewed as finite energy solutions to the equations describing the two--dimensional Abelian Higgs model.
Cosmic string \cite{Br,Kib,Vil,Wi,Ze} solutions of the Abelian Higgs model with conformal coupling to gravity \cite{Ve} have been explored, demonstrating the interplay between these two theories. Additionally, the coupling of the Abelian Higgs theory to New Massive Gravity \cite{F} has been investigated, revealing how superconductors relate to gravitation. Moreover, Abelian Higgs model has been coupled to gravity near a black hole horizon \cite{Gu}, and the implications of this coupling for gauge symmetry and cosmological constants have been highlighted. All these studies underline the significance of exploring the interplay between the Abelian Higgs theory and gravity in understanding the dynamics of quantum field theories.

The aim of this article is to study the full Ginzburg--Landau theory of vortices with gravity which will cause space to curve at infinity.
There is one coupling constant $\lambda$ in the Ginzburg--Landau theory, and $\lambda=1$ borders the two types of superconductors such that when $\lambda<1$, the superconductor is of type I, when $\lambda>1$, the superconductor is of type II.
In order to overcome the difficulty of studying the coupled system of the Einstein equations and the usual second-order Euler--Lagrange equations of the geometrized Ginzburg--Landau theory, we take the coupling constant $\lambda=1$, which allows us to obtain a self--dual system of the Ginzburg--Landau theory. Consequently, use the simplifies Einstein equations, we get the formulation of the gravitational metric factor $\e^\eta$.

The rest of this paper is organized as follows. In Section 2 and 3, we introduce the geometrization form of the magnetic Ginzburg--Landau theory and derive the corresponding self--dual system, which can actually be reduced to a nonlinear elliptic equation with a gravitational term of highly complex exponential form when the coupling constant $\lambda=1$. In Section 4, a monotone iteration method is employed to prove the existence of solutions to the nonlinear elliptic equation with topological boundary condition. In this process, we apply McOwen's method \cite{M} to study the existence of the lower solution. We then go on in Section 5 to study the quantized flux and total curvature of the magnetic Ginzburg--Landau theory with gravity. Finally, Section 6 gives the radially symmetric form of the magnetic Ginzburg--Landau theory in curved space when the coupling constant $\lambda>0$, and illustrates the existence of the symmetric vortex solutions. A series of its qualitative properties are also discussed.

\section{The magnetic Ginzburg--Landau theory with gravity}\label{s1}
\setcounter{equation}{0}

It is well known that in the static limit and the temporal gauge the Abelian Higgs theory reduces itself into the Ginzburg--Landau theory for superconductivity. Thus, with the gauge field $A_\mu=(0,A_1,A_2,0)$ and the Higgs field $\phi$, the Hamiltonian density assumes the form
\be\label{1.1}
\mathcal{H}=\frac{1}{2}F_{12}^2+\frac{1}{2}|D_1\phi|^2+\frac{1}{2}|D_2\phi|^2+\frac{\lambda}{8}(|\phi|^2-1)^2,
\ee
over $\mathbb{R}^2$ with the coordinates $x_1,x_2$ with $F_{12}=\pa_1 A_2-\pa_2 A_1$, $D_j\phi=\partial_j\phi-\ii A_j\phi$, $x=(x^j), j=1,2$, and a dimensionless coupling parameter $\lambda>0$.
Let the gravitational metric of the spacetime be
\be\label{1.5}
g_{\mu\nu}=\rm{diag}\{1,-\e^\eta,-\e^\eta,-1\}, \mu,\nu=0,1,2,3.
\ee
Then \eqref{1.1} is modified into a form with broken vacuum symmetry
\bea\label{1.4}
\mathcal{H}&=&\frac{1}{2}g^{\mu\mu'}g^{\nu\nu'}F_{\mu\nu}F_{\mu'\nu'}-\frac{1}{2}g^{\mu\nu}D_\mu \phi\overline{D_\nu \phi}+\frac{\lambda}{8}(|\phi|^2-1)^2,\nn\\
&=&\frac {1}{2}\e^{-2\eta}F_{12}^2+\frac {1}{2}\e^{-\eta}(|D_1 \phi|^2+|D_2 \phi|^2)+\frac{\lambda}{8} (|\phi|^2-1)^2.
\eea
Thus, the energy of the magnetic Ginzburg--Landau theory with gravity is
\be\label{1.4b}
E=\int_{\mathbb{R}^2}\mathcal{H}\e^\eta\dd x.
\ee
In the following, we aim to find a static energy--minimizing solution $(\phi, A)$ which is in fact a multiple vortex, so that the gravitational metric is of the form \eqref{1.5}.

Our main result of the existence theorem concerning multiple vortices is as follows.

\begin{theorem}\label{th1.1}
For given distinct prescribed points $p_1, p_2,\ldots,p_N\in \mathbb{R}^2$ satisfying
\be\label{1.4c}
4\pi GN\leq 1,
\ee
the energy \eqref{1.4b} has a finite--energy solution $(\phi, A_j)(A_0=A_3=0)$ with the associated two--dimensional gravitational metric $g_{jk}=\e^\eta\delta_{jk}$ which is determined explicitly by the formula
\be\nn
\e^\eta=g_0\left(\e^{u-\e^u}\prod_{s=1}^N|x-p_s|^{-2}\right)^{4\pi G},
\ee
where $g_0>0$ is a constant and $u=\ln|\phi|^2$. The zeros of $\phi$ are exactly $p_1, p_2,\ldots,p_N$ of unit charge and the quantized values of the magnetic flux, the energy of the matter--gauge sector and the total gravitational curvature are
\be\nn
\int_{\mathbb{R}^2}F_{12}\dd x=\pm\pi N,~~\int_{\mathbb{R}^2}\mathcal{H}\e^\eta\dd x=\pi N,~~\int_{\mathbb{R}^2}K_\eta\e^{\eta}\dd x=8\pi^2GN,
\ee
where $K_\eta$ is the Gauss curvature~\textup{(}we give its expression in \eqref{1.21}\textup{)}. Besides, as $|x|\to\infty$, the solution approaches the vacuum with broken symmetry at the rate
\bea
F_{12}&=&O(|x|^{-(b+8\pi GN)}),\nn\\
1-|\phi|^2&=&O(|x|^{-b}),\nn\\
|D_j\phi|&=&O(|x|^{-b_1}),\nn
\eea
where $b, b_1>0$ are arbitrary.
\end{theorem}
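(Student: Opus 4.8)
The plan is to build on the self-dual ($\lambda=1$) structure. First I would complete squares in the energy \eqref{1.4b}. Using the Bogomolny identity $\frac{1}{2}(|D_1\phi|^2+|D_2\phi|^2)=\frac{1}{2}|(D_1\pm\ii D_2)\phi|^2\pm\frac{1}{2}F_{12}|\phi|^2\pm\frac{1}{2}(\pa_1 J_2-\pa_2 J_1)$ with the current $J_k=\mathrm{Im}(\bar\phi D_k\phi)$, together with $\frac{1}{2}\e^{-\eta}F_{12}^2+\frac{1}{8}(|\phi|^2-1)^2\e^\eta=\frac{1}{2}\e^{-\eta}\bigl(F_{12}\pm\tfrac{1}{2}\e^\eta(|\phi|^2-1)\bigr)^2\mp\frac{1}{2}F_{12}(|\phi|^2-1)$, the energy density collapses to a sum of squares plus a pure divergence and the topological term $\pm\frac{1}{2}\int_{\mathbb{R}^2}F_{12}\,\dd x$. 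Energy minimizers saturate this Bogomolny bound and hence solve the first-order self-dual system $(D_1\pm\ii D_2)\phi=0$, $F_{12}\pm\frac{1}{2}\e^\eta(|\phi|^2-1)=0$.

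Next I would reduce the coupled system to a single scalar equation. Self-duality forces $\phi$ to vanish exactly at the prescribed points $p_s$ with unit local degree, and away from these zeros $\Delta\ln|\phi|^2=2F_{12}$; incorporating the sources gives $\Delta u=\e^\eta(\e^u-1)+4\pi\sum_{s=1}^N\delta_{p_s}$ with $u=\ln|\phi|^2$. The gravitational sector is closed by the two-dimensional Einstein constraint, which for the conformal metric $g_{jk}=\e^\eta\delta_{jk}$ reads $K_\eta\e^\eta=-\frac{1}{2}\Delta\eta=8\pi G\,\mathcal{H}\e^\eta$; evaluating $\mathcal{H}$ on a self-dual configuration makes the right-hand side an exact Laplacian, so the relation integrates explicitly to $\eta=\ln g_0+4\pi G(u-\e^u)-8\pi G\sum_s\ln|x-p_s|$. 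Since $u\sim 2\ln|x-p_s|$ near $p_s$, the logarithmic source cancels and $\eta$ is regular at the vortices, consistent with the stated closed form. Substituting $\e^\eta=g_0\,\e^{4\pi G(u-\e^u)}\prod_s|x-p_s|^{-8\pi G}$ yields a self-contained nonlinear elliptic equation for $u$ carrying the singular weight $\prod_s|x-p_s|^{-8\pi G}$.

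To solve it I would first eliminate the Dirac sources by the background substitution $u=u_0+v$ with $u_0=\sum_s\ln\frac{|x-p_s|^2}{1+|x-p_s|^2}$, so that $\Delta u_0=4\pi\sum_s\delta_{p_s}-g$ for a smooth $g\ge0$, leaving a regular equation for $v$ in a weighted Sobolev space subject to $v\to0$ at infinity (equivalently $|\phi|^2\to1$). Existence then follows from a sub/supersolution scheme: a supersolution is produced by constrained minimization of the associated functional, while the delicate lower solution is built following McOwen's method \cite{M} for semilinear equations on $\mathbb{R}^2$ with prescribed logarithmic asymptotics, after which monotone iteration between the two barriers converges to a solution. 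The main obstacle is precisely the coupling between the $u$-dependent factor $\e^{4\pi G(u-\e^u)}$ and the power-type singularity $\prod_s|x-p_s|^{-8\pi G}$: the coefficient of the nonlinearity is itself a function of the unknown and blows up at each vortex, so both the choice of admissible function space and the comparison estimates require care. This is exactly where the hypothesis $4\pi GN\le1$ is used---it forces each exponent to satisfy $8\pi G\le 2/N<2$, so the weight is locally integrable and the functional is well defined and coercive, and it also fixes the decay $\e^\eta=O(|x|^{-8\pi GN})$ at infinity.

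Finally, with $(u,\phi,\eta)$ in hand the quantized quantities follow by integration. Integrating the scalar equation over $\mathbb{R}^2$ and using the decay of $u$ shows that $\int_{\mathbb{R}^2}\e^\eta(\e^u-1)\,\dd x$ is a fixed multiple of $N$; combined with the self-dual relation between $F_{12}$ and $\e^\eta(\e^u-1)$ this produces the quantized flux $\int_{\mathbb{R}^2}F_{12}\,\dd x=\pm\pi N$, while the saturated Bogomolny identity evaluates the matter-gauge energy as $\int_{\mathbb{R}^2}\mathcal{H}\e^\eta\,\dd x=\pi N$. For the curvature, $\int_{\mathbb{R}^2}K_\eta\e^\eta\,\dd x=-\frac{1}{2}\int_{\mathbb{R}^2}\Delta\eta=-\frac{1}{2}\oint_\infty\pa_r\eta$, and since $\eta\sim-8\pi GN\ln|x|$ as $|x|\to\infty$ this evaluates to $8\pi^2GN$. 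The asymptotic rates then come from a barrier analysis of the linearized equation near infinity: the gravitational weight $|x|^{-8\pi GN}$ degrades the usual exponential approach to the vacuum into the polynomial bounds $1-|\phi|^2=O(|x|^{-b})$, $|D_j\phi|=O(|x|^{-b_1})$, and $F_{12}=O(|x|^{-(b+8\pi GN)})$ with $b,b_1>0$ arbitrary.
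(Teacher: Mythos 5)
Your reduction to the self--dual system, the integration of the Einstein constraint to the explicit formula for $\e^\eta$, and the computations of the flux, energy, curvature and decay rates all match the paper's Sections 3--5. The gap is in the existence argument for the scalar equation. You propose to run a McOwen--type constrained minimization directly in the physical regime $4\pi GN\le 1$, asserting that this hypothesis makes ``the functional well defined and coercive.'' It does guarantee local integrability of the weight at the vortices, but it ruins integrability at infinity: since $\e^\eta=O(|x|^{-8\pi GN})$ and $8\pi GN\le 2$, the weighted measure $\dd\mu=\e^\eta\dd x$ has infinite total mass, so the Trudinger--Moser inequality, the Poincar\'e--type bound and the compact embedding $\mathscr{H}\hookrightarrow L^2(\dd\mu)$ that underpin McOwen's scheme (Lemmas 4.1--4.3 of the paper) are not available, and the constraint $J(v)=C_0$ with $0<C_0<8\pi$ cannot even be set up. The variational step simply does not close in the regime where the theorem is stated.

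The paper's resolution, which your proposal misses, is a two--regime trick: first solve the equation for the \emph{auxiliary} parameter range $2\pi GN>1$, where $\e^\eta$ decays fast enough that $\int\e^\eta\dd x<\infty$ and the constrained minimization goes through (Theorem 4.1); the resulting solution $u_1$ is negative by the maximum principle, and because $\e^{u_1}-1<0$ one checks it is a \emph{subsolution} of the physical equation with $0<4\pi GN\le 1$ (Lemma 4.7), while $u_+\equiv 0$ is trivially a supersolution. Monotone iteration between these barriers (via Ni's theorem) then yields the solution. Your sketch has the two barriers with their roles reversed --- you assign the constrained minimization to the supersolution and ``McOwen's method'' to the lower solution, whereas the minimization \emph{is} the McOwen step and it produces the lower barrier --- and, more importantly, it omits the essential observation that the minimization must be performed at a different value of the gravitational exponent and the output transferred by a comparison argument. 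Without that, the construction of at least one barrier is unsupported.
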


\section{Governing euqations}\label{s2}
\setcounter{equation}{0}

The energy--minimizing solution is hard to approach, while we can find a self--dual system to reduce the problem to a structure of nonlinear elliptic equation. To proceed, we introduce a current density
\be\label{1.6}
J_k=\frac{\ii}{2}(\phi \overline{D_k \phi}-\overline{\phi} D_k \phi),~~k=1,2.
\ee
Then the commutation relation for gauge--covariant derivatives
\be\label{1.7}
[D_j,D_k]\phi=(D_j D_k-D_k D_j)\phi=-\ii(\partial_j A_k-\partial_k A_j)\phi=-\ii F_{jk}\phi,~~j,k=1,2,
\ee
and the formula $\partial_j(u\overline{u})=\overline{u}D_j u+u \overline{D_j u}$ allow us to differentiate \eqref{1.6} as
\be\label{1.8}
J_{12}=\partial_1 J_2-\partial_2 J_1=\ii(D_1 \phi \overline{D_2 \phi}-\overline{D_1 \phi} D_2 \phi)-|\phi|^2 F_{12}.
\ee
Therefore, if we take $\lambda=1$, \eqref{1.4} can be rewritten as
\bea\label{1.9}
\mathcal{H}&=&\frac {1}{2}\e^{-2\eta}F_{12}^2+\frac {1}{2}\e^{-\eta}(|D_1 \phi|^2+|D_2 \phi|^2)+\frac{1}{8} (|\phi|^2-1)^2\nn\\
&=&\frac{1}{2}\left(\e^{-\eta}F_{12}\pm \frac {1}{2} (|\phi|^2-1)\right)^2+\frac {1}{2}\e^{-\eta}|D_1 \phi\pm \ii D_2 \phi|^2\nn\\
&&\pm\frac {1}{2}\e^{-\eta}\left(\ii(D_1 \phi \overline{D_2 \phi}-\overline{D_1 \phi} D_2 \phi)-F_{12}(|\phi|^2-1)\right)\nn\\
&=& \frac {1}{2}\left((\e^{-\eta}F_{12}\pm \frac {1}{2} (|\phi_1|^2-1))^2+\e^{-\eta}|D_1 \phi\pm \ii D_2 \phi|^2\right)\pm \frac 12\e^{-\eta}F_{12}\pm \frac 12\e^{-\eta}J_{12}.
\eea
From \eqref{1.9}, we are lead to the self--dual system
\bea
F_{12}&=&\mp\frac{1}{2}\e^\eta(|\phi|^2-1),\label{1.10}\\
D_1 \phi\pm \ii D_2 \phi&=&0.\label{1.11}
\eea
Assume that the set
\be\label{1.12}
P=\{p_1,p_2,\cdots,p_N\}
\ee
determines the locations of zeros (counting multiplicities) of $\phi$ and $N>1$. The equation \eqref{1.11} says in view of the $\overline{\pa}$--Poincar\'{e} lemma that, $\phi$ is holomorphic in a neighborhood around $p_s$, then locally
\be\label{1.13}
\phi(x)=h_{p_s}(x)(x-p_s), s=1,2...,N,
\ee
where $h_{p_s}$ is a locally well--defined nonvanishing function. The zeros of $\phi$ are obviously isolated and give the locations of the vortices of a solution. It is straightforward to examine that any solution of \eqref{1.10}--\eqref{1.11} necessarily leads to the lower energy bound $E=\int \mathcal{H}\e^\eta\dd x=\pi N$ where $N$ is positive and represents the total vortex number over $(\mathbb{R}^2,\e^\eta\delta_{ij})$.

Let $u=\ln|\phi|^2$, then the system \eqref{1.10}--\eqref{1.11} is reduced to
\be\label{1.14a}
\triangle_g u=(\e^u-1)+4\pi\sum_{s=1}^N \delta_{p_s}(x),~~x\in\mathbb{R}^2,
\ee
where $\triangle_g u$ is the Laplace--Beltrami operator induced from the metric $g=\{g_{ij}\}$ defined by
\be\label{1.14b}
\triangle_g u=\frac{1}{\sqrt{g}}\pa_j(g^{jk}\sqrt{g}\pa_k u).
\ee
Consequently, inserting \eqref{1.14b} into \eqref{1.14a}, we see that
\be\label{1.14}
\triangle u=\e^\eta(\e^u-1)+4\pi\sum_{s=1}^N \delta_{p_s}(x),~~x\in\mathbb{R}^2.
\ee
Here the unknown gravitational factor $\e^\eta$ is to be determined in the following. For this purpose, we consider the Einstein equations
\be\label{1.2}
G_{\mu\nu}=-8\pi G T_{\mu\nu},
\ee
where $G$ is Newton's gravitational constant, $T_{\mu\nu}$ represents energy momentum tensor, and $G_{\mu\nu}$ denotes the Einstein tensor with the form
\be\label{1.3}
G_{\mu\nu}=R_{\mu\nu}-\frac{1}{2}g_{\mu\nu}R,
\ee
where $g_{\mu\nu}$ is the gravitational metric of the spacetime with the Minkowski signature $(+---)$, $R$ the scalar curvature, and $R_{\mu\nu}$ the Ricci tensor.

The Gauss curvature $K_\eta$ can be calculated by the expression
\be\label{1.21}
K_\eta=-\frac{1}{2}\e^{-\eta}\triangle\eta
\ee
on the two--surface $(\mathbb{R}^2,\e^\eta\delta_{ij})$. We see that the metric \eqref{1.5} simplifies the Einstein equations to a singular scalar equation
\be\label{1.12}
K_\eta=8\pi G\mathcal{H}.
\ee
Using \eqref{1.10}--\eqref{1.11}, when away from zeros of $\phi$, the energy density $\mathcal{H}$ has the form
\bea\label{1.19}
\mathcal{H}&=&\frac {1}{2}\e^{-2\eta}F_{12}^2+\frac {1}{2}\e^{-\eta}(|D_1 \phi|^2+|D_2 \phi|^2)+\frac{\lambda}{8} (|\phi|^2-1)^2\nn\\
&=&\mp\frac{1}{2}\e^{-\eta}F_{12}(|\phi|^2-1)+\frac{1}{2}\e^{-\eta}(|D_1\phi|^2+|D_2\phi|^2)\nn\\
&=&\frac{1}{4}\e^{-\eta}(\e^u-1)\triangle u+\frac{1}{4}\e^{-\eta}\e^u|\nabla u|^2.
\eea
Taking the zeros $p_s$ into consideration, we deduce the relation at the full $\mathbb{R}^2$
\be\label{1.20}
4\e^\eta\mathcal{H}=\triangle(\e^u-u)+4\pi\sum_{s=1}^N \delta_{p_s}(x),~~x\in\mathbb{R}^2.
\ee
In view of \eqref{1.12}, \eqref{1.21} and \eqref{1.20}, we get a harmonic function
\be\label{1.22}
\frac{\eta}{4\pi G}+\e^u-u+\sum_{s=1}^N\ln|x-p_s|^2,
\ee
which we assume to be a constant. Thus the gravitational metric factor takes the form
\be\label{1.15}
\e^\eta=g_0\left(\e^{u-\e^u}\prod_{s=1}^N|x-p_s|^{-2}\right)^{4\pi G}.
\ee
Here $g_0>0$ is an arbitrary constant. Since we are interested in solutions in the broken symmetry category so that $|\phi|^2$=1 at infinity or
\be\label{1.16}
u(x)\to 0~~\text{as}~~|x|\to\infty.
\ee
Besides, the boundary condition \eqref{1.16} gives us that
\be\label{1.17}
\e^\eta=O(|x|^{-8\pi G N}),~~|x|\gg 1,
\ee
which leads to the deficit angle
\be\label{1.18}
\delta=8\pi^2 G N.
\ee
In fact, the deficit angle should be less than $2\pi$, so we deduce that $4\pi GN\leq 1$, which can also be verified by the geodesic completeness \cite{Cao} of the associated two-dimensional gravitational metric $g_{ij}=\e^\eta\delta_{ij}$.

\section{Construction of solutions}\label{s3}
\setcounter{equation}{0}

In this section we prove the existence of vortices of \eqref{1.14} with topological boundary condition. Recall the analysis of the previous chapter on the deficit angle and the geometric completeness, we see that the equation \eqref{1.14} only make sense when $4\pi GN\leq1$. To overcome the difficulty posed by singularities in the gravitational factor, we will study the equation \eqref{1.14} in two cases, depending on the value of $4\pi GN$. First we use a constrained minimization method to study the equation \eqref{1.14} for the case $2\pi GN>1$ and obtain a solution $u_1$. We next show that $u_1$ is a subsolution of \eqref{1.14} when $0<4\pi GN\leq1$. Then a monotone iteration method is employed to construct vortices. We finally establish asymptotic estimates for the solution obtained.

\subsection{Existence of vortices for $2\pi GN>1$}
We start from the case $2\pi GN>1$.
\begin{theorem}\label{th2.1}
For any given distinct prescribed vortex points $p_1, p_2,\ldots,p_N\in \mathbb{R}^2$, when $4\pi G<1$ and $2\pi GN>1$, we can find a constant $C_0$ satisfying
\be\nn
\int_{\mathbb{R}^2}\e^\eta\dd x=C_0+4\pi N~~\text{and}~~0<C_0<8\pi,
\ee
such that the equation \eqref{1.14} has a solution which vanishes at infinity.
\end{theorem}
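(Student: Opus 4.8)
The plan is to reduce \eqref{1.14}, with the explicit metric factor \eqref{1.15} inserted, to a single self-consistent semilinear equation on $\mathbb{R}^2$ and to solve it by constrained minimization. Write $K=4\pi G$ and $b(x)=\prod_{s=1}^N|x-p_s|^{-2K}$, so that \eqref{1.15} reads $\e^\eta=g_0\,b(x)\,\e^{K(u-\e^u)}$. First I would absorb the Dirac sources into a fixed background
\[
u_0=\sum_{s=1}^N\ln\frac{|x-p_s|^2}{1+|x-p_s|^2},\qquad \triangle u_0=4\pi\sum_{s=1}^N\delta_{p_s}-g_*,\quad g_*=\sum_{s=1}^N\frac{4}{(1+|x-p_s|^2)^2},
\]
which has the prescribed logarithmic poles, tends to $0$ at infinity, and satisfies $g_*>0$, $\int_{\mathbb{R}^2}g_*\dd x=4\pi N$. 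Setting $u=u_0+v$, the singular weight combines with the background, $b\,\e^{Ku_0}=\prod_{s=1}^N(1+|x-p_s|^2)^{-K}=:W_0(x)$, a smooth bounded weight with $W_0=O(|x|^{-2KN})$; hence $\e^\eta=g_0\,W_0\,\e^{Kv-K\e^{u}}$ carries no singularity at the vortices, and the order-of-singularity difficulty is converted into a slow-decay, noncompactness difficulty at infinity governed by $4\pi G<1$ and $2\pi GN>1$. The equation for $v$ becomes $\triangle v=\e^\eta(\e^u-1)+g_*$. The algebraic identity $\frac{\dd}{\dd v}\e^\eta=-K\e^\eta(\e^u-1)$ shows this is the Euler--Lagrange equation of
\[
I(v)=\int_{\mathbb{R}^2}\Big(\tfrac12|\nabla v|^2-\tfrac1K\e^\eta+g_*\,v\Big)\dd x,
\]
and integrating \eqref{1.14} over $\mathbb{R}^2$ for a solution with $u\to0$ forces $\int_{\mathbb{R}^2}\e^\eta\dd x=4\pi N+\int_{\mathbb{R}^2}\e^{\eta+u}\dd x$, so the constant in the statement is exactly $C_0=\int_{\mathbb{R}^2}\e^{\eta+u}\dd x>0$.

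Next, since $I$ is unbounded below (taking $v\equiv-M$ on a large ball sends $\int g_* v\to-\infty$ while $\e^\eta\to0$ keeps the remaining terms bounded), I would minimize $I$ over the constraint set $\Sigma_\beta=\{v:\int_{\mathbb{R}^2}\e^\eta\dd x=\beta\}$ with target mass $\beta=C_0+4\pi N$. On $\Sigma_\beta$ the term $-\frac1K\int\e^\eta=-\beta/K$ is constant, so the problem reduces to minimizing $\frac12\int|\nabla v|^2+\int g_*\,v$ at fixed mass. Working in a weighted Sobolev space adapted to $W_0$, I would use a (singular) Moser--Trudinger inequality to bound $\int\e^\eta$ by $\exp\!\big(c\int|\nabla v|^2\big)$, to show that the constraint prevents $v$ from escaping to $-\infty$, and to obtain a lower bound for the reduced functional. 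The feasibility of choosing $\beta>4\pi N$ (so that $C_0>0$) together with the fast decay $W_0=O(|x|^{-2KN})$ needed to control the mass near infinity are where the hypothesis $2\pi GN>1$ is used.

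The hard part will be compactness. Along a minimizing sequence $\{v_n\}\subset\Sigma_\beta$ the Dirichlet energy is bounded, but on $\mathbb{R}^2$ the mass $\e^{\eta_n}$ may concentrate at a point or leak to infinity, so the constraint need not survive in the weak limit. The decisive feature is that $\e^{\eta+u}=g_0W_0\e^{(K+1)u-K\e^u}$ vanishes like $|x-p_s|^2$ at the vortices, so the measure that could concentrate does not feel the weight and the relevant concentration threshold is the regular one, $8\pi$. I therefore expect $0<C_0<8\pi$ to be precisely the subcritical condition under which a Brezis--Merle, concentration--compactness analysis excludes blow-up of $\e^{\eta_n+u_n}$; then $\e^{\eta_n}\to\e^{\eta}$ strongly in $L^1$, the constraint passes to the limit, weak lower semicontinuity of the Dirichlet term and $L^1$ convergence of the linear term give a minimizer $v_1$, and $u_1=u_0+v_1$ is obtained. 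Establishing this inequality on the noncompact, weighted domain is the main obstacle.

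Finally, the constrained minimizer satisfies $I'(v_1)=\mu J'(v_1)$ with $J(v)=\int\e^\eta$ and $J'(v)=-K\e^\eta(\e^u-1)$, so
\[
\triangle v_1=g_*+(1+\mu K)\,\e^\eta(\e^u-1).
\]
Since $g_0$ enters $\e^\eta$ only as an overall constant factor, the multiplier can be removed by replacing $g_0$ with $\tilde g_0=(1+\mu K)g_0$; after checking $1+\mu K>0$ (so $\tilde g_0>0$ is admissible), $u_1$ solves \eqref{1.14} exactly for the metric factor \eqref{1.15} with this $\tilde g_0$. Elliptic regularity away from $P$ and the weighted setting give $u_1\to0$ at infinity, and the identity $C_0=\int_{\mathbb{R}^2}\e^{\eta+u_1}\dd x$ together with the subcritical bound yields $0<C_0<8\pi$, completing the proof.
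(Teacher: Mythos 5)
Your reduction is in one respect more faithful to the equation than the paper's: you keep the dependence of $\e^\eta$ on $v$ and observe that the full nonlinearity $\e^\eta(\e^u-1)$ is the variational derivative of $-\frac{1}{K}\e^\eta$, whereas the paper freezes $\e^\eta$ as a fixed weight of decay $O(|x|^{-8\pi GN})$ (cf.\ \eqref{2.5}, \eqref{2.8}) and its functionals \eqref{2.6}--\eqref{2.7} only make sense on that understanding; your identification $C_0=\int\e^{\eta+u}$ also matches the paper's constraint $J(v)=\int\e^{\eta+u_0+v}=C_0$. But the step you defer as ``the main obstacle'' is exactly where the hypothesis $2\pi GN>1$ does its work, and the paper resolves it by importing McOwen's results rather than by concentration--compactness: since $8\pi GN>4$, the measure $\dd\mu=\e^\eta\dd x$ is finite, constants lie in the weighted space $\mathscr{H}$, and one has the Moser--Trudinger bound $\int\e^{a|\dot v|}\dd\mu\leq C(\vep)\exp\left(\frac{a^2}{4(4\pi-\vep)}\|\nabla\dot v\|_{L^2}^2\right)$ on the mean-zero part together with a compact embedding $\mathscr{H}\hookrightarrow L^2(\dd\mu)$ (Lemmas \ref{le.1}--\ref{le.3}). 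The threshold $8\pi$ then enters through coercivity, not through a Brezis--Merle blow-up analysis of $\e^{\eta+u}$ as you conjecture: eliminating the mean value $\bar v$ via the constraint and applying the inequality with $a=1$ gives the lower bound \eqref{2.19} with coefficient $1-\frac{C_0}{2(4\pi-\vep)}$, which is positive precisely when $C_0<8\pi$. Without this weighted inequality your reduced functional is not known to be bounded below on the constraint set, so the gap is not a technicality you can postpone.

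The second genuine problem is that your multiplier-removal step is circular as set up. You define $C_0=\int\e^{\eta+u}$ in terms of the solution you have yet to construct and then prescribe the constraint level $\beta=C_0+4\pi N$; for a level $\beta$ chosen in advance the constrained minimizer satisfies $\triangle v=g_*+(1+\mu K)\e^\eta(\e^u-1)$ with $\mu\neq0$ in general. Testing with $\chi\equiv1$ gives $(1+\mu K)\int\e^\eta(\e^u-1)\dd x=-4\pi N$, so $1+\mu K>0$ holds iff $\int\e^{\eta+u}<\int\e^\eta$, which requires knowing $u<0$ a priori; and the rescaling $g_0\mapsto(1+\mu K)g_0$ multiplies $\int\e^\eta\dd x$ by $1+\mu K$, so the constants in the statement --- in particular the bound $C_0<8\pi$ on which your coercivity and compactness rest --- are no longer under control after the fix. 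The paper sidesteps both issues at once by freezing $\e^\eta$: then $C_0=\int\e^\eta\dd x-4\pi N$ is a number computable in advance (made smaller than $8\pi$ by shrinking $g_0$), and the functionals are engineered so that the test function $\chi\equiv1$ yields $-C_0=\lambda C_0$, i.e.\ $\lambda=-1$ exactly, with no rescaling and no sign condition to verify. If you insist on keeping the $v$-dependence of $\e^\eta$ (which is the honest formulation), you need either a fixed-point argument around the frozen-coefficient problem or a direct unconstrained treatment of your functional $I$; the constrained scheme as written does not close.
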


It is clear that any solution $u$ of \eqref{1.14} that vanishes at infinity must be negative--valued, which can be examined from the maximum principle.

Define $u_0(x)$ to be a smooth function on $\mathbb{R}^2$ so that
\bea\label{2.3}
u_0(x)=
\begin{cases}
\ln|x-p_s|^2,&~~~~x~\text{near}~p_s, s=1,2,\cdots,N,\\
0,&~~~~x\gg 1.
\end{cases}
\eea
Then
\be\nn
\triangle u_0=4\pi\sum_{s=1}^N \delta_{p_s}(x)-g,~~g\in C_c^\infty(\mathbb{R}^2),~~\int_{\mathbb{R}^2}g\dd x=4\pi N.
\ee
We now introduce the substitution $u=u_0+v$ in \eqref{1.14}, then $v$ satisfies
\be\label{2.4}
\triangle v=\e^\eta(\e^{u_0+v}-1)+g,~~v(x)\to 0~\text{as}~|x|\to\infty,
\ee
where
\be\label{2.5}
\e^\eta=g_0\exp(4\pi G(u-u_0-\e^{u-u_0}))\prod_{s=1}^N|x-p_s|^{-8\pi G}.
\ee
Choose $\delta>0$ so that
\be\nn
B_\delta(p_i)=\{x||x-p_i|<\delta\},~~i=1,2,\cdots,N
\ee
are disjoint, that is, $B_\delta(p_i)\cap B_\delta(p_i)=\emptyset, i\neq j$.
Let $B_R$ be a sufficiently large domain centred at $0$ with radius $R$ and satisfies $B_R\supset\bigcup_{i=1}^{N}B_\delta(p_i)$. Note that the decay estimate \eqref{1.17}, when $4\pi G<1$ and $2\pi GN>1$, there holds
\bea
\int_{\mathbb{R}^2}\e^\eta \dd x&=&(\int_{B_R}+\int_{\mathbb{R}^2-B_R})\e^\eta \dd x\nn\\
&=&C_1+\int_{\mathbb{R}^2-B_R}O(|x|^{-8\pi GN}) \dd x\nn\\
&=&C_1+2\pi\int_R^\infty O(r^{-8\pi GN})r\dd r\nn\\
&=&C_1+2\pi C_2\frac{R^{2-8\pi GN}}{8\pi GN-2}\equiv C.\nn
\eea
Clearly, $\e^\eta-g$ is of compact support and
\be\nn
\int_{\mathbb{R}^2}(\e^\eta-g)\dd x=C-4\pi N \equiv C_0.
\ee
Note the structure of $\e^\eta$, we can choose an appropriate small $g_0$ such that $C_0<8\pi$, which is required in Lemma \ref{le.4}.

As in McOwen \cite{M} we define the functionals
\bea
I(v)&=&\int_{\mathbb{R}^2}\left(\frac{1}{2}|\nabla v|^2+(g-\e^\eta)v\right)\dd x,\label{2.6}\\
J(v)&=&\int_{\mathbb{R}^2}\e^{\eta+u_0+v}\dd x.\label{2.7}
\eea
To proceed, we need to consider a suitable weighted Sobolev space formalism. Define the weighted measure $\dd \mu=\e^\eta\dd x$, where $\e^\eta$ is as in \eqref{2.5} and
\be\label{2.8}
\e^\eta=O(r^{-8\pi GN}),~~2\pi GN>1,~~r=|x|\gg 1.
\ee
Use the notation $L^p(\dd\mu)=L^p(\mathbb{R}^2,\dd\mu)$ and let $\mathscr{H}$ be the Hilbert space of $L_{loc}^2$ functions under the norm given by
\be\label{2.9}
||v||_{\mathscr{H}}^2=||\nabla v||_{L^2(\dd x)}^2+||v||_{L^2(\dd\mu)}^2<\infty.
\ee
Then $\mathscr{H}$ contains all constant functions and thus $v\mapsto \int_{\mathbb{R}^2} v\dd\mu$ is a continuous linear functional on $\mathscr{H}$ so that
\be\label{2.10}
\dot{\mathscr{H}}=\{v\in \mathscr{H}|\int_{\mathbb{R}^2} v\dd\mu=0\}
\ee
is a closed subspace of $\mathscr{H}$. Therefore we have for each $v\in\mathscr{H}$ the decomposition
\be\label{2.11}
v=\bar{v}+\dot{v},~~\bar{v}\in\mathbb{R},~~\dot{v}\in\dot{\mathscr{H}}.
\ee

First, we give some results established in McOwen \cite{M}.
\begin{lemma}\label{le.1}
For any $0<\vep<4\pi$, there is some constant $C(\vep)>0$ so that
\be\label{2.12}
\int_{\mathbb{R}^2}\exp(a|v|)\dd\mu\leq C(\vep)\exp\left(\frac{a^2}{4(4\pi-\vep)}||\nabla v||_{L^2(\dd x)}^2\right),~~v\in\dot{\mathscr{H}}, a\in\mathbb{R}.
\ee
\end{lemma}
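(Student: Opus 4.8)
The plan is to peel off the linear exponent by an elementary Young-type inequality, thereby reducing \eqref{2.12} to a standard (quadratic-exponent) weighted Moser--Trudinger estimate, and then to establish that estimate by combining the finiteness of the measure $\dd\mu$ with the conformal invariance of the Dirichlet integral in dimension two. First I would record the pointwise bound $a|v|\le \frac{a^2}{4\beta}+\beta v^2$, valid for every $\beta>0$, which yields
\[
\int_{\mathbb{R}^2}\e^{a|v|}\,\dd\mu\le \e^{a^2/(4\beta)}\int_{\mathbb{R}^2}\e^{\beta v^2}\,\dd\mu .
\]
For $v\in\dot{\mathscr{H}}$ with $\|\nabla v\|_{L^2(\dd x)}\neq0$, I would choose $\beta=(4\pi-\vep)/\|\nabla v\|_{L^2(\dd x)}^2$ and set $w=v/\|\nabla v\|_{L^2(\dd x)}$, so that $w\in\dot{\mathscr{H}}$, $\|\nabla w\|_{L^2(\dd x)}=1$, the exponent becomes $(4\pi-\vep)w^2$, and the prefactor becomes exactly $\exp\!\big(\frac{a^2}{4(4\pi-\vep)}\|\nabla v\|_{L^2(\dd x)}^2\big)$; the degenerate case $\nabla v\equiv0$ forces $v\equiv0$ by the mean--zero constraint and is trivial. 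Everything then reduces to the uniform bound
\[
\int_{\mathbb{R}^2}\e^{(4\pi-\vep)w^2}\,\dd\mu\le C(\vep),\qquad w\in\dot{\mathscr{H}},\ \|\nabla w\|_{L^2(\dd x)}=1 .
\]

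To establish this weighted Moser--Trudinger bound I would use the decay \eqref{2.8}, namely $\e^\eta=O(r^{-8\pi GN})$ with $8\pi GN>4$ (since $2\pi GN>1$), which makes $\dd\mu$ a \emph{finite} measure, and then split $\mathbb{R}^2=B_R\cup(\mathbb{R}^2\setminus B_R)$ with $B_R$ chosen to contain all the $p_s$. On $B_R$ the weight $\e^\eta$ is locally in $L^p$ for some $p>1$ (its only singularities are of order $|x-p_s|^{-8\pi G}$ with $8\pi G<2$), so Hölder's inequality reduces the interior integral to an unweighted exponential integral, controlled by the classical Trudinger--Moser inequality on the bounded domain $B_R$. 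On the exterior I would apply the inversion $x\mapsto y=x/|x|^2$: the Dirichlet integral $\int|\nabla w|^2\,\dd x$ is conformally invariant in two dimensions, while $\e^\eta\,\dd x$ transforms into a weight of size $O(|y|^{8\pi GN-4})\,\dd y$ on a punctured disk, which is bounded and vanishes at the origin because $8\pi GN-4>0$; the same Hölder-plus-Trudinger--Moser argument then controls the exterior contribution. Assembling the two pieces yields a finite constant $C(\vep)$.

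The main obstacle is that functions in $\dot{\mathscr{H}}$ do not vanish on $\partial B_R$, so the $H^1_0$ form of the classical inequality does not apply verbatim and the additive constant of $w$ must be pinned down while keeping the effective threshold below $4\pi$ (this is where the slack $\vep>0$ is spent). I would resolve this with a weighted Poincaré inequality: the normalization $\int_{\mathbb{R}^2}w\,\dd\mu=0$ together with the finiteness of $\dd\mu$ gives $\|w\|_{L^2(\dd\mu)}\le C\|\nabla w\|_{L^2(\dd x)}$, which controls the mean of $w$ over $B_R$ (and over the inverted disk) by a uniformly bounded constant $c$. Writing $w=\tilde w+c$ with $\tilde w$ of vanishing mean and $|c|$ controlled, the cross term in $(\tilde w+c)^2$ is absorbed through $2c\tilde w\le \frac{\vep}{4\pi}\,\tilde w^2+C_\vep$, so the exponent stays of the form $(4\pi-\tfrac{\vep}{2})\tilde w^2$ plus a bounded additive constant, and the quadratic estimate survives. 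Combining this with the interior and exterior bounds completes the proof of \eqref{2.12}.
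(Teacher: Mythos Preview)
The paper does not supply its own proof of this lemma: immediately before stating Lemmas~\ref{le.1}--\ref{le.3} it writes ``we give some results established in McOwen~\cite{M}'' and quotes them without argument. So there is nothing line-by-line to compare against; your proposal is an attempt to supply what the paper outsources.

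Your outline is essentially correct and follows a standard route to such weighted Moser--Trudinger bounds. The Young-type reduction $a|v|\le \frac{a^2}{4\beta}+\beta v^2$ with $\beta=(4\pi-\vep)/\|\nabla v\|_{L^2(\dd x)}^2$ cleanly converts \eqref{2.12} into the quadratic-exponent estimate $\int_{\mathbb{R}^2}\e^{(4\pi-\vep)w^2}\dd\mu\le C(\vep)$ for $w\in\dot{\mathscr{H}}$ with $\|\nabla w\|_{L^2(\dd x)}=1$; the interior/exterior split with Kelvin inversion (using that $8\pi GN>4$ turns the decay into a bounded weight $|y|^{8\pi GN-4}$ on a disk, while the Dirichlet integral is conformally invariant) is the right mechanism.

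One point deserves more care. To run the mean-zero Moser--Trudinger inequality on $B_R$ you need to bound the \emph{Lebesgue} mean $c=|B_R|^{-1}\int_{B_R}w\,\dd x$, but the Poincar\'e inequality (Lemma~\ref{le.2}) you invoke only controls $\|w\|_{L^2(\dd\mu)}$. You should observe explicitly that $\e^{-\eta}$ is bounded on $B_R$ (the singularities $|x-p_s|^{-8\pi G}$ of $\e^\eta$ become \emph{zeros} of $\e^{-\eta}$), so that
\[
|c|\le |B_R|^{-1}\Big(\int_{B_R}\e^{-\eta}\,\dd x\Big)^{1/2}\|w\|_{L^2(\dd\mu)}\le C,
\]
and similarly on the inverted disk. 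With this in hand your absorption $2c\tilde w\le \frac{\vep}{4\pi}\tilde w^2+C_\vep$ keeps the effective exponent below $4\pi$ and the argument closes. Note also that you are using Lemma~\ref{le.2} inside the proof of Lemma~\ref{le.1}; this is harmless provided Lemma~\ref{le.2} is established independently (it follows from the compact embedding of Lemma~\ref{le.3} by a standard contradiction argument), but the dependency should be flagged.
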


\begin{lemma}\label{le.2}
There is a constant $C>0$ so that
\be\label{2.13}
||v||_{L^2(\dd\mu)}^2\leq C||\nabla v||_{L^2(\dd x)}^2,~~v\in\dot{\mathscr{H}}.
\ee
\end{lemma}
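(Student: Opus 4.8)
The plan is to establish \eqref{2.13} by a compactness argument built on the Moser--Trudinger inequality of Lemma \ref{le.1}, following the scheme of McOwen \cite{M}. Arguing by contradiction, suppose no such constant $C$ exists. Then one can produce a sequence $\{v_n\}\subset\dot{\mathscr{H}}$ normalized by $\|v_n\|_{L^2(\dd\mu)}^2=1$ while $\|\nabla v_n\|_{L^2(\dd x)}^2\to 0$. In particular $\{v_n\}$ is bounded in $\mathscr{H}$, so after passing to a subsequence we have $v_n\rightharpoonup v$ weakly in $\mathscr{H}$, hence $\nabla v_n\rightharpoonup\nabla v$ in $L^2(\dd x)$. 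By weak lower semicontinuity of the Dirichlet norm, $\|\nabla v\|_{L^2(\dd x)}\le\liminf\|\nabla v_n\|_{L^2(\dd x)}=0$, so $\nabla v\equiv0$ and $v$ is a constant. Since $\dot{\mathscr{H}}$ is a closed subspace of $\mathscr{H}$ it is weakly closed, whence $v\in\dot{\mathscr{H}}$; combined with $v\equiv\text{const}$ and $\int_{\mathbb{R}^2}v\,\dd\mu=0$ this forces $v\equiv0$.

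The decisive step, which I expect to be the main obstacle, is to upgrade the weak convergence to strong convergence $v_n\to v$ in $L^2(\dd\mu)$; once this is done we reach the contradiction $1=\lim\|v_n\|_{L^2(\dd\mu)}^2=\|v\|_{L^2(\dd\mu)}^2=0$. Here Lemma \ref{le.1} is exactly what is needed to control the two regions where the weighted norm could concentrate. First, because $\{\|\nabla v_n\|_{L^2(\dd x)}\}$ is bounded, \eqref{2.12} gives a uniform bound $\int_{\mathbb{R}^2}\exp(a|v_n|)\,\dd\mu\le C_a$ for every fixed $a>0$; since $s^2\le C_a'\e^{as}$ for large $s$, the elementary estimate $\int_{\{|v_n|>M\}}v_n^2\,\dd\mu\le M^2\e^{-aM}\int_{\mathbb{R}^2}\exp(a|v_n|)\,\dd\mu$ yields uniform $\mu$--integrability of $\{v_n^2\}$, i.e. $\sup_n\int_{\{|v_n|>M\}}v_n^2\,\dd\mu\to0$ as $M\to\infty$. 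Because $2\pi GN>1$ the decay \eqref{2.8} makes $\mu$ a finite measure with $\mu(\mathbb{R}^2\setminus B_R)\to0$ as $R\to\infty$, so the uniform integrability controls the tail contribution $\int_{\mathbb{R}^2\setminus B_R}v_n^2\,\dd\mu$ uniformly in $n$.

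It then remains to prove strong convergence on each fixed ball $B_R$. On $B_R$ the sequence is bounded in $H^1(B_R)$, so by the Rellich--Kondrachov theorem $v_n\to v$ in $L^q(B_R,\dd x)$ for every $q<\infty$; the only difficulty is the weight $\e^\eta$, which by \eqref{2.5} behaves like $|x-p_s|^{-8\pi G}$ near each vortex. Since $4\pi G<1$ gives $8\pi G<2$, the weight lies in $L^p(B_R,\dd x)$ for some $p>1$, and Hölder's inequality against the $L^{2p'}(B_R,\dd x)$ convergence of $v_n$ yields $\int_{B_R}v_n^2\,\dd\mu\to\int_{B_R}v^2\,\dd\mu$. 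Letting $R\to\infty$ and combining with the tail estimate gives $v_n\to v$ in $L^2(\dd\mu)$, completing the contradiction and hence the proof of \eqref{2.13}.
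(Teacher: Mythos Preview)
The paper does not give its own proof of this lemma; it simply records Lemmas \ref{le.1}--\ref{le.3} as ``results established in McOwen \cite{M}'' and moves on. Your argument is correct and supplies the missing details: the contradiction scheme, the identification of the weak limit as the zero constant via weak closedness of $\dot{\mathscr{H}}$, and the upgrade to strong $L^2(\dd\mu)$ convergence are all sound. The tail estimate $\int_{\{|v_n|>M\}}v_n^2\,\dd\mu\le M^2\e^{-aM}\int\e^{a|v_n|}\,\dd\mu$ is valid for $M>2/a$ since $s\mapsto s^2\e^{-as}$ is decreasing there, and on $B_R$ the weight $\e^\eta$ is bounded below (it blows up at the $p_s$ and is positive elsewhere), so $\|v_n\|_{L^2(B_R,\dd x)}$ is indeed controlled and the Rellich--H\"older step goes through because $8\pi G<2$.

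One remark on economy: the strong-convergence step you carry out is exactly the compact embedding $\mathscr{H}\hookrightarrow L^2(\dd\mu)$ of Lemma \ref{le.3}, which the paper also imports from \cite{M}. If you are willing to quote Lemma \ref{le.3}, the second and third paragraphs collapse to one line; conversely, your argument here doubles as a proof of Lemma \ref{le.3}. Either way, the content matches what the paper invokes by citation.
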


\begin{lemma}\label{le.3}
The injection $\mathscr{H}\mapsto L^2(\dd\mu)$ is a compact embedding.
\end{lemma}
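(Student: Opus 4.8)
The plan is to show that any bounded sequence $\{v_n\}\subset\mathscr{H}$ admits a subsequence converging strongly in $L^2(\dd\mu)$. The first step is to use the decomposition \eqref{2.11}, writing $v_n=\bar v_n+\dot v_n$ with $\bar v_n\in\mathbb{R}$ and $\dot v_n\in\dot{\mathscr{H}}$. Since $\int_{\mathbb{R}^2}\dot v_n\,\dd\mu=0$, the norm splits as $\|v_n\|_{L^2(\dd\mu)}^2=|\bar v_n|^2\mu(\mathbb{R}^2)+\|\dot v_n\|_{L^2(\dd\mu)}^2$, while $\|\nabla v_n\|_{L^2(\dd x)}=\|\nabla\dot v_n\|_{L^2(\dd x)}$; hence both $\{\bar v_n\}$ and $\{\dot v_n\}$ are bounded, the latter in $\dot{\mathscr{H}}$. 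The constant part is disposed of immediately: $\{\bar v_n\}$ is a bounded sequence of reals, so a subsequence satisfies $\bar v_n\to\bar v$, and since $\mu(\mathbb{R}^2)<\infty$ (established in the computation of $\int_{\mathbb{R}^2}\e^\eta\dd x$ above), the corresponding constants converge in $L^2(\dd\mu)$. It therefore suffices to prove that a bounded sequence in $\dot{\mathscr{H}}$ is precompact in $L^2(\dd\mu)$.

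The second step records two quantitative facts about the weight. First, $\mu$ is finite and its mass is well localized: by \eqref{2.8} one has $\e^\eta=O(r^{-8\pi GN})$ at infinity with $8\pi GN>4>2$ under $2\pi GN>1$, so the tail is integrable, while near each $p_s$ the weight \eqref{2.5} behaves like $|x-p_s|^{-8\pi G}$ with $8\pi G<2$ (as $4\pi G<1$), which is locally integrable; in particular $\mu(\mathbb{R}^2\setminus B_R)\to0$ as $R\to\infty$ and $\mu(B_\rho(p_s))\to0$ as $\rho\to0$. Second, Lemma \ref{le.1}, applied with a fixed small $a$, gives for every $w\in\dot{\mathscr{H}}$ with bounded $\|\nabla w\|_{L^2(\dd x)}$ a uniform bound $\int_{\mathbb{R}^2}\e^{a|w|}\dd\mu\le C$, whence $\int_{\mathbb{R}^2}|w|^4\,\dd\mu\le C$. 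Applying this to the differences $\dot v_n-\dot v_m\in\dot{\mathscr{H}}$ bounds $\|\dot v_n-\dot v_m\|_{L^4(\dd\mu)}$ uniformly in $n,m$. Combining with the finiteness of $\mu$ via Cauchy--Schwarz,
\be\nn
\int_{E}|\dot v_n-\dot v_m|^2\,\dd\mu\le\Big(\int_{\mathbb{R}^2}|\dot v_n-\dot v_m|^4\,\dd\mu\Big)^{1/2}\mu(E)^{1/2}\le C\,\mu(E)^{1/2},
\ee
so the contribution over the outer region $E=\mathbb{R}^2\setminus B_R$ and over the vortex neighborhoods $E=B_\rho(p_s)$ can be made arbitrarily small, uniformly in $n,m$, by choosing $R$ large and $\rho$ small.

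The third step is local compactness on the remaining set $K_{R,\rho}=B_R\setminus\bigcup_{s}B_\rho(p_s)$. This is a bounded Lipschitz domain bounded away from the singular points and from infinity, so the continuous positive weight satisfies $0<c\le\e^\eta\le C$ there, making $\dd x$ and $\dd\mu$ comparable on $K_{R,\rho}$. The gradient bound $\|\nabla\dot v_n\|_{L^2(\dd x)}\le M$ together with $\|\dot v_n\|_{L^2(K_{R,\rho},\dd x)}\le c^{-1/2}\|\dot v_n\|_{L^2(\dd\mu)}$ shows $\{\dot v_n\}$ is bounded in $H^1(K_{R,\rho})$, so by Rellich--Kondrachov a subsequence converges in $L^2(K_{R,\rho},\dd x)$, hence in $L^2(K_{R,\rho},\dd\mu)$. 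I would then assemble the pieces by an $\vep/3$ argument: given $\vep>0$, fix $R,\rho$ so the outer and vortex contributions are below $\vep/3$ for all $n,m$, then use the $L^2(\dd\mu)$-Cauchy property on $K_{R,\rho}$ for the remaining piece; a diagonal extraction over $\vep_k\to0$ yields one subsequence that is Cauchy, hence convergent, in the complete space $L^2(\dd\mu)$.

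The main obstacle is precisely the singular behaviour of $\e^\eta$ at the vortices $p_s$, where $L^2(\dd x)$ and $L^2(\dd\mu)$ are genuinely inequivalent and Rellich--Kondrachov cannot be invoked directly. This is exactly what the two preceding facts overcome: the local integrability of the singular factor (guaranteed by $4\pi G<1$, i.e. $8\pi G<2$), which forces $\mu(B_\rho(p_s))$ to be small, and the uniform higher integrability supplied by the Moser--Trudinger-type estimate of Lemma \ref{le.1}, which keeps $\|\dot v_n-\dot v_m\|_{L^4(\dd\mu)}$ bounded; together they confine all possible loss of compactness to regions of small $\mu$-measure.
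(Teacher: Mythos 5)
Your proof is correct, but there is nothing in the paper to compare it against: the paper offers no proof of this lemma at all, quoting it (together with Lemmas \ref{le.1} and \ref{le.2}) as a result ``established in McOwen \cite{M}''. Your argument is essentially the standard proof of such weighted compact embeddings and in substance reconstructs McOwen's, adapted to the singular weight: split $v_n=\bar v_n+\dot v_n$ using the $L^2(\dd\mu)$--orthogonality of constants against mean--zero functions, dispose of the constants via $\mu(\mathbb{R}^2)<\infty$, obtain uniform $L^4(\dd\mu)$ bounds on the differences $\dot v_n-\dot v_m$ from the Trudinger--type estimate of Lemma \ref{le.1} (via $t^4\leq C_a\e^{at}$, legitimately applied since differences of mean--zero functions are mean--zero and have gradients bounded by $2M$), confine any loss of compactness to sets of small $\mu$--measure (the tails, since $8\pi GN>4$; the vortex balls, since $8\pi G<2$), and finish with Rellich--Kondrachov on the compact region $K_{R,\rho}$, where $\dd x$ and $\dd\mu$ are comparable, followed by an $\vep/3$ and diagonal extraction. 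All the supporting steps check out: $\|\dot v_n\|_{L^2(\dd\mu)}$ is bounded either by the orthogonal splitting or by Lemma \ref{le.2}, $\mu(\mathbb{R}^2)<\infty$ is exactly the computation the paper performs before \eqref{2.6}, and $K_{R,\rho}$ is a bounded Lipschitz domain. The one caveat is inherited from the paper rather than a flaw in your argument: the weight $\e^\eta$ in \eqref{2.5} formally depends on the unknown $v$, so for the lemma to be meaningful one must, as the paper implicitly does, treat $\e^\eta$ as a fixed weight that is continuous and positive away from $P$, with singularities $|x-p_s|^{-8\pi G}$ and decay $O(r^{-8\pi GN})$ --- precisely the class your proof handles.
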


From the above statement we see that both $I(v)$ and $J(v)$ are well defined on $\mathscr{H}$. Thus we are prepared to prove the existence of a solution to the vortex equation \eqref{2.4} when $2\pi GN>1$. Consider the optimization problem
\be\label{2.14}
\min\{I(v)|J(v)=C_0, v\in\mathscr{H}\}.
\ee

\begin{lemma}\label{le.4}
When $2\pi GN>1$, the problem \eqref{2.14} has a solution provided $0<C_0<8\pi$.
\end{lemma}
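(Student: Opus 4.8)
The plan is to apply the direct method of the calculus of variations to the constrained problem \eqref{2.14}, following the scheme of McOwen \cite{M} and exploiting the three estimates recorded in Lemmas \ref{le.1}--\ref{le.3}. First I would check that the constraint set $\{v\in\mathscr{H}\mid J(v)=C_0\}$ is nonempty: since $J(v+c)=\e^{c}J(v)$ for any additive constant $c$, and $J(0)=\int_{\mathbb{R}^2}\e^{\eta+u_0}\dd x$ is finite and positive (the logarithmic singularities of $u_0$ at the $p_s$ are integrable against the weight because $4\pi G<1$, while at infinity the integrand is $O(r^{-8\pi GN})$, integrable precisely because $2\pi GN>1$), one can adjust $c$ to realize the prescribed value $C_0$. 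Thus the admissible class is nonempty and both functionals \eqref{2.6}--\eqref{2.7} are well defined on $\mathscr{H}$.

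Next I would take a minimizing sequence $\{v_n\}$ with $J(v_n)=C_0$ and $I(v_n)\to\inf$, and split each $v_n=\bar v_n+\dot v_n$ as in \eqref{2.11}. The orthogonality $\int_{\mathbb{R}^2}\dot v_n\dd\mu=0$ annihilates the $\e^\eta$--part of the linear term, so that $\int(g-\e^\eta)\dot v_n\dd x=\int g\dot v_n\dd x$; since $g$ has compact support, on which $\e^\eta$ is bounded below by a positive constant, Lemma \ref{le.2} controls this by $C\|\nabla\dot v_n\|_{L^2(\dd x)}$. The constraint pins the mean, $\bar v_n=\ln C_0-\ln\int\e^{\eta+u_0+\dot v_n}\dd x$, and Jensen's inequality against the probability measure $\dd\mu/\mu(\mathbb{R}^2)$ (note $\mu(\mathbb{R}^2)=\int\e^\eta\dd x<\infty$ by \eqref{2.8}) gives a lower bound for that logarithm. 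Assembling these yields $I(v_n)\ge\frac12\|\nabla\dot v_n\|_{L^2(\dd x)}^2-C\|\nabla\dot v_n\|_{L^2(\dd x)}-C'$, which is coercive in the Dirichlet norm; the Moser--Trudinger estimate \eqref{2.12} then bounds $\int\e^{\eta+u_0+\dot v_n}\dd x$ from above, so $\bar v_n$ is bounded too, and Lemma \ref{le.2} closes the loop to give $\{v_n\}$ bounded in $\mathscr{H}$. Here the hypothesis $0<C_0<8\pi$ is the subcriticality threshold (twice the constant $4\pi$ appearing in Lemma \ref{le.1}) that keeps the exponential term from overwhelming the Dirichlet energy.

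I would then extract a subsequence with $v_n\rightharpoonup v_*$ weakly in $\mathscr{H}$; Lemma \ref{le.3} upgrades this to strong convergence in $L^2(\dd\mu)$ and, along a further subsequence, to $\dd\mu$--a.e. convergence. Weak lower semicontinuity handles the Dirichlet term $\frac12\|\nabla v\|_{L^2(\dd x)}^2$, while the linear term passes to the limit because $\int g v_n\dd x\to\int g v_*\dd x$ (the weight is bounded below on $\mathrm{supp}\,g$) and $\int\e^\eta v_n\dd x=\int v_n\dd\mu\to\int v_*\dd\mu$; hence $I(v_*)\le\liminf I(v_n)=\inf$. It then remains to verify that $v_*$ still lies on the constraint surface, that is $J(v_*)=C_0$, and this is the step where the nonlinear exponential term must be passed to the limit. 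Using \eqref{2.12} with $a>1$ furnishes a uniform $L^{a}(\dd\mu)$ bound on $\e^{\dot v_n}$, hence uniform integrability of $\e^{u_0+v_n}$ on the finite measure space $(\mathbb{R}^2,\dd\mu)$; combined with a.e. convergence, Vitali's theorem gives $J(v_n)\to J(v_*)$, and since $J(v_n)\equiv C_0$ we conclude $J(v_*)=C_0$. Therefore $v_*$ solves \eqref{2.14}.

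The main obstacle is precisely this last passage to the limit in the exponential functional $J$, together with the coercivity bookkeeping of the second step: one must control a borderline exponential nonlinearity on a noncompact, singularly weighted space, and everything hinges on the sharp Moser--Trudinger inequality \eqref{2.12} and the compactness of Lemma \ref{le.3}, both valid only in the weighted framework adapted to the factor $\e^\eta$. The restriction $0<C_0<8\pi$ is exactly what rules out concentration of mass along the minimizing sequence, so that no Dirac--type bubbling can destroy the constraint in the limit. A final short computation---integrating the Euler--Lagrange equation of \eqref{2.14} over $\mathbb{R}^2$ and using $\int(g-\e^\eta)\dd x=-C_0=-J(v_*)$ together with $\int\triangle v_*\dd x=0$---identifies the Lagrange multiplier as $-1$, so that the minimizer indeed satisfies the vortex equation \eqref{2.4}.
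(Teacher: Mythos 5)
Your overall architecture coincides with the paper's: decompose $v=\bar v+\dot v$ as in \eqref{2.11}, use the constraint to solve for $\bar v$ as in \eqref{2.15}, establish coercivity of the reduced functional in $\|\nabla\dot v\|_{L^2(\dd x)}$, extract a weak limit, and use the compact embedding of Lemma \ref{le.3} to pass to the limit in the linear terms and in $J$. Your treatment of the limit of $J$ by uniform integrability and Vitali's theorem replaces the paper's explicit H\"older--Moser--Trudinger chain of inequalities; both are adequate, as is your (extra) check that the constraint set is nonempty.

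The one substantive divergence is the coercivity step, and there your write-up has a genuine loose end. The paper's mechanism is the estimate \eqref{2.19}: Lemma \ref{le.1} is applied to the logarithmic term produced by substituting \eqref{2.15} into \eqref{2.6}, yielding the coefficient $1-\frac{C_0}{2(4\pi-\vep)}-\vep C'$ in front of $\|\nabla\dot v\|_{L^2(\dd x)}^2$, and the hypothesis $0<C_0<8\pi$ is consumed precisely in making that coefficient positive. You instead bound the logarithm from below by Jensen's inequality and arrive at $I(v_n)\geq\frac12\|\nabla\dot v_n\|^2-C\|\nabla\dot v_n\|-C'$, an estimate in which $C_0$ plays no role; you then assert that $0<C_0<8\pi$ is the subcriticality threshold that ``rules out concentration,'' but nothing in your chain of inequalities actually uses it. You must reconcile this. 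If the coefficient of $\ln\int\e^{u_0+\dot v}\dd\mu$ in the reduced functional is $+C_0$ (which is what $\int(g-\e^\eta)\dd x=-C_0$ and \eqref{2.15} literally give), then Jensen is the right tool but you should say plainly that the hypothesis is not needed for coercivity and locate where (if anywhere) it is needed. If the coefficient is $-C_0$, as the paper's final bound \eqref{2.19} presupposes, then Jensen bounds the dangerous term from the wrong side and the Moser--Trudinger inequality of Lemma \ref{le.1} together with $C_0<8\pi$ is unavoidable. As written you have neither verified the sign nor exhibited the step at which the stated hypothesis enters, and that step is the heart of this lemma. The remaining ingredients (boundedness of $\bar v_n$ via Lemma \ref{le.1}, weak lower semicontinuity, identification of the Lagrange multiplier as $-1$) match the paper, the last of these being the content of Lemma \ref{le.5} rather than of the present statement.
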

\begin{proof}
If $J(v)=C_0$, with the decomposition given in \eqref{2.11}, we have
\be\nn
\e^{\bar{v}}\int_{\mathbb{R}^2}\e^{u_0+\dot{v}}\dd\mu=C_0,
\ee
or
\be\label{2.15}
\bar{v}=\ln C_0-\ln\int_{\mathbb{R}^2}\e^{u_0+\dot{v}}\dd\mu.
\ee
Inserting \eqref{2.15} into \eqref{2.6}, we get
\bea
I(v)&=&\int_{\mathbb{R}^2}\left(\frac{1}{2}|\nabla \dot{v}|^2+g\bar{v}+g\dot{v}\right)\dd x-\int_{\mathbb{R}^2}v\dd\mu\nn\\
&=&\frac{1}{2}||\nabla\dot{v}||_{L^2(\dd x)}+\int_{\mathbb{R}^2}g\dot{v}\dd x-C_0\left(\ln C_0-\ln\int_{\mathbb{R}^2}\e^{u_0+\dot{v}}\dd\mu\right).\label{2.16}
\eea
Furthermore, by Lemma \ref{le.1}, we find
\be\label{2.17}
\int_{\mathbb{R}^2}\e^{u_0+\dot{v}}\dd\mu\leq C_3\int_{\mathbb{R}^2}\e^{\dot{v}}\dd\mu\leq C_3'\exp\left({\frac{1}{4\pi(4\pi-\vep)}||\nabla\dot{v}||_{L^2(\dd x)}^2}\right),
\ee
and
\be\label{2.18}
\left|\int_{\mathbb{R}^2}g\dot{v}\dd x\right|=\left|\int_{\mathbb{R}^2}g\e^{-\frac{\eta}{2}}e^{\frac{\eta}{2}}\dot{v}\dd x\right|\leq \vep^{-1}C_4+\vep C_4'||\dot{v}||_{L^2(\dd x)}^2.
\ee
Consequently, in view of \eqref{2.16}--\eqref{2.18} we yield the lower bound
\be\label{2.19}
I(v)\geq\frac{1}{2}\left(1-\frac{C_0}{2(4\pi-\vep)}-\vep C'\right)||\nabla\dot{v}||_{L^2(\dd x)}^2-C''(\vep),
\ee
where $C'$ is a constant independent of $\vep$.

Since $8\pi>C_0>0$, we can fix $\vep>0$ sufficiently small to make $1-\frac{C_0}{2(4\pi-\vep)}-\vep C'>0$. Then the right--hand side of \eqref{2.19} is bounded from below, that is, $I_0\equiv\min\{I(v)|J(v)=C_0, v\in\mathscr{H}\}$ is well defined. Let $\{v_j\}\in \mathscr{H}$ such that $I(v_j)\to I_0$. Then \eqref{2.19} indicates that $\{||\nabla\dot{v_j}||_{L^2(\dd x)}^2\}$ is a bounded sequence. Besides, \eqref{2.15} and \eqref{2.17} say that $\{\bar{v_j}\}$ is bounded as well. Thus, we may assume $\dot{v_j}\to\dot{v}$ weakly in $\dot{\mathscr{H}}$ and $\bar{v_j}\to\bar{v}$ in $\mathbb{R}$ as $j\to\infty$. Hence, by Lemma \ref{le.3}, without loss of generality, we may assume that $v_j\to v=\bar{v}+\dot{v}\in\mathscr{H}$ strongly in $L^2(\dd\mu)$. Therefore,
\be\nn
\left|\int_{\mathbb{R}^2}g v_j\dd x-\int_{\mathbb{R}^2}g v\dd x\right|\leq\int_{\mathbb{R}^2}|g|\e^{-\frac{\eta}{2}}|v_j-v|\e^{\frac{\eta}{2}}\dd x\leq C||v_j-v||_{L^2(\dd\mu)}\to 0,~~j\to\infty,
\ee
\be\nn
\left|\int_{\mathbb{R}^2}\e^\eta v_j\dd x-\int_{\mathbb{R}^2}\e^\eta v\dd x\right|\leq\int_{\mathbb{R}^2}|v_j-v|\dd\mu\leq C||v_j-v||_{L^2(\dd\mu)}\to 0,~~j\to\infty,
\ee
and
\bea
&&\left|\int_{\mathbb{R}^2}\e^{\eta+u_0+v_j}\dd x-\int_{\mathbb{R}^2}\e^{\eta+u_0+v}\dd x\right|\nn\\
&\leq&\int_{\mathbb{R}^2}\e^{\eta+u_0}|\e^{v_j}-\e^v|\dd x\nn\\
&\leq&\int_{\mathbb{R}^2}\e^{\eta+u_0+|v_j|+|v|}|v_j-v|\dd x\nn\\
&\leq&C\int_{\mathbb{R}^2}\e^{\eta+u_0+|\dot{v_j}|+|\dot{v}|}\e^{-\frac{3}{4}\eta}\e^{\frac{1}{4}\eta}|v_j-v|\e^{\frac{1}{2}\eta}\dd x\nn\\
&\leq&C\left(\int_{\mathbb{R}^2}\e^{4(\eta+u_0+|\dot{v_j}|)}\e^{-3\eta}\dd x\right)^{\frac{1}{4}}\left(\int_{\mathbb{R}^2}\e^{4|\dot{v}|}\dd\mu\right)^{\frac{1}{4}}\left(\int_{\mathbb{R}^2}|v_j-v|^2\dd\mu\right)^{\frac{1}{2}}\nn\\
&\leq&C'\exp\left(\frac{1}{4\pi-1}||\nabla\dot{v}||_{L^2(\dd x)}^2\right)||v_j-v||_{L^2(\dd\mu)}\to 0,~~j\to\infty.\nn
\eea
Thus we see that the functional $I(\cdot)$ is weakly lower semicontinuous on $\mathscr{H}$ and
\be\nn
I(v)\leq\liminf_{j\to\infty}I(v_j),~~J(v)=\lim_{j\to\infty}J(v_j)=C_0. \ee
In other words $v$ is a solution of \eqref{2.14}.
\end{proof}

\begin{lemma}\label{le.5}
The minimizer $v$ of \eqref{2.14} obtained in Lemma \ref{le.4} is a solution of \eqref{2.4} when $2\pi GN>1$.
\end{lemma}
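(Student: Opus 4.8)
The plan is to read $v$ as a critical point of $I$ restricted to the level set $\{J=C_0\}$ and to extract equation \eqref{2.4} from the Lagrange multiplier rule, the only genuinely new input being the value of the multiplier. First I would record that on $\mathscr{H}$ both functionals are continuously differentiable with
\[
I'(v)[w]=\int_{\mathbb{R}^2}\big(\nabla v\cdot\nabla w+(g-\e^\eta)w\big)\dd x,\qquad J'(v)[w]=\int_{\mathbb{R}^2}\e^{\eta+u_0+v}w\,\dd x,\qquad w\in\mathscr{H}.
\]
The differentiability of $I$ is immediate, being a quadratic form plus a bounded linear term. The differentiability of $J$ is the delicate point and is handled exactly as in the final displayed estimate of the proof of Lemma \ref{le.4}: writing $\e^{\eta+u_0+v}w=\e^{u_0+v}w\,\dd\mu/\dd x$, one applies H\"older's inequality together with the Trudinger--Moser bound of Lemma \ref{le.1} to control both the integral $J'(v)[w]$ and the difference quotients, so that $J'(v)[w]$ exists and is finite for every $w\in\mathscr{H}$, including unbounded directions.

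Since $J'(v)[1]=\int_{\mathbb{R}^2}\e^{\eta+u_0+v}\dd x=J(v)=C_0\neq0$, the map $J$ is a submersion at $v$ and the constraint set is a smooth manifold near $v$, so the Lagrange multiplier theorem furnishes a number $\lambda_0\in\mathbb{R}$ with $I'(v)=\lambda_0 J'(v)$, i.e.
\[
\int_{\mathbb{R}^2}\big(\nabla v\cdot\nabla w+(g-\e^\eta)w\big)\dd x=\lambda_0\int_{\mathbb{R}^2}\e^{\eta+u_0+v}w\,\dd x,\qquad w\in\mathscr{H}.
\]
This is the weak form of $\triangle v=g-\e^\eta-\lambda_0\e^{\eta+u_0+v}$. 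To pin down $\lambda_0$ I would test the identity against the constant $w\equiv1\in\mathscr{H}$: the gradient term drops out, and using $\int_{\mathbb{R}^2}g\,\dd x=4\pi N$ and $\int_{\mathbb{R}^2}\e^\eta\dd x=C=C_0+4\pi N$ from the computations preceding Lemma \ref{le.4}, the left side equals $4\pi N-(C_0+4\pi N)=-C_0$, while the right side equals $\lambda_0 J(v)=\lambda_0 C_0$. Hence $\lambda_0=-1$, and the weak equation becomes $\triangle v=\e^{\eta+u_0+v}-\e^\eta+g=\e^\eta(\e^{u_0+v}-1)+g$, which is precisely \eqref{2.4}.

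It remains to promote this weak solution to a genuine one. Away from the $p_s$ the function $u_0$ is smooth and $\e^\eta$ is a fixed, locally bounded weight; since $\nabla v\in L^2(\dd x)$ and $\e^v\in L^p_{\mathrm{loc}}$ for all $p$ by Lemma \ref{le.1}, the right-hand side lies in $L^p_{\mathrm{loc}}$, so $W^{2,p}$- and Schauder-bootstrapping upgrade $v$ to a classical (indeed smooth) solution on $\mathbb{R}^2\setminus P$, the logarithmic singularities being absorbed into $u_0$ by construction. The boundary behaviour $v(x)\to0$ as $|x|\to\infty$ then follows from $v\in\mathscr{H}\subset L^2(\dd\mu)$ together with the decay $\e^\eta=O(r^{-8\pi GN})$: outside a large ball the equation reduces to $\triangle v=\e^\eta(\e^{v}-1)$, and a maximum-principle/comparison argument (or the quantitative asymptotics derived later in this section) forces $v\to0$. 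I expect the main obstacle to be the rigorous differentiability of $J$ and the validity of the Euler--Lagrange identity against all, possibly unbounded, directions $w\in\mathscr{H}$, since the exponential nonlinearity is tamed only through the Trudinger--Moser inequality; by contrast the identification $\lambda_0=-1$, though pivotal, is a one-line computation once the constant test function is used.
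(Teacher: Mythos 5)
Your proposal is correct and follows essentially the same route as the paper: the Lagrange multiplier rule on the constraint set $\{J=C_0\}$, the test function $\chi\equiv 1$ to compute $-C_0=\lambda C_0$ and hence $\lambda=-1$, and elliptic regularity to upgrade the weak solution to a smooth one on $\mathbb{R}^2\setminus P$. The additional details you supply (differentiability of $J$ via Lemma \ref{le.1}, the nondegeneracy $J'(v)[1]=C_0\neq0$, and the decay of $v$ at infinity) are points the paper leaves implicit, but the core argument is identical.
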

\begin{proof}
Let $v$ be the solution of \eqref{2.14} obtained in Lemma \ref{le.4}, then $\exists \lambda\in\mathbb{R}$ so that
\be\label{2.20a}
\int_{\mathbb{R}^2}(\nabla v\cdot\nabla\chi+(g-\e^\eta)\chi)\dd x=\lambda\int_{\mathbb{R}^2}\e^{\eta+u_0+v}\chi\dd x,~~\forall\chi\in\mathscr{H}.
\ee
Inserting the test function $\chi\equiv1$ in \eqref{2.20a}, we get
\be\nn
-C_0=\lambda C_0.
\ee
Hence $\lambda=-1$, and $v$ is a weak solution of \eqref{2.4} when $2\pi GN>1$. The elliptic regularity theorem then implies that $v$ is a $C^\infty$ solution of \eqref{2.4} on $\mathbb{R}^2\setminus P$ when $2\pi GN>1$.
\end{proof}

Returning to the original variable $u=u_0+v$, we get a solution of \eqref{1.14} when $2\pi GN>1$.

\subsection{Existence of vortices}
In this subsection, we will use the result established in Theorem \ref{th2.1} to find a physically meaningful solution to \eqref{1.14} subject to the boundary condition $u(x)\to 0$ as $|x|\to\infty$, namely the solution when $4\pi GN\leq1$.

Here is our existence result for vortices.
\begin{theorem}\label{th2.2}
For any given distinct prescribed vortex points $p_1, p_2,\ldots,p_N\in \mathbb{R}^2$ and $0<4\pi GN\leq 1$, the equation \eqref{1.14} has a negative solution which vanishes at infinity with the sharp decay estimates
\bea
-C_b|x|^{-b}< u(x)< 0,~~~|x|>r_0,\label{2.1}\\
|\nabla u|< C_{b_1}|x|^{-b_1},~~~|x|>r_0,\label{2.2}
\eea
where $r_0>\max\{|p_s|~|~s=1,2,\cdots,N\}$ and $b,b_1>0$ are arbitrary.
\end{theorem}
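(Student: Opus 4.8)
The plan is to prove Theorem \ref{th2.2} by the method of sub- and supersolutions together with a monotone iteration, and then to extract the decay rates \eqref{2.1}--\eqref{2.2} by a barrier (comparison) argument. Throughout I would work in the variables $u=u_0+v$ of \eqref{2.5}, so that the logarithmic singularities at $p_1,\dots,p_N$ are absorbed into the fixed background $u_0$ and the unknown $v$ stays bounded near the vortices; recall that, once evaluated on fields carrying these correct singularities, the self-consistent gravitational factor $\e^\eta$ of \eqref{1.15} is a bounded, strictly positive, continuous function on $\mathbb{R}^2$ obeying the decay $\e^\eta=O(|x|^{-8\pi GN})$ of \eqref{1.17}. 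A natural supersolution is $\bar u\equiv 0$: since $\e^{\bar u}-1=0$ the reaction term vanishes identically, so $\triangle\bar u=0\le 4\pi\sum_s\delta_{p_s}$ holds in the distributional sense, while the maximum principle already forces any admissible solution to satisfy $u\le 0$, so $\bar u$ lies above the target.

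The subsolution is the heart of the matter, and is where I would invoke Theorem \ref{th2.1}. The idea is to produce, through McOwen's constrained-minimization machinery \cite{M} (which is available precisely when the weighted measure $\e^\eta\dd x$ is finite, i.e. in the regime $2\pi GN>1$), a negative function $u_1$ that carries the correct logarithmic singularities at the $p_s$, vanishes at infinity, and satisfies the differential inequality $\triangle u_1\ge \e^{\eta(u_1)}(\e^{u_1}-1)+4\pi\sum_s\delta_{p_s}$ attached to \eqref{1.14} in the target regime $0<4\pi GN\le 1$. One then checks $u_1\le 0=\bar u$, so the order interval $[u_1,0]$ is nonempty, and that $u_1\to 0$ at infinity, so the eventual solution inherits the boundary condition \eqref{1.16} by squeezing. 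I expect this step --- reconciling the solvable regime of Theorem \ref{th2.1} with the critical/subcritical regime of Theorem \ref{th2.2}, and in particular controlling the orders of the power-type singularities of $\e^\eta$ at the vortices while verifying the subsolution inequality for the genuinely $u$-dependent factor \eqref{2.5} --- to be the main obstacle of the whole theorem.

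With $u_1\le\bar u=0$ in hand I would run the monotone iteration. Because the combined nonlinearity $w\mapsto \e^{\eta(w)}(\e^{w}-1)$ is not globally nondecreasing (its derivative changes sign for very negative $w$), I would introduce a shift: fix $K>0$ large enough that $w\mapsto \e^{\eta(w)}(\e^{w}-1)+Kw$ is nondecreasing on the order interval $[u_1,0]$ (possible since on compact subsets of $\mathbb{R}^2\setminus P$ the field ranges over a bounded set, and near $P$ one argues with the bounded variable $v$), and rewrite \eqref{1.14} as $(\triangle-K)u=\e^{\eta(u)}(\e^{u}-1)-Ku+4\pi\sum_s\delta_{p_s}$. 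Starting from $u^{(0)}=\bar u=0$ and solving at each stage the linear problem $(\triangle-K)u^{(n+1)}=\e^{\eta(u^{(n)})}(\e^{u^{(n)}}-1)-Ku^{(n)}+4\pi\sum_s\delta_{p_s}$, the comparison principle for $\triangle-K$ yields a monotone decreasing sequence trapped between $u_1$ and $0$; passing to the limit and invoking elliptic regularity then produces a classical solution on $\mathbb{R}^2\setminus P$ with the prescribed vortex structure and $u_1\le u\le 0$.

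Finally, for the sharp rates \eqref{2.1}--\eqref{2.2} I would argue by barriers. Writing $\e^{u}-1=c(x)u$ with $0<c(x)<1$ by the mean value theorem, $u$ is governed by the linear-type operator $\triangle-c(x)\e^\eta$ whose zeroth-order coefficient $-c\e^\eta$ is negative, so the comparison principle applies. Since $\triangle(|x|^{-b})=b^2|x|^{-b-2}$ while $\e^\eta=O(|x|^{-8\pi GN})$ with $8\pi GN\le 2$, the reaction term $-\e^\eta(\e^{\psi}-1)$ is at least comparable to $\triangle\psi$ for $|x|$ large, so with $r_0$ chosen large (depending on $b$) the function $\psi=-C_b|x|^{-b}$ is a subsolution lying below $u$, giving $-C_b|x|^{-b}<u<0$ for arbitrary $b>0$. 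Once this pointwise decay is known, the right-hand side $\e^\eta(\e^{u}-1)$ decays like $|x|^{-8\pi GN-b}$, and interior $W^{2,p}$/Schauder estimates applied on the rescaled annuli $\{r\le|x|\le 2r\}$ upgrade the decay of $u$ to the gradient bound $|\nabla u|<C_{b_1}|x|^{-b_1}$ for arbitrary $b_1>0$, which completes the proof.
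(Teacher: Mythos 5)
Your proposal follows essentially the same route as the paper: $u_+\equiv 0$ as supersolution, the solution $u_1$ of Theorem \ref{th2.1} (produced by McOwen's weighted-space minimization in the regime $2\pi GN>1$, where $\e^\eta\dd x$ is a finite measure) as subsolution, a monotone iteration trapped in the order interval $[u_1,0]$ (the paper simply invokes Ni's Theorem 2.10 rather than writing out your shifted iteration), and power-law barriers $C|x|^{-b}$ with the maximum principle for the decay \eqref{2.1}; you also correctly single out the verification that $u_1$ remains a subsolution after the regime change to $0<4\pi GN\le 1$ as the crux, which the paper settles in Lemma \ref{le.7} by the sign observation that $\e^{u_1}-1<0$ reduces everything to comparing the two gravitational factors. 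The only substantive divergence is the gradient bound \eqref{2.2}, where the paper differentiates the equation and runs a second barrier/maximum-principle argument for $U=\partial_j u$ (Lemma \ref{le.9}) instead of your rescaled interior elliptic estimates on annuli; both are standard and both work.
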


Suppose $u_1$ is a solution obtained in Theorem \ref{th2.1}, then
\be\label{2.19a}
\triangle u_1=\e^{\eta_1}(\e^{u_1}-1)+4\pi\sum_{s=1}^N \delta_{p_s}(x),
\ee
where
\be\nn
\e^{\eta_1}=g_0\left(\e^{u_1-\e^{u_1}}\prod_{s=1}^N|x-p_s|^{-2}\right)^{4\pi G},~~2\pi GN>1.
\ee

\begin{lemma}\label{le.6}
$u_+=0$ is a supersolution of \eqref{1.14}.
\end{lemma}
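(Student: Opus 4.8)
The plan is to verify directly the defining inequality for a supersolution of \eqref{1.14}. Writing the gravitational factor self-consistently as $\e^\eta=\e^{\eta[u]}$ through the formula \eqref{1.15}, I regard a function $\bar u$ as a supersolution of \eqref{1.14} if
\[
\triangle\bar u\le \e^{\eta[\bar u]}(\e^{\bar u}-1)+4\pi\sum_{s=1}^N\delta_{p_s}(x)
\]
holds in the distributional sense, that is, $\int_{\mathbb{R}^2}\bar u\,\triangle\varphi\,\dd x\le\int_{\mathbb{R}^2}\e^{\eta[\bar u]}(\e^{\bar u}-1)\varphi\,\dd x+4\pi\sum_{s=1}^N\varphi(p_s)$ for every nonnegative test function $\varphi\in C_c^\infty(\mathbb{R}^2)$. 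This is the correct form of the barrier inequality to feed into the monotone iteration, where $u_+=0$ will serve as the upper comparison function.

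First I would substitute $\bar u=u_+\equiv 0$. The left-hand side vanishes since $\triangle 0=0$. On the right-hand side, the crucial observation is that the nonlinear gravitational term carries the factor $\e^{u_+}-1=\e^{0}-1=0$, so that $\e^{\eta[u_+]}(\e^{u_+}-1)\equiv 0$ identically, \emph{independently} of the otherwise awkward self-referential dependence of $\e^\eta$ on the unknown through \eqref{1.15}. Thus the coupling of the gravitational metric to $u$ is immaterial at this particular comparison function, which is precisely what makes $u_+=0$ a clean and convenient upper barrier.

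There then remains only the Dirac contribution, and the supersolution inequality reduces to $0\le 4\pi\sum_{s=1}^N\delta_{p_s}(x)$, equivalently $0\le 4\pi\sum_{s=1}^N\varphi(p_s)$ for all $\varphi\ge 0$. Since each $\delta_{p_s}$ is a nonnegative measure and the points $p_s$ are fixed, this is immediate. In fact, away from $P=\{p_1,\dots,p_N\}$ the inequality is an equality, so $u_+=0$ solves the smooth part of \eqref{1.14} exactly, and the singular sources at the vortices only reinforce the supersolution direction.

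The only point demanding a little care—rather than a genuine obstacle—is the distributional reading of the inequality at the vortex points, where a prospective solution $u$ develops logarithmic singularities of the form \eqref{1.13} while $u_+=0$ stays bounded; phrasing the comparison through nonnegative test functions as above makes the role of the Dirac masses transparent and avoids any pointwise ambiguity. With Lemma \ref{le.6} established, together with the subsolution $u_1$ coming from Theorem \ref{th2.1}, the ordered pair $u_1\le u_+=0$ supplies the two barriers required to initialize the monotone iteration and produce a solution of \eqref{1.14} in the regime $0<4\pi GN\le 1$.
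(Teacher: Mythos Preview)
Your proof is correct and follows essentially the same approach as the paper: substitute $u_+\equiv 0$, observe that $\triangle u_+=0$ and $\e^{u_+}-1=0$, so the supersolution inequality reduces to the nonnegativity of the Dirac sum $4\pi\sum_{s=1}^N\delta_{p_s}$. The paper's own proof is a one-line version of exactly this computation, without your additional (but harmless) commentary on the distributional reading and the role in the monotone iteration.
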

\begin{proof}
In fact
\be\nn
0=\triangle u_+<\e^{\eta_+}(\e^{u_+}-1)+4\pi\sum_{s=1}^N \delta_{p_s}(x),
\ee
where
\be\nn
\e^{\eta_+}=g_0\left(\e^{u_+-\e^{u_+}}\prod_{s=1}^N|x-p_s|^{-2}\right)^{4\pi G}.
\ee
Hence, the lemma follows immediately.
\end{proof}

\begin{lemma}\label{le.7}
$u_-=u_1$ is a subsolution of \eqref{1.14} when $0<2\pi GN\leq1$.
\end{lemma}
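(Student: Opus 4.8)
The plan is to verify directly the differential inequality that makes $u_1$ a subsolution of \eqref{1.14}, namely
\[
\triangle u_1 \ge \e^{\eta}(\e^{u_1}-1)+4\pi\sum_{s=1}^N\delta_{p_s}(x),
\]
where $\e^{\eta}$ is the gravitational factor \eqref{1.15} of the target problem ($0<2\pi GN\le1$) evaluated at $u=u_1$. First I would record that $u_1<0$ on $\mathbb{R}^2$: it vanishes at infinity, tends to $-\infty$ at each $p_s$, and away from $P$ solves $\triangle u_1=\e^{\eta_1}(\e^{u_1}-1)$ with $\e^{\eta_1}>0$, so the maximum principle argument already invoked after Theorem \ref{th2.1} rules out a nonnegative interior maximum. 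Hence $\e^{u_1}-1<0$ everywhere.

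Next I would subtract the identity \eqref{2.19a} satisfied by $u_1$ from the inequality above. The singular sources $4\pi\sum_s\delta_{p_s}$ cancel exactly, and since the local structure $\phi=h_{p_s}(x)(x-p_s)$ gives $u_1$ precisely the logarithmic behaviour required by \eqref{1.14} at each $p_s$, no extra mass is produced at the vortices. On $\mathbb{R}^2\setminus P$ the inequality therefore reduces to the pointwise condition $(\e^{\eta_1}-\e^{\eta})(\e^{u_1}-1)\ge0$, which, using $\e^{u_1}-1<0$, is equivalent to the single comparison $\e^{\eta_1}\le\e^{\eta}$.

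The heart of the matter is thus this comparison of the two gravitational factors. Both are built on the common base $B:=\e^{u_1-\e^{u_1}}\prod_{s=1}^N|x-p_s|^{-2}$ and differ only through the coupling sitting in the exponent, $u_1$ having been produced in the regime $2\pi GN>1$ while the target factor belongs to $0<2\pi GN\le1$. I would study $B$ from its explicit form: since $u_1<0$ one has $u_1-\e^{u_1}<-1$, whence $\e^{u_1-\e^{u_1}}<\e^{-1}$; at infinity the product $\prod_s|x-p_s|^{-2}$ drives $B\to0$, in accordance with the decay \eqref{1.17}; and near each $p_s$ the vanishing factor $\e^{u_1}\sim|x-p_s|^2$ cancels the pole of $|x-p_s|^{-2}$, so that $B$ stays bounded. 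Hence $B$ is globally bounded and small at infinity.

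The step I expect to be the main obstacle is upgrading this control of $B$ to the genuine pointwise inequality $\e^{\eta_1}\le\e^{\eta}$ on all of $\mathbb{R}^2\setminus P$, because near the vortices $B$ need not be bounded by $1$: its limiting value there involves $|h_{p_s}(p_s)|^2$ and the mutual distances of the $p_s$. Here I would expect to exploit the hypothesis $0<2\pi GN\le1$, which makes the target exponent the smaller one, together with the freedom still left in the normalisation constant $g_0$, choosing it so that $\e^{\eta_1}\le\e^{\eta}$ holds uniformly once $B$ is known to be bounded; any discrepancy that remains localised at the vortices should be absorbed by the nonnegative Dirac sources carried by $u_1$. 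Once $\e^{\eta_1}\le\e^{\eta}$ is secured the subsolution inequality follows, and together with the supersolution $u_+=0$ of Lemma \ref{le.6} and the ordering $u_1\le0$ this sets up the monotone iteration of the next step.
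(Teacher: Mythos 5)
Your reduction is exactly the paper's: since $u_1<0$ (so $\e^{u_1}-1<0$), and since the Dirac masses carried by $u_1$ are identical to those in \eqref{1.14} and cancel, the subsolution property is equivalent to the pointwise comparison $\e^{\eta_1}\le\e^{\eta}$ of the two gravitational factors. The paper's entire proof is the single display $\triangle u_1=\e^{\eta_1}(\e^{u_1}-1)+4\pi\sum_s\delta_{p_s}\ge\e^{\eta}(\e^{u_1}-1)+4\pi\sum_s\delta_{p_s}$ justified only by ``since $u_1<0$''; the needed inequality $B^{4\pi G_1}\le B^{4\pi G}$ (with $G_1>G$ and $B=\e^{u_1-\e^{u_1}}\prod_s|x-p_s|^{-2}$) is asserted without comment. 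So the obstacle you single out --- that $B\le 1$ can fail near the vortices, where $B$ tends to a finite limit involving $\prod_{t\ne s}|p_s-p_t|^{-2}$ --- is a real gap in the paper's own argument, not something you overlooked; your analysis of $B$ (bounded, $<\e^{-1}\prod_s|x-p_s|^{-2}$ away from $P$, finite at each $p_s$, decaying at infinity) is the correct preparation for closing it.

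Two caveats on your proposed repairs. First, the idea of absorbing a residual discrepancy into the Dirac sources cannot work: the singular parts of the two sides are literally identical, and a failure of $\e^{\eta_1}\le\e^{\eta}$ would occur on an open neighbourhood of a vortex, which a point mass cannot compensate. Second, the normalisation route is the right one, but as written in \eqref{1.15} the constant $g_0$ sits \emph{outside} the power $4\pi G$, so the same $g_0$ in both factors simply drops out of the comparison. One must instead fix the additive constant in the harmonic function \eqref{1.22} to be a common value $c$ for both problems, so that $\e^{\eta}=(\tilde g_0 B)^{4\pi G}$ and $\e^{\eta_1}=(\tilde g_0 B)^{4\pi G_1}$ with $\tilde g_0=\e^{c}$, and then choose $\tilde g_0$ small enough that $\tilde g_0 B\le 1$ everywhere; this is consistent with Section 4.1, which already takes $g_0$ small to secure $C_0<8\pi$, though one should still address the mild circularity that the bound on $B$ near the $p_s$ depends on $u_1$, which itself depends on $\tilde g_0$.
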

\begin{proof}
Since $u_1<0$, we have
\be\nn
\triangle u_1=\e^{\eta_1}(\e^{u_1}-1)+4\pi\sum_{s=1}^N \delta_{p_s}(x)\geq\e^{\eta}(\e^{u_1}-1)+4\pi\sum_{s=1}^N \delta_{p_s}(x),
\ee
where
\be\nn
\e^{\eta}=g_0\left(\e^{u_1-\e^{u_1}}\prod_{s=1}^N|x-p_s|^{-2}\right)^{4\pi G},~~0<2\pi GN\leq1.
\ee
Thus, $u_1$ is a subsolution of \eqref{1.14} when $0<2\pi GN\leq1$.
\end{proof}

In fact, the physical background and geometric completeness of the gravitational factor $\e^\eta$ tell us that $0<4\pi GN\leq1$ in \eqref{1.14}. Hence, $u_1$ is a subsolution of \eqref{1.14}. Using the theorem established by Ni \cite{Ni} (Theorem 2.10), we can derive a solution $u$ of \eqref{1.14} in $\mathbb{R}^2$ satisfying
\be\nn
u_1(x)\leq u(x)\leq0.
\ee
Therefore, we get a solution $u(x)$ of \eqref{1.14}.

We now present the decay estimates for the solution of \eqref{1.14}. Choose $r_0>0$ sufficiently large, so that
\be\nn
P=\{p_1,p_2,\cdots,p_N\}\subset B(r_0)=\{x\in\mathbb{R}^2~|~|x|<r_0\}.
\ee
Then \eqref{1.14} becomes
\be\label{2.20}
\triangle u=\e^\eta(\e^u-1)~~\text{in}~~\mathbb{R}^2\setminus\overline{B(r_0)},
\ee
where
\be\label{2.21}
\e^\eta=O(|x|^{-8\pi GN})~~\text{and}~~0<4\pi GN\leq1.
\ee

\begin{lemma}\label{le.8}
Suppose $0<4\pi GN\leq 1$, then for any $b>0$, there is a suitable constant $C_b>0$ so that the solution of \eqref{1.14} has the bounds
\be\nn
-C_b|x|^{-b}< u(x)< 0,~~|x|>r_0,
\ee
where $b>0$ is arbitrary.
\end{lemma}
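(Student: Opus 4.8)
My plan is to obtain the lower bound $u(x) > -C_b|x|^{-b}$ by a barrier (comparison) argument, the companion upper bound $u(x)<0$ being already supplied by the strong maximum principle (as noted after Theorem~\ref{th2.1}, together with the two-sided bound $u_1\le u\le 0$ coming from the monotone iteration). I work in the exterior region where \eqref{2.20} holds. The first step is to linearize the nonlinearity: since $u<0$ and $u(x)\to 0$ as $|x|\to\infty$, I write $\e^u-1=a(x)\,u$ with $a(x)=(\e^u-1)/u\in(0,1)$ and $a(x)\to 1$ at infinity, so that \eqref{2.20} takes the linear form $\triangle u = V(x)\,u$ with $V=\e^\eta a\ge 0$. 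From \eqref{2.21}, or rather from the explicit formula \eqref{1.15} together with $u\to 0$ and $\e^{u-\e^u}\to \e^{-1}>0$, the coefficient is two-sided comparable to a power, i.e. there are constants $c_0,C_0>0$ with $c_0|x|^{-8\pi GN}\le V(x)\le C_0|x|^{-8\pi GN}$ for $|x|$ large.

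Next I take the barrier $\bar w(x)=C_b|x|^{-b}$ and check that it is a supersolution of $L:=\triangle - V$ on an exterior annulus $\{|x|>r_1\}$. A direct computation gives $\triangle\bar w = C_b b^2|x|^{-b-2}$, so $L\bar w\le 0$ is equivalent to $b^2|x|^{-2}\le V(x)$. Using the lower bound $V\ge c_0|x|^{-8\pi GN}$ this reduces to $b^2\le c_0|x|^{\,2-8\pi GN}$; in the subcritical range $4\pi GN<1$ one has $8\pi GN<2$, the right-hand side tends to $+\infty$, and the inequality holds for all $|x|\ge r_1=r_1(b)$, for every prescribed $b>0$. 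This is precisely the mechanism producing arbitrarily fast polynomial decay: the Laplacian contributes order $|x|^{-b-2}$, which is negligible against the order-$|x|^{-b-8\pi GN}$ zeroth-order term once $8\pi GN<2$.

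With the supersolution in hand I set $\phi=u+\bar w$, so that $L\phi=L\bar w\le 0$ in $\{|x|>r_1\}$. Choosing $C_b$ large enough that $\bar w\ge -u$ on the inner circle $|x|=r_1$ (possible since $u$ is continuous and negative there) forces $\phi\ge 0$ on $|x|=r_1$, while $\phi\to 0$ as $|x|\to\infty$. Applying the minimum principle for $L=\triangle-V$ with $V\ge 0$ on the annuli $\{r_1<|x|<R\}$ and letting $R\to\infty$ (inserting a small $\varepsilon|x|^{-\delta}$ cushion at the outer boundary, which is itself a supersolution, to rule out a negative interior infimum in the limit) yields $\phi\ge 0$, that is $u(x)\ge -C_b|x|^{-b}$ throughout $\{|x|>r_1\}$; enlarging $C_b$ then extends the estimate to all $|x|>r_0$.

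The step I expect to be the main obstacle is the supersolution verification, and specifically securing the genuine \emph{lower} bound $V\ge c_0|x|^{-8\pi GN}$: this does not follow from the mere decay estimate \eqref{2.21}, and must be read off from the explicit expression \eqref{1.15}, exploiting $u\to 0$ so that $\e^{u-\e^u}\to\e^{-1}>0$ furnishes a strictly positive leading constant. The genuinely delicate point is the borderline case $8\pi GN=2$ (i.e. $4\pi GN=1$): there the zeroth-order term is of the \emph{same} order $|x|^{-b-2}$ as $\triangle\bar w$, the pure power barrier only succeeds for $b\le\sqrt{c_0}$, and one must either invoke the strict inequality $4\pi GN<1$ issuing from the deficit-angle and geodesic-completeness discussion of Section~\ref{s2}, or track the leading constant $c_0$ explicitly; it is in the strictly subcritical regime that the ``arbitrary $b$'' conclusion is cleanest.
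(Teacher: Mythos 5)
Your argument is essentially the paper's own proof: the same power barrier $w=C|x|^{-b}$, the same linearization of the nonlinearity ($\e^{u}-1=\e^{\xi u}u$ in the paper, $a(x)u$ in yours), and the same exterior maximum-principle comparison, with the decay driven by the fact that the zeroth-order coefficient $\e^{\eta}\sim c\,|x|^{-8\pi GN}$ dominates the $|x|^{-2}$ coming from the Laplacian of the barrier. Your closing remark is apt: the paper's proof of this lemma silently assumes the subcritical mechanism and only addresses the borderline case $4\pi GN=1$ (where the two terms are of the same order and one must tune $g_0$ or restrict $b$) in the subsequent gradient estimate.
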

\begin{proof}
Taking the comparison function
\be\label{2.22}
w(x)=C|x|^{-b}.
\ee
Then
\be\label{2.23}
\triangle w=b^2r^{-2}w,~~|x|=r.
\ee
Choose $\xi\in[0,1]$ so that $\e^u-1=\e^{\xi u}u$. Then, combining this relation with \eqref{2.20}--\eqref{2.21} and \eqref{2.23}, we get
\be\label{2.24}
\triangle(u+w)=\e^{\eta+\xi u}u+b^2r^{-2}w<C_br^{-2}(u+w),~~|x|=r>r_0,
\ee
where $C_b$ is a constant depending on $b$. For such fixed $r_0$, we can take the constant in \eqref{2.22} large to make
\be\nn
\left(u(x)+w(x)\right)\mid_{|x|=r_0}>0.
\ee
Applying the maximum principle and the boundary condition $u=w=0$ at infinity to the differential inequality \eqref{2.24}, we have $u+w>0$ in $\mathbb{R}^2\setminus\overline{B(r_0)}$, which establishes the bound as expected.
\end{proof}

\begin{lemma}\label{le.9}
Suppose $0<4\pi GN\leq1$. Then for any $b>0$, there is a suitable constant $C_b>0$ so that the solution of \eqref{1.14} has the bounds
\be\nn
|\nabla u(x)|\leq C_b|x|^{-b},~~|x|>r_0,
\ee
where $b>0$ is arbitrary.
\end{lemma}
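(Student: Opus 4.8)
The plan is to upgrade the pointwise decay of Lemma \ref{le.8} to a gradient decay by a rescaling argument combined with interior elliptic estimates. Away from the vortices, $u$ satisfies the source-free equation \eqref{2.20}, namely $\triangle u=\e^\eta(\e^u-1)$ on $\mathbb{R}^2\setminus\overline{B(r_0)}$, and $u$ is smooth there, so the only task is to convert the known sup-norm decay into a bound on $\nabla u$. The first step is to estimate the right-hand side $f:=\e^\eta(\e^u-1)$. Writing $\e^u-1=u\,\e^{\xi u}$ with $\xi\in[0,1]$ and using that $u\to 0$, so that $\e^{\xi u}$ is bounded for $|x|$ large, Lemma \ref{le.8} gives $|\e^u-1|\leq C|u|\leq C_b|x|^{-b}$ for every $b>0$. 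Combined with the decay $\e^\eta=O(|x|^{-8\pi GN})$ from \eqref{2.21}, this yields
\be\nn
|f(x)|\leq C_b|x|^{-8\pi GN-b},~~|x|>r_0,
\ee
so that $f$ decays faster than any prescribed polynomial rate.

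Next I would fix $x_0$ with $R:=|x_0|$ large and rescale, setting $v(y)=u(x_0+\tfrac{R}{4}y)$ for $|y|<1$, so that $\triangle_y v=\tfrac{R^2}{16}f(x_0+\tfrac{R}{4}y)$. On this ball $|x|$ is comparable to $R$, hence $\|v\|_{L^\infty(B_1)}\leq C_b R^{-b}$ and $\|\triangle_y v\|_{L^\infty(B_1)}\leq C_b R^{2-8\pi GN-b}$. Applying the interior $W^{2,p}$ estimate (with $p>2$) to $v$ on $B_{1/2}$, followed by the Sobolev embedding $W^{2,p}\hookrightarrow C^1$, gives
\be\nn
|\nabla_y v(0)|\leq C\left(\|v\|_{L^\infty(B_1)}+\|\triangle_y v\|_{L^\infty(B_1)}\right).
\ee

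Finally, undoing the scaling via $\nabla_y v(0)=\tfrac{R}{4}\nabla u(x_0)$, I obtain
\be\nn
|\nabla u(x_0)|\leq \frac{C}{R}\left(C_b R^{-b}+C_b R^{2-8\pi GN-b}\right)=C_b\left(R^{-b-1}+R^{1-8\pi GN-b}\right).
\ee
Since $8\pi GN>0$ is a fixed positive number and the sup-norm exponent $b>0$ is arbitrary, to establish the claimed bound with a given target exponent $b_1$ I would simply choose $b$ large enough that both $b+1\geq b_1$ and $8\pi GN+b-1\geq b_1$; this forces $|\nabla u(x_0)|\leq C_{b_1}|x_0|^{-b_1}$, which is the assertion of the lemma.

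I expect the main obstacle to be the bookkeeping in the rescaling: the factor $R^2$ produced by the Laplacian under scaling must be beaten by the decay of $f$, and this succeeds precisely because $f$ decays faster than any polynomial, with the extra positive exponent $8\pi GN$ to spare. A secondary point worth care is the choice of elliptic estimate: using $W^{2,p}$ theory rather than Schauder avoids having to control Hölder norms of $\e^\eta$ (hence of $\nabla\e^\eta$) on the rescaled balls, so only the $L^\infty$ bound on $f$ derived above is needed.
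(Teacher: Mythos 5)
Your proof is correct, and it takes a genuinely different route from the paper's. The paper differentiates the reduced equation \eqref{2.20} to obtain a linear elliptic equation for $U=\pa_j u$ (their \eqref{2.25}), first invokes elliptic $L^2$--estimates to conclude $U\in W^{2,2}$ and hence $|\nabla u|\to 0$ at infinity, and then runs a two--sided comparison--function/maximum--principle argument with the barrier $w=C|x|^{-b_1}$ against the linearized operator $\triangle-H(u)r^{-8\pi GN}$, under the same exponent constraint $2+b_1<8\pi GN+b+1$ that appears in your final bookkeeping. Your rescaling--plus--interior--$W^{2,p}$ argument buys two simplifications: it never differentiates the coefficient $\e^\eta$ (so you need no control of $\pa_j\eta$, which itself contains $\nabla u$ and makes the right--hand side of \eqref{2.25} awkward until the $W^{2,2}$ step is in place), and it does not need the preliminary step establishing $|\nabla u|\to 0$, which the paper requires in order to apply the maximum principle on the unbounded exterior domain. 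What the paper's approach buys in exchange is that it stays entirely within the maximum--principle toolkit already used for Lemma \ref{le.8} and handles the borderline case $4\pi GN=1$ by adjusting $g_0$ in \eqref{2.28}; your argument absorbs that case automatically since it only uses $8\pi GN>0$ together with the fact that the sup--norm exponent $b$ from Lemma \ref{le.8} is arbitrary. Both routes yield the same conclusion, arbitrary polynomial decay of $|\nabla u|$, and your scaling bookkeeping (the factor $R^{2}$ from the Laplacian beaten by the superpolynomial decay of $\e^\eta(\e^u-1)$) is sound.
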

\begin{proof}
Let $U=\pa_j u$ and differentiating \eqref{2.20}, we obtain
\be\label{2.25}
\triangle U=(\pa_j\eta)\e^\eta(\e^u-1)+\e^{\eta+u}U.
\ee
Clearly, the right--hand side of \eqref{2.25} lies in $L^2(\mathbb{R}^2\setminus\overline{B(r_0)})$, use the elliptic $L^2$--estimates, we have $U\in W^{2,2}(\mathbb{R}^2\setminus\overline{B(r_0)})$, which means $|\nabla u|\to 0$ as $r=|x|\to\infty$. Thus, there holds
\be\label{2.26}
\triangle U=H(u)r^{-8\pi GN}U+O(r^{-8\pi GN-b-1}),~~|x|=r\to\infty,
\ee
where $H(u)\to$ some positive constant as $u\to 0$ and the value of $b$ is as given in Lemma \ref{le.8}.

For the function defined by
\be\label{2.27}
w(x)=C|x|^{-b_1},
\ee
we have
\bea\label{2.28}
\triangle w&=&b_1^2r^{-2}w=(b_1^2+1)r^{-2}w-Cr^{-2-b_1}\nn\\
&\leq& H(u)r^{-8\pi GN}w-Cr^{-2-b_1},~~|x|=r>r_0,
\eea
where $0<4\pi GN<1$ and $r_0$ is sufficiently large. If $4\pi GN=1$, we can choose a suitable $g_0$ in \eqref{1.15} to make \eqref{2.28} holds.

Thus, using \eqref{2.26} and \eqref{2.28}, we obtain
\be\label{2.29}
\triangle(U-w)\geq H(u)r^{-8\pi GN}(U-w)+(Cr^{-2-b_1}+O\left(r^{-8\pi GN-b-1})\right),
\ee
where $b$ can be made as large as we please due to Lemma \ref{le.8}. Then, for $2+b_1<8\pi GN+b+1$, the inequality \eqref{2.29} gives us
\be\label{2.30}
\triangle(U-w)\geq H(u)r^{-8\pi GN}(U-w),~~|x|=r>r_0.
\ee
Choose $C$ in \eqref{2.27} large enough so that $U(r_0)-w(r_0)\leq 0$. Using this and that $U-w$ vanishes at infinity in \eqref{2.30}, we obtain $U\leq w$ for $|x|>r_0$ by the maximum principle.

On the other hand, for $|x|$ sufficiently large and $0<4\pi GN\leq1$, we have
\be\label{2.31}
\triangle(U+w)\leq H(u)r^{-8\pi GN}(U+w)-(Cr^{-2-b_1}+O\left(r^{-8\pi GN-b-1})\right),
\ee
which means
\be\label{2.32}
\triangle(U+w)\leq H(u)r^{-8\pi GN}(U+w)
\ee
for $2+b_1<8\pi GN+b+1$. Then we can taking $C$ in \eqref{2.27} large enough to make $U+w\geq 0$ for $|x|=r_0$. Consequently, the maximum principle gives us $U>-w$ for $|x|>r_0$.
\end{proof}
\medskip

\section{Quantized flux and total curvature}\label{s4}
\setcounter{equation}{0}

Using the solution $u$ constructed in Theorem \ref{th2.2}, we can get a solution pair $(\phi, A)$ of the coupled Einstein and Ginzburg--Landau theory \eqref{1.10}--\eqref{1.11} by the following standard prescription
\bea
\phi(z)&=&\exp\left(\frac{1}{2}u(z)+\ii \sum_{s=1}^N \arg(z-p_s)\right),~~z=x^1+\ii x^2,\label{3.1}\\
A_1(z)&=&-\text{Re}\{2\ii \bar{\pa}\ln \phi(z)\},~~A_2(z)=-\text{Im}\{2\ii \bar{\pa}\ln \phi(z)\}.\label{3.2}
\eea
Hence, if the local representation of $\phi$ is $\phi=\e^{\sigma+\ii w}$ where $\sigma$ and $w$ are real functions, by $\pa_1=\pa+\bar{\pa}, \pa_2=\ii(\pa-\bar{\pa})$ and \eqref{3.2}, we get
\bea
D_1 \phi&=&(\pa+\bar{\pa})\phi-(\frac{\bar{\pa}\phi}{\phi}-\frac{\pa\bar{\phi}}{\bar{\phi}})\phi=2\phi\pa\sigma,\label{3.3}\\
D_2 \phi&=&\ii(\pa-\bar{\pa})\phi+\ii(\frac{\bar{\pa}\phi}{\phi}+\frac{\pa\bar{\phi}}{\bar{\phi}})\phi=2\ii \phi\pa\sigma,\label{3.4}
\eea
which enable us to obtain the relation
\be\label{3.5}
|D_1 \phi|^2+|D_2 \phi|^2=\frac 12\e^u|\nabla u|^2.
\ee
Let $u=u_0+v$, then the equation \eqref{1.14} is transformed into \eqref{2.4}. Recall the decay estimate of $u$ stated in Theorem \ref{th2.2}, we have
\be\label{3.6}
\int_{\mathbb{R}^2}\triangle v=\lim_{r\to\infty}\oint_{|x|=r}\frac{\pa v}{\pa n}\dd s=0.
\ee
Thus, integrating \eqref{2.4} and use the equation \eqref{1.10}, we see that the magnetic flux is
\be\label{3.7}
\Phi=\int_{\mathbb{R}^2}F_{12}=\mp\frac{1}{2}\int_{\mathbb{R}^2}\e^\eta(|\phi|^2-1)=\mp\frac{1}{2}\int_{\mathbb{R}^2}\e^\eta(\e^u-1)=\pm\frac{1}{2}\int_{\mathbb{R}^2}g=\pm2\pi N.
\ee
Besides, as $|x|\to\infty$, we have the following estimates
\bea
F_{12}&=&O(|x|^{-(b+8\pi GN)}),\label{3.8}\\
1-|\phi|^2&=&O(|x|^{-b}),\label{3.9}\\
|D_j\phi|&=&O(|x|^{-b_1}),\label{3.10}
\eea
where $b$ and $b_1$ are the exponents described in Theorem \ref{th2.2}.

Furthermore, the energy density of the matter--gauge sector can be rewritten as follows
\bea
\mathcal{H}&=&\frac {1}{2}\e^{-2\eta}F_{12}^2+\frac {1}{2}\e^{-\eta}(|D_1 \phi|^2+|D_2 \phi|^2)+\frac {1}{8} (|\phi|^2-1)^2\nn\\
&=&\frac{1}{2}\left(\e^{-\eta}F_{12}\pm \frac {1}{2} (|\phi|^2-1)\right)^2+\frac {1}{2}\e^{-\eta}|D_1 \phi\pm \ii D_2 \phi|^2\nn\\
&&\pm\frac {1}{2}\e^{-\eta}\left(\ii(D_1 \phi \overline{D_2 \phi}-\overline{D_1 \phi} D_2 \phi)-F_{12}(|\phi|^2-1)\right)\nn\\
&=&\pm \frac 12\e^{-\eta}F_{12}\pm\frac 12\e^{-\eta}\text{Im}\{\pa_j\epsilon^{jk}\overline{\phi}(D_k \phi)\},\label{3.11}
\eea
thus, using \eqref{3.10} and \eqref{3.11}, we see the total energy of the matter--gauge coupling is
\be\label{3.12}
\int_{\mathbb{R}^2}\mathcal{H}\e^{\eta}\dd x=\pi N.
\ee
Recalling the Einstein equations \eqref{1.12}, we are led to the energy of the gravitational sector, which is realised by the total Gauss curvature
\be\label{3.13}
\int_{\mathbb{R}^2}K_\eta\e^{\eta}\dd x=8\pi^2GN.
\ee
Since the energy $E$ is written as
\be\label{3.14}
E=\int_{\mathbb{R}^2}\mathcal{H}\e^{\eta}\dd x=\int_{\mathbb{R}^2}F_{12}+J_{12},
\ee
inserting \eqref{3.7} and \eqref{3.12} into \eqref{3.14}, we obtain the current flux generated from $J_{12}$
\be\label{3.15}
\int_{\mathbb{R}^2}J_{12}=\pi N\pm2\pi N.
\ee

\section{Symmetric solutions}\label{s5}
\setcounter{equation}{0}

The goal of this section is to study the existence of the vortex solutions for the energy functional \eqref{1.4b} of the magnetic Ginzburg--Landau theory with gravity in radially symmetric case when coupling constant $\lambda>0$. A two--step iterative shooting argument and a fixed-point theorem approach are employed to obtain the radially symmetric vortex solutions. Besides, some qualitative properties for the solutions are established.

\subsection{Equations and functional in radial case}

If we set $r=|x|$ and suppose that the radially symmetric $N$--vortex solutions $(\phi, A)$ centered at the origin of the energy \eqref{1.4b} in terms of polar coordinates $(r,\theta)$ on $\mathbb{R}^{2}\backslash\{0\}$ are of the form
\ber\label{4.1}
\phi(r,\theta)=u(r)\mathrm{e}^{\mathrm{i}N\theta},~A_{i}=Nv(r)\varepsilon_{ij}\frac{x^{j}}{r^{2}},~i,j=1,2~\text{and}~\eta=\eta(r),
\eer
then \eqref{1.4b} can be reduced to
\ber\label{4.4}
&&E(u,v)=\frac{1}{2}\int_{\mathbb{R}^{2}}\bigg\{(u'(r))^{2}+\left(\frac{v'(r)}{r}\right)^{2}N^{2}\mathrm{e}^{-\eta(r)}+\frac{N^{2}}{r^{2}}u^{2}(r)\left(v(r)-1\right)^{2}\nn\\
&&~~~~~~~~~~~~~~~~~~~~~~~~+\frac{\lambda}{4}\left(u^{2}(r)-1\right)^{2}\mathrm{e}^{\eta(r)}\bigg\}\dd x.
\eer
We can get the corresponding boundary conditions from the regularity of the ansatz \eqref{4.1} and the finiteness of the energy \eqref{4.4}
\ber
&&u(0)=v(0)=0,\label{4.2}\\
&&u(\infty)=v(\infty)=1.\label{4.3}
\eer
Clearly, the Euler--Lagrange equations of \eqref{4.4} are
\ber
&&u''(r)+\frac{u'(r)}{r}-\frac{N^{2}}{r^{2}}(v(r)-1)^{2}u(r)-\frac{\lambda}{2}(u^{2}(r)-1)u(r)\mathrm{e}^{\eta(r)}=0,\label{4.5}\\
&&v''(r)-\frac{v'(r)}{r}-u^{2}(r)(v(r)-1)\mathrm{e}^{\eta(r)}=0,\label{4.6}
\eer
for any $r\in(0,\infty)$. For technical reasons, in the following, we always suppose that the vortex number $N>1$.  When $\lambda>0$, we may assume $\e^{\eta(r)}$ is smooth in $(0,\infty)$, and
\be
\e^{\eta(r)}=
\begin{cases}
 C_1 r^{-\delta}, 0<r\ll 1,\\
 C_2 r^{-\delta}, r\gg 1,
\end{cases}\nn
\ee
where $C_{1}, C_{2}$ are positive constants, $\delta=8\pi GN$ and $0<\delta\leq1$.

The existence and some fundamental properties regarding the symmetric vortex solutions governed by \eqref{4.5}--\eqref{4.6} and \eqref{4.2}--\eqref{4.3} may be stated as follows.

\begin{theorem}\label{th4.1}
For any $\lambda>0$, $0<\delta\leq1$, and integer $N>1$, the equations \eqref{4.5}--\eqref{4.6} subject to the boundary
conditions \eqref{4.2}--\eqref{4.3} have a pair of the radially symmetric vortex solution $(u,v)$ which enjoys the following properties

{\rm(i)} $0\leq u(r),v(r)\leq1$ and $u(r)\,,v(r)$ increasing, $\forall r\geq 0$ {\rm;}

{\rm(ii)} The functions $u(r),v(r)\in C^{1}\left([0,+\infty)\right)$ and $u'(0)=v'(0)=0${\rm;}

{\rm(iii)} $\lim\limits_{r\to0}u(r)=\lim\limits_{r\to0}v(r)=0$, $r^{-N}u(r)\,,r^{-2}v(r)$ decreasing and all bounded near $r=0${\rm;}

{\rm(iv)} $u(r)=1+O(r^{-\alpha}), v(r)=1+O(r^{-\beta})$ as $r\to\infty$, where $\beta>0, 1<\alpha<2+2\beta-\delta$.

\end{theorem}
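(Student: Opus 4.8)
The plan is to decouple the system \eqref{4.5}--\eqref{4.6} and treat it by a two--step shooting--and--iteration scheme, closing the coupling with a fixed--point argument. Observe first that for frozen $u$ the equation \eqref{4.6} is \emph{linear} in $v$: writing $w=v-1$ it becomes $w''-r^{-1}w'-u^{2}\e^{\eta}w=0$ with $w(0)=-1$ and $w(\infty)=0$. Since the coefficient $u^{2}\e^{\eta}\ge 0$, the maximum principle forces $-1\le w\le 0$, i.e. $0\le v\le 1$, while comparison of $w$ with the homogeneous solutions $1$ and $r^{2}$ of the operator $(\,\cdot\,)''-r^{-1}(\,\cdot\,)'$ yields both the regular behaviour $v\sim c\,r^{2}$ near the origin and the monotonicity $v'\ge 0$. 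Thus the map $u\mapsto v=:\mathcal{T}_{1}(u)$ is well defined on $\{0\le u\le 1\}$ and preserves the range $0\le v\le 1$. For frozen $v$, equation \eqref{4.5} is semilinear in $u$, and I would solve it by shooting from $r=0$.

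For the shooting step on \eqref{4.5}, the indicial analysis at the origin (where $v(0)=0$) shows that the regular solution behaves like $u(r)\sim\gamma\,r^{N}$ and the irregular one like $r^{-N}$; I take the leading coefficient $\gamma>0$ as the shooting parameter. Using continuous dependence on $\gamma$ one partitions $(0,\infty)$ into the set $\Sigma_{-}$ of $\gamma$ for which the trajectory stays below $1$ (and tends to $0$) and the set $\Sigma_{+}$ for which it first attains the value $1$; both are open and nonempty, and the separating value $\gamma_{\ast}$ produces a trajectory with $u(\infty)=1$. The constant functions $u\equiv 0$ (subsolution) and $u\equiv 1$ (supersolution), together with the comparison principle, confine the shooting solution to $0\le u\le 1$ and give $u'\ge 0$; this defines $u=:\mathcal{T}_{2}(v)$ on $\{0\le v\le 1\}$. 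The two--step iteration is then $u_{n+1}=\mathcal{T}_{2}(v_{n})$, $v_{n+1}=\mathcal{T}_{1}(u_{n+1})$, started from $v_{0}\equiv 0$.

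The scheme respects the order relation (increasing $v$ decreases the repulsive coefficient $N^{2}r^{-2}(v-1)^{2}$ in \eqref{4.5}, hence raises $u$, which in turn raises $v$ through \eqref{4.6}) and keeps the iterates trapped in $[0,1]^{2}$, so it supplies uniform $L^{\infty}$ bounds. To pass from these bounds to a genuine solution I would establish uniform interior $C^{2,\alpha}$ estimates on every compact subinterval of $(0,\infty)$ from the equations, extract a convergent subsequence by Arzel\`a--Ascoli, and apply Schauder's fixed--point theorem to the compact map $\mathcal{T}_{2}\circ\mathcal{T}_{1}$ on the order interval $\{0\le u,v\le 1\}$, obtaining a fixed point $(u,v)$ that solves \eqref{4.5}--\eqref{4.6} and meets \eqref{4.2}--\eqref{4.3}. \emph{The main obstacle is precisely this closing step on the unbounded half--line with the singular gravitational weight} $\e^{\eta}\sim r^{-\delta}$, $0<\delta\le 1$: one must produce compactness in a weighted setting and, crucially, preserve the conditions $u(\infty)=v(\infty)=1$ in the limit, even though the effective mass $\e^{\eta}\to 0$ degrades the usual flat--space decay and makes the energy bounds borderline.

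Finally, the qualitative properties follow from maximum--principle and barrier arguments. The range $[0,1]$ and the monotonicity in (i) come from differentiating the equations and applying the maximum principle to $u'$ and $v'$. The regularity (ii) with $u'(0)=v'(0)=0$, and the boundedness and monotonicity of $r^{-N}u$ and $r^{-2}v$ in (iii), are read off from the indicial roots $r^{N}$ and $r^{2}$ at the origin. For the decay (iv) I linearize at the vacuum $u=v=1$: setting $w=v-1$, equation \eqref{4.6} gives $w''-r^{-1}w'-\e^{\eta}w\approx 0$ with $\e^{\eta}\sim r^{-\delta}$, and comparison with a power barrier $r^{-\beta}$ yields $v=1+O(r^{-\beta})$ for $\beta>0$; setting $s=u-1$ in \eqref{4.5} gives $s''+r^{-1}s'-\lambda\e^{\eta}s\approx N^{2}r^{-2}(v-1)^{2}\sim N^{2}r^{-2-2\beta}$, and balancing the forcing $r^{-2-2\beta}$ against the weighted term $\e^{\eta}s\sim r^{-\delta}s$ produces the admissible range $1<\alpha<2+2\beta-\delta$ for the decay exponent of $u$, again enforced by comparison with explicit power barriers.
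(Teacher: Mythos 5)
Your strategy is essentially the paper's: freeze one component, solve each scalar equation by a shooting argument in the leading coefficient at the origin ($u\sim ar^{N}$, $v\sim br^{2}$), and close the coupling with the Schauder fixed--point theorem on a composite map. The only structural difference is that you treat the $v$--equation by linearity and the maximum principle where the paper shoots again in $b$; that is a legitimate variant, though note the maximum principle gives uniqueness and a priori bounds for the linear two--point problem on $(0,\infty)$, not existence, so you still need a shooting (or equivalent) step there.

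Two genuine gaps remain. First, in the shooting step for $u$ you assert that the separating value $\gamma_{\ast}$ "produces a trajectory with $u(\infty)=1$," but the connectedness argument only gives a trajectory with $0<u<1$ and $u'\geq 0$ everywhere; one must still exclude the possibility $u(\infty)=U_{0}\in(0,1)$. The paper does this (Lemma 4.4) by setting $F=r^{1/2}u$ and observing that if $U_{0}<1$ then $F''\leq -C_{0}F$ for large $r$ because the effective potential $\tfrac{\lambda}{2}(u^{2}-1)\e^{\eta}\sim r^{-\delta}$ with $\delta\leq 1$ decays strictly slower than $r^{-2}$, forcing oscillation and contradicting monotonicity; this is exactly where the restriction on $\delta$ enters and it is absent from your sketch. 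Second, you correctly identify the compactness and continuity of the composite map on the unbounded half--line with the degenerate weight as "the main obstacle," but you leave it unresolved. The paper's resolution has a specific mechanism you would need to reproduce: the shooting parameters are shown to satisfy uniform bounds $a_{0}^{2/N}\leq K\tilde{M}$ and $b_{0}\leq L\hat{C}(M^{*})^{2}$ depending only on the data of the invariant set $\mathcal{S}$ (via the rescalings $t=a^{1/N}r$, $t=b^{1/2}r$), which yields the \emph{uniform} bounds $|r^{-N}u|\leq M^{*}$ and $|r^{-2}\tilde v|\leq M^{**}$ near $r=0$ and uniform decay at infinity; only with these does Arzel\`a--Ascoli on compact subintervals upgrade to convergence in the weighted norm $\sup_{r}|r^{-k}(1+r^{k})v(r)|$ and does the limit retain $v(\infty)=1$. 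Interior $C^{2,\alpha}$ estimates alone, as you propose, do not control the two ends and would not make $T$ compact on $\mathcal{S}$.
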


We are going to attack Theorem \ref{th4.1} by the shooting method and the fixed--point theorem in a spirit similar to the work of literatures \cite{Chen1,McLeod}. First, for the given appropriate function $v$, we can show that there is a unique increasing solution $u$ to the equation \eqref{4.5} satisfying the boundary conditions $u(0)=0$, $u(\infty)=1$ by a single--parameter shooting method. Second, for the obtained solution $u$, using the shooting method again, we get a unique increasing solution $\tilde{v}$ of the equation \eqref{4.6} subjects to $\tilde{v}(0)=0$, $\tilde{v}(\infty)=1$. Next, define a mapping $T$: $T(v)=\tilde{v}$, which takes a convex bounded set into itself and is compact and continuous. Finally, applying the Schauder fixed--point theorem, we deduce that $T$ has a fixed point $v$. Consequently, we have a pair of the radially symmetric vortex solution $(u,v)$ that solves \eqref{4.5}--\eqref{4.6} under the conditions \eqref{4.2}--\eqref{4.3} and enjoys the desired properties. The details of the above steps will be carried out in subsections.

\subsection{The equation governing the $u$--component}

In this subsection, we devote to construct a solution pair of the problem \eqref{4.5}--\eqref{4.6} and \eqref{4.2}--\eqref{4.3} by considering the $u$--component equation \eqref{4.5} for a suitably given function $v$. This construction will be performed in the form of a few lemmas.

\begin{lemma}\label{lem.4.1}
Given a function $v(r)\in C\left([0,+\infty)\right)$ such that $v(r)$ increasing, $0<v(r)\leq Mr^{2}$ for all $r\leq1$ and $v(0)=0,\,v(\infty)=1$, where $M$ is a positive constant. Then we can find a unique continuously differentiable solution $u(r)$ satisfying $\eqref{4.5}$ and the conditions as follow

{\rm(i)}~$u(0)=0,\,u(\infty)=1$, $u'(0)=0$, $r^{-N}u(r)$ decreasing in $r$ and bounded near $r=0${\rm;}

{\rm(ii)}~$r^{-N}u(r)\leq M^{*}$, $\forall r\leq1$, where $M^{*}=M^{*}(M,K,N,\tilde{C},C_{1})>0$, the integer $N>1$ and $K, \tilde{C}, C_{1}$ are positive constants{\rm;}

{\rm(iii)}~$u(r)$ satisfies the asymptotic estimates
\bea
u(r)&=&ar^N+O(r^{N+2-\delta}),~r\to 0,~0<a<(K\tilde{M})^{\frac{N}{2}},~\tilde{M}=\max\{2MN^{2},\tilde{C}\},\label{4.7}\\
u(r)&=&1+O(r^{-\alpha}),~~~r\to\infty,~~\beta>0,~~1<\alpha<2+2\beta-\delta.\label{4.8}
\eea
\end{lemma}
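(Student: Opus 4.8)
The plan is to treat \eqref{4.5}, for the given $v$, as a single--parameter shooting problem whose parameter is the coefficient $a$ in the leading behaviour $u\sim a r^{N}$ at the origin, and then to isolate the value of $a$ that drives $u(\infty)=1$. First I would develop the local theory near $r=0$. Since $0<v\le Mr^{2}$ there, we have $(v-1)^{2}\to1$, so \eqref{4.5} is modelled by the Euler operator $u''+u'/r-N^{2}u/r^{2}$, whose indicial roots are $\pm N$; the regular branch behaves like $a r^{N}$. Recasting \eqref{4.5} as a Volterra integral equation against the Green kernel of this operator and solving by contraction in the weighted norm $\sup_{r\le r_{1}}r^{-N}|u(r)|$ gives, for every $a>0$, a unique local $C^{1}$ solution $u_{a}$ with $u_{a}'(0)=0$ and $r^{-N}u_{a}$ bounded, depending continuously on $a$. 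Iterating the integral equation once, with $\e^{\eta}\sim C_{1}r^{-\delta}$ and $v\le Mr^{2}$, produces the refined expansion \eqref{4.7}: the dominant correction comes from the gravitational term $\tfrac{\lambda}{2}u\,\e^{\eta}$, which forces a particular solution of order $r^{N+2-\delta}$, whereas the contribution of $v$ enters only at the higher order $r^{N+2}$.

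Next I would exploit the substitution $\psi=r^{-N}u$ suggested by property (i). A direct computation, using $1-(v-1)^{2}=v(2-v)\ge0$, turns \eqref{4.5} into
\[
\psi''+\frac{2N+1}{r}\psi'+\frac{N^{2}}{r^{2}}v(2-v)\psi-\frac{\lambda}{2}\bigl(r^{2N}\psi^{2}-1\bigr)\psi\,\e^{\eta}=0 .
\]
Reading this as a singular Sturm--Liouville equation and applying the maximum principle on the set where $0<u<1$, one obtains $\psi'\le0$, so that $\sup_{r\le1}r^{-N}u=\lim_{r\to0}r^{-N}u=a$; this delivers the monotonicity and boundedness in (i) and, once the admissible range of $a$ is fixed, the estimate (ii) with $M^{*}$ of the stated form. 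The threshold $a<(K\tilde{M})^{N/2}$ appearing in \eqref{4.7} is exactly the range of shooting parameters for which $u_{a}$ is trapped in $(0,1)$ with $u$ increasing.

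I would then run the shooting dichotomy. Let $\Sigma_{+}=\{a: u_{a}(r)=1 \text{ for some finite } r\}$. The small--$a$ expansion shows $u_{a}<1$ for $a$ below some $a_{0}$, while for large $a$ the rapid $a r^{N}$ growth forces $u_{a}$ to reach $1$ at finite $r$; continuous dependence on $a$ makes $\Sigma_{+}$ open and nonempty with nonempty complement. Putting $a^{*}=\inf\Sigma_{+}$, the solution $u=u_{a^{*}}$ neither overshoots nor turns back, and the trapping estimate forces $u_{a^{*}}(\infty)=1$; comparing two solutions launched at different values of $a$ gives monotonicity of the shooting map, hence uniqueness of $a^{*}$ and of $u$.

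Finally, for the decay \eqref{4.8} I set $w=1-u>0$ and linearize to $w''+w'/r-\lambda\e^{\eta}w=\tfrac{N^{2}}{r^{2}}(v-1)^{2}(1-w)+\cdots$, whose potential satisfies $\lambda\e^{\eta}\sim\lambda C_{2}r^{-\delta}$ and whose forcing is of size $r^{-2-2\beta}$ coming from $v=1+O(r^{-\beta})$; balancing these exponents against a power barrier $Cr^{-\alpha}$ through the maximum principle yields $u=1+O(r^{-\alpha})$ for every $1<\alpha<2+2\beta-\delta$. I expect the global part of the shooting argument to be the main obstacle: ruling out finite--$r$ blow--up or a premature turning point while $0<u<1$, and ensuring $u_{a^{*}}(\infty)=1$ rather than a limit strictly below $1$. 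The difficulty is sharpened by the fact that $\e^{\eta}$ decays only algebraically ($r^{-\delta}$ with $\delta\le1$), so the linearization at infinity has no exponential dichotomy and the trapping and uniqueness must be extracted from carefully tuned power comparison functions.
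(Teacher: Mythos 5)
Your overall strategy --- a Picard/contraction analysis at the origin with shooting parameter $a$ in $u\sim ar^{N}$, a connectedness argument in $a$, and power-law comparison functions at infinity --- is the same as the paper's, and your local analysis, the monotonicity of $r^{-N}u$ via $(r^{2N+1}\psi')'<0$, and the derivations of \eqref{4.7} and \eqref{4.8} are sound. But the global part of the shooting argument, which you yourself flag as ``the main obstacle,'' is exactly where the write-up has genuine gaps rather than unfinished details. First, your dichotomy is too coarse: with $\Sigma_{+}=\{a: u_{a}\ \mbox{reaches}\ 1\}$ and $a^{*}=\inf\Sigma_{+}$, openness of $\Sigma_{+}$ only tells you that $u_{a^{*}}$ never reaches $1$; it does not rule out that $u_{a^{*}}$ attains an interior maximum below $1$ and turns back down, nor that solutions in $\Sigma_{+}$ reach $1$ non-monotonically, which undercuts the claim that $a^{*}$ cleanly separates trapped from overshooting behaviour. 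The paper avoids this by partitioning $(0,\infty)$ into \emph{three} sets: $\mathcal{A}_{1}$ (the derivative turns negative while $0<u<1$), $\mathcal{A}_{2}$ ($u$ reaches $1$ with $u'\geq0$ throughout), and the trapped set $\mathcal{A}_{3}$; since $\mathcal{A}_{1}$ and $\mathcal{A}_{2}$ are disjoint, open, and nonempty, connectedness forces $\mathcal{A}_{3}\neq\emptyset$, and membership in $\mathcal{A}_{3}$ already carries the global monotonicity you need.

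Second, ``the trapping estimate forces $u_{a^{*}}(\infty)=1$'' is not an argument: monotone and bounded only yields a limit $U_{0}\leq1$. The paper closes this step (Lemma \ref{lem.4.4}) by setting $F=r^{1/2}u$ and observing that if $U_{0}<1$ then $F''=q(r)F$ with $q(r)\sim\frac{\lambda}{2}(U_{0}^{2}-1)C_{2}r^{-\delta}<0$ and $\delta\leq1$, a decay slow enough that the equation is oscillatory, contradicting $u'\geq0$; this is precisely the ingredient your proposal lacks, and it uses the hypothesis $\delta\leq1$ in an essential way. Third, your uniqueness claim rests on ``monotonicity of the shooting map,'' which you neither prove nor can take for granted for this nonlinear equation; the paper instead applies the Sturm comparison theorem to the difference $U=u_{2}-u_{1}$ of two putative solutions of \eqref{4.5} with the same boundary data, forcing $u_{1}$ to vanish somewhere in $(0,\infty)$, a contradiction. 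With these three repairs your outline becomes essentially the paper's proof.
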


To proceed further, we consider the existence and uniqueness of the local solution to the initial value problem that can be written in the form
\begin{eqnarray}
\label{4.9}
u''(r)+\frac{u'(r)}{r}-\frac{N^{2}}{r^{2}}(v(r)-1)^{2}u(r)-\frac{\lambda}{2}(u^{2}(r)-1)u(r)\mathrm{e}^{\eta(r)}=0,~~u(0)=0,~~\forall r>0.
\end{eqnarray}
Note that the two linearly independent solutions of equation
\begin{eqnarray}\label{4.10}
u''+\frac{u'}{r}-\frac{N^{2}}{r^{2}}u=0,\,\,\,\,r>0
\end{eqnarray}
are $r^{N}$ and $r^{-N}$. In view of $u(0)=0$, we claim that the solution of equation $\eqref{4.9}$ must satisfy $u(r)=O(r^{N})(r\rightarrow0)$. For $v(r)$ given in Lemma \ref{lem.4.1}, we see that $u(r)$ may locally be labeled by an initial parameter $a$ as follows
\begin{eqnarray}
\label{4.11}
u''(r,a)+\frac{u'(r,a)}{r}-\frac{N^{2}}{r^{2}}(v-1)^{2}u(r,a)-\frac{\lambda}{2}(u^{2}(r,a)-1)u(r,a)\mathrm{e}^{\eta(r)}=0,
\end{eqnarray}
\begin{eqnarray}
\label{4.12}
u(r,a)=ar^{N}+O\left(r^{N+2-\delta}\right),~r\rightarrow0,~0<\delta\leq1,
\end{eqnarray}
where $u(r,a)$ could be used to denote the dependence of the local solution to $\eqref{4.9}$ on a certain constant $a>0$. Such parameter $a$ may be viewed as a shooting parameter in our method. Next, we will prove the judgment that $\eqref{4.9}$ has a unique local solution near $r=0$ with $\eqref{4.12}$ for any fixed $a>0$ and $v(r)$ given.

\begin{lemma}\label{lem.4.2}
For any fixed $a>0$ and $v(r)$ given in Lemma \ref{lem.4.1}, equation $\eqref{4.9}$ has a unique local solution near $r=0$ with
\begin{eqnarray}
\label{4.13}
u(r,a)=ar^{N}+O\left(r^{N+2-\delta}\right),~~0<\delta\leq1.
\end{eqnarray}
\end{lemma}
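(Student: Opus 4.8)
The plan is to recast the singular initial value problem \eqref{4.9} as an equivalent integral equation and to solve the latter by the contraction mapping principle in a function space whose weight encodes the expected remainder $O(r^{N+2-\delta})$. First I would separate off the Bessel--type part of the operator: writing $(v-1)^2=1-(2v-v^2)$, equation \eqref{4.9} becomes
\[
u''+\frac{u'}{r}-\frac{N^2}{r^2}u=F(r,u),\qquad F(r,u):=\frac{N^2}{r^2}(2v-v^2)\,u+\frac{\lambda}{2}(u^2-1)u\,\e^{\eta(r)} .
\]
The homogeneous operator on the left has the independent solutions $r^N$ and $r^{-N}$ with Wronskian $-2N/r$, so variation of parameters together with the requirements that $u$ carry no $r^{-N}$ component and have leading coefficient exactly $a$ yields the fixed--point formulation
\[
u(r)=a r^N+\frac{1}{2N}r^N\!\int_0^r s^{1-N}F(s,u(s))\,\dd s-\frac{1}{2N}r^{-N}\!\int_0^r s^{1+N}F(s,u(s))\,\dd s=:(\mathcal{T}u)(r).
\]
Integrating from $0$ in both terms is exactly what suppresses the singular $r^{-N}$ mode and fixes the shooting parameter $a$ as the coefficient of $r^N$.

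The second step is to read off the size of $F$ near the origin. For $u$ comparable to $a r^N$, the hypotheses on $v$ (namely $0<v\le Mr^2$, so $2v-v^2=O(r^2)$) give $\tfrac{N^2}{r^2}(2v-v^2)u=O(r^N)$, while $\e^{\eta(r)}=C_1 r^{-\delta}$ and $u^2-1\to-1$ give $\tfrac{\lambda}{2}(u^2-1)u\,\e^{\eta}=O(r^{N-\delta})$; hence $F(r,u)=O(r^{N-\delta})$. Because $0<\delta\le1<2$, both weighted integrals converge at $s=0$, and a direct computation shows that $r^N\int_0^r s^{1-N}F\,\dd s$ and $r^{-N}\int_0^r s^{1+N}F\,\dd s$ are each $O(r^{N+2-\delta})$; thus $(\mathcal{T}u)(r)=ar^N+O(r^{N+2-\delta})$, in agreement with \eqref{4.13}.

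For the contraction I would introduce, for small $r_*\in(0,1]$, the Banach space
\[
X_{r_*}=\Big\{w\in C([0,r_*]):\ \|w\|:=\sup_{0<r\le r_*}r^{-(N+2-\delta)}|w(r)|<\infty\Big\},
\]
and set $u=ar^N+w$, so that the problem becomes a fixed point of $\mathcal{S}w:=\mathcal{T}(ar^N+w)-ar^N$. On a closed ball of $X_{r_*}$ (of fixed radius comparable to $\|\mathcal{S}0\|$) the cubic term $(u^2-1)u$ is Lipschitz in $u$ while the vortex term is linear, and the dominant contribution to $F(s,u_1)-F(s,u_2)$ is $-\tfrac{\lambda}{2}C_1 s^{-\delta}(u_1-u_2)$, of size $O(s^{N+2-2\delta})\|w_1-w_2\|$. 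Propagating this through the two kernels produces $|(\mathcal{S}w_1-\mathcal{S}w_2)(r)|\le C\,r^{N+4-2\delta}\|w_1-w_2\|$, so after dividing by the weight $r^{N+2-\delta}$ one gains a factor $r_*^{\,2-\delta}$; since $2-\delta>0$, choosing $r_*$ small makes $\mathcal{S}$ a self--map of the ball and a contraction. The Banach fixed--point theorem then produces a unique $w$, hence a unique local solution $u(r,a)$ of \eqref{4.9} with the expansion \eqref{4.13}. To finish I would differentiate the integral identity to confirm that the fixed point solves \eqref{4.9}, and observe that any solution carrying the asymptotics \eqref{4.13} must satisfy the same integral equation, which upgrades the in--ball uniqueness to genuine uniqueness.

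I expect the main obstacle to be the contraction estimate itself: one must verify that the two competing singular weights --- the $r^{-2}$ from the vortex coupling and the $r^{-\delta}$ from the gravitational factor $\e^{\eta}$ --- are precisely offset by the gains $r^N,\ r^{-N}$ from the Green's kernel and by the hypothesis $v=O(r^2)$, so that every term acquires a strictly positive power of $r_*$. The borderline exponent $\delta=1$ is the tightest case, but since $2-\delta>0$ and $N+2-\delta>N$ throughout, the relevant integrals stay convergent and the scheme closes.
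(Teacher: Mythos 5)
Your proposal is correct and takes essentially the same route as the paper: the identical variation--of--parameters reformulation with kernel $\frac{1}{2N}\bigl(r^{N}s^{1-N}-r^{-N}s^{N+1}\bigr)$, the same $O(s^{N-\delta})$ bound on the nonlinearity yielding the $O(r^{N+2-\delta})$ remainder, and successive approximation --- the paper writes out the Picard iterates explicitly on a small interval $[0,\tau]$ where you package the same estimates as a contraction on a weighted ball, which is the same argument in abstract form. The only blemish is a harmless sign slip in your $F$ (the vortex term should carry $v^{2}-2v$ rather than $2v-v^{2}$), which is immaterial since only absolute values enter the estimates.
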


\begin{proof} Using the basic theory of ordinary differential equations, two linearly independent solutions $r^{N}$ and $r^{-N}$ of $\eqref{4.10}$ and the condition of $u(0,a)=0$, we can transform $\eqref{4.11}$ into the integral form
\begin{eqnarray}
\label{4.14}
u(r,a)=ar^{N}+\frac{1}{2N}\int_0^r\left(\frac{r^{N}}{s^{N-1}}\!-\!\frac{s^{N+1}}{r^{N}}\right)\left\{\frac{N^{2}}{s^{2}}\left(v^{2}(s)\!-\!2v(s)\right)+\frac{\lambda}{2}\mathrm{e}^{\eta(s)}\left(u^{2}(a,s)\!-\!1\right)\right\}u(a,s)\dd s.\nn\\
\end{eqnarray}
Applying the Picard iteration, the equation $\eqref{4.14}$ can be solved near $r=0$ as follows. Setting $u_{0}(r)=ar^{N}$ and
\begin{eqnarray}
\label{4.15}
g(r,s,u_{0}(s))=\frac{1}{2N}\left(\frac{r^{N}}{s^{N-1}}\!-\!\frac{s^{N+1}}{r^{N}}\right)\left\{\frac{N^{2}}{s^{2}}\left(v^{2}(s)\!-\!2v(s)\right)+\frac{\lambda}{2}\mathrm{e}^{\eta(s)}\left(u_{0}^{2}(a,s)\!-\!1\right)\right\}u_{0}(a,s),\nn
\end{eqnarray}
where $r,s>0.$
Define
\begin{equation*}
  u_{n+1}(r)=ar^{N}+\int_0^r g(r,s,u_{n}(s))ds,~~n=0,1,2,\cdots.
\end{equation*}
Since the assumption on $v(r)$, we get
\begin{eqnarray}
\label{4.16}
|u_{1}(r)-u_{0}(r)|&\leq&\int_0^{r}\left|g(r,s,u_{0}(s))\right|\dd s\notag\\[1mm]
&\leq&\frac{a}{2N}\int_0^{r}r^{N}s^{1-\delta}\left(3MN^{2}+C_{1}\lambda\right)\dd s\notag\\[1mm]
&\leq&aM_{1}r^{{N+2-\delta}},~~r\leq\min\left\{1, \sqrt[N]{\frac{1}{a}}\right\},\nn
\end{eqnarray}
where $M>0$ is a constant and $M_{1}=\frac{3MN^{2}+C_{1}\lambda}{2(2-\delta)N}$. Thus,
\begin{eqnarray}
\label{4.17}
|u_{1}(r)|\leq aM_{1}r^{N+2-\delta}+ar^{N} \leq 2ar^{N},~~r\leq\min\left\{1, \sqrt[N]{\frac{1}{a}}, \left(\frac{1}{M_{1}}\right)^{\frac{1}{2-\delta}}\right\},\nn
\end{eqnarray}
Then, for $r\leq\min\left\{1, \sqrt[N]{\frac{1}{a}}, \left(\frac{1}{M_{1}}\right)^{\frac{1}{2-\delta}}\right\}$, we obtain
\begin{eqnarray}
\label{4.18}
|u_{2}(r)-u_{0}(r)|&\leq&\int_0^{r}\left|g(r,s,u_{1}(s))\right|\dd s\notag\\[1mm]
&\leq&\frac{a}{N}\int_0^{r}r^{N}s^{1-\delta}\left(3MN^{2}+\frac{5}{2}C_{1}\lambda\right)\dd s\notag\\[1mm]
&\leq&aM_{2}r^{N+2-\delta},\nn
\end{eqnarray}
where $M_{2}=\frac{6MN^{2}+5C_{1}\lambda}{2(2-\delta)N}$ and
\begin{eqnarray}
\label{4.19}
|u_{2}(r)|\leq aM_{2}r^{N+2-\delta}+ar^{N} \leq 2ar^{N},~~r\leq\min\left\{1, \sqrt[N]{\frac{1}{a}}, \left(\frac{1}{M_{2}}\right)^{\frac{1}{2-\delta}}\right\}.\nn
\end{eqnarray}
In general, for $r\leq\min\left\{1, \sqrt[N]{\frac{1}{a}}, \left(\frac{1}{M_{2}}\right)^{\frac{1}{2-\delta}}\right\}$, we have
\begin{eqnarray}
\label{4.20}
|u_{n+1}(r)-u_{0}(r)|&\leq&\int_0^{r}\left|g(r,s,u_{n}(s))\right|\dd s\notag\\[1mm]
&\leq&\frac{a}{N}\int_0^{r}r^{N}s\left(3MN^{2}+\frac{5}{2}\lambda C_{1}s^{-\delta}\right)\dd s\notag\\[1mm]
&\leq&aM_{2}r^{N+2-\delta},~~n=1,2,\cdots.
\end{eqnarray}
and
\begin{eqnarray}
\label{4.21}
|u_{n}(r)|\leq 2ar^{N}.\nn
\end{eqnarray}
Hence, for $r\leq\min\left\{1, \sqrt[N]{\frac{1}{a}}, \left(\frac{1}{M_{2}}\right)^{\frac{1}{2-\delta}}\right\}$, there holds
\begin{eqnarray}
\label{4.22}
|u_{n+1}(r)-u_{n}(r)|&\leq&\int_0^{r}\left|g(r,s,u_{n}(s))-g(r,s,u_{n-1}(s))\right|\dd s\notag\\[1mm]
&\leq&\frac{1}{2N}\int_0^{r}\frac{r^{N}}{s^{N+\delta-1}}\left(3MN^{2}+\frac{13\lambda}{2}C_{1}\right)\left|u_{n}-u_{n-1}\right|\dd s\notag\\[1mm]
&\leq& M_{3}\int_0^{r}\frac{r^{N}}{s^{N+\delta-1}}\left|u_{n}-u_{n-1}\right|\dd s,
~~n=1,2,\cdots.
\end{eqnarray}
where $M_{3}=\frac{6MN^{2}+13C_{1}\lambda}{4N}$. Set
\begin{eqnarray}
\label{4.23}
\tau=\min\left\{1, \sqrt[N]{\frac{1}{a}}, \left(\frac{1}{2M_{3}}\right)^{\frac{1}{2-\delta}}\right\}
\end{eqnarray}
and iterating $\eqref{4.22}$, we obtain
\begin{eqnarray}
\label{4.24}
|u_{n+1}(r)-u_{n}(r)|\leq\prod_{i=1}^{n}\frac{1}{(2-\delta)(i+1)}
aM_{1}r^{N+2-\delta}(M_{3}r^{2-\delta})^{n},~~r\in[0,\tau],~~n=1,2,\cdots.\nn
\end{eqnarray}
Thus $u_{0}(r)+\sum\limits_{n=0}^{\infty}(u_{n+1}(r)-u_{n}(r))$ uniformly converges in $[0,\tau]$ which implies that $\{u_{n}(r)\}$ uniformly converges in $[0,\tau]$. Therefore, the limit $u(r)=\lim\limits_{n\rightarrow\infty}u_{n}(r)$ is a solution of $\eqref{4.14}$ over $[0,\tau]$. The local uniqueness of the solution to $\eqref{4.14}$ is clear. Letting $n\rightarrow\infty$ in $\eqref{4.20}$, then \eqref{4.13} is obtained. Consequently, the existence and uniqueness of \eqref{4.9} near $r=0$ has been proved.
\end{proof}

According to the classical ordinary differential equation theory, the local solution to the equation \eqref{4.11} satisfying $u(0)=0$ can be extended to its maximum existence interval $[0,R_{a})$ and depends continuously on the parameter $a$, where either $R_{a}=\infty$ or $\lim\limits_{r\rightarrow R_{a}^{-}}u(r)=\infty$.

With the discussion of Lemma \ref{lem.4.2}, we can define
\begin{eqnarray*}
\mathcal{A}_{1}&=&\left\{a>0\mid\,\exists~\tilde{r}\in(0,R_{a})\,\,\mbox{such that}\,\,u'(\tilde{r},a)<0\,\,\mbox{and}\,\,0<u(r,a)<1, \forall r\in(0,\tilde{r}]\right\},\\[1mm]
\mathcal{A}_{2}&=&\left\{a>0\mid\,\exists~\hat{r}\in(0,R_{a})\,\,\mbox{such that}\,\,u(\hat{r},a)=1\,\,\mbox{and}\,\,u'(r,a)\geq0, \forall r\in(0,\hat{r}]\right\},\\[1mm]
\mathcal{A}_{3}&=&\left\{a>0\mid\,u'(r,a)\geq0\,\,\mbox{and}\,\,0<u(r,a)<1, \forall r>0\right\},
\end{eqnarray*}
where $u(r,a)$ is the unique local solution obtained in lemma \ref{lem.4.2} and $u'(r,a)=\frac{\partial u(r,a)}{\partial r}$.

It is obvious that
\begin{equation}
\label{4.25}
\mathcal{A}_{1}\cup\mathcal{A}_{2}\cup\mathcal{A}_{3}=(0,\infty),~~\mathcal{A}_{1}\cap\mathcal{A}_{2}=\mathcal{A}_{2}\cap\mathcal{A}_{3}=\mathcal{A}_{3}\cap\mathcal{A}_{1}=\emptyset.
\end{equation}

In the following, we will prove that the existence of the global solution to equation \eqref{4.5} satisfying the boundary conditions $u(0)=0$ and $u(\infty)=1$. To this end, we need to show that $\mathcal{A}_{3}$ is nonempty. In view of \eqref{4.25}, it is sufficient to show that both $\mathcal{A}_{1}$ and $\mathcal{A}_{2}$ are open and nonempty.

\begin{lemma}\label{lem.4.3}
For given $v(r)$ in Lemma \ref{lem.4.1}, the sets $\mathcal{A}_{1},\mathcal{A}_{2}$ are both open and nonempty.
\end{lemma}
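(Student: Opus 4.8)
The plan is to treat the four assertions---openness and nonemptiness of each of $\mathcal{A}_1$ and $\mathcal{A}_2$---separately, using as the single underlying tool the continuous dependence of the shooting trajectory $u(\cdot,a)$ on the parameter $a$. By Lemma \ref{lem.4.2} the local solution near $r=0$ exists, is unique, and through its integral representation \eqref{4.14} depends continuously on $a$; combined with the continuation statement recorded after Lemma \ref{lem.4.2} and the standard $C^1$ continuous dependence on a regular interval $[\varepsilon,R]\subset(0,R_a)$, this shows that $a\mapsto u(\cdot,a)$ and $a\mapsto u'(\cdot,a)$ are continuous uniformly on compact $r$--intervals, while near $r=0$ the uniform expansion $u(r,a)=ar^{N}(1+o(1))$ controls the trajectory there.

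For the openness of $\mathcal{A}_1$ I would argue directly from the strictness of its defining inequalities. If $a_0\in\mathcal{A}_1$ with witness radius $\tilde r$, then $u'(\tilde r,a_0)<0$ and $0<u(r,a_0)<1$ on $(0,\tilde r]$ are all strict; continuous dependence on $[\varepsilon,\tilde r]$ preserves $u'(\tilde r,a)<0$ and $u(r,a)<1$ for $a$ near $a_0$, while the expansion near $0$ gives $0<u(r,a)<1$ on $(0,\varepsilon]$ uniformly. Hence a whole neighborhood of $a_0$ lies in $\mathcal{A}_1$.

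For the openness of $\mathcal{A}_2$ the decisive point, which I expect to be the main obstacle, is to upgrade the defining condition to a transversal crossing before continuous dependence can be applied. First I would show that at the crossing radius $\hat r$ one has $u'(\hat r,a_0)>0$ strictly: if instead $u'(\hat r)=0$, then evaluating \eqref{4.11} at $\hat r$ with $u(\hat r)=1$ gives $u''(\hat r)=\frac{N^{2}}{\hat r^{2}}(v(\hat r)-1)^{2}$, which is strictly positive since $v(r)<1$ at every finite radius ($v$ being increasing with $v(\infty)=1$); this forces $u'<0$ just left of $\hat r$, contradicting $u'\ge0$ on $(0,\hat r]$. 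The residual difficulty is that $u'\ge0$ on $(0,\hat r]$ is not strict at interior points, so a perturbed trajectory could in principle dip to $u'<0$ before reaching $1$ and thus fall into $\mathcal{A}_1$; to exclude this I would prove that every $a_0\in\mathcal{A}_2$ actually has $u'(r,a_0)>0$ throughout $(0,\hat r)$, ruling out interior critical points of $u$ (a vanishing $u'$ with $u''\le0$ already contradicts $u'\ge0$, and the borderline tangency $u''>0$ is the delicate case to eliminate from the sign structure of \eqref{4.11}). With strict monotonicity and a transversal crossing in hand, the implicit function theorem yields a nearby crossing radius $\hat r(a)$ with $u(\hat r(a),a)=1$ and $u'>0$ up to it, giving a neighborhood of $a_0$ inside $\mathcal{A}_2$.

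Finally, nonemptiness of both sets I would obtain from the two extreme regimes of $a$. For $\mathcal{A}_2$, taking $a$ large, Lemma \ref{lem.4.2} gives $u(r,a)=ar^{N}+O(r^{N+2-\delta})$ on $[0,\tau]$, so $u$ reaches the value $1$ near $r_*\sim a^{-1/N}$; writing $ru'(r)=\int_0^r\big(\frac{N^{2}}{s}(v-1)^2u+\frac{\lambda s}{2}(u^2-1)u\,\e^{\eta}\big)\dd s$, the positive first term contributes $\approx Nar^{N}$ so that $r_*u'(r_*)\to N>0$, while the negative contribution is $O(r_*^{\,N+2-\delta})\cdot a=O(a^{-(2-\delta)/N})\to0$, so the crossing occurs with $u'>0$ and large $a$ lies in $\mathcal{A}_2$. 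For $\mathcal{A}_1$, taking $a$ small, $u$ stays uniformly small and I would exploit the same identity: the positive first contribution is integrable in $s$, whereas the negative second contribution carries the non--integrable weight $s\,\e^{\eta}\sim C_2 s^{1-\delta}$ with $0<\delta\le1$, so $ru'$ is eventually driven negative while $u$ remains below $1$ by smallness of $a$; thus $u$ turns back and small $a$ lies in $\mathcal{A}_1$. The main obstacle throughout is the openness of $\mathcal{A}_2$, precisely this transversality--plus--strict--monotonicity step, since that is where the non--strict inequality in the definition must first be improved.
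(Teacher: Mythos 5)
Your overall framework---the three shooting sets, openness via continuous dependence in $a$, nonemptiness from the two extreme regimes of $a$---is exactly the paper's. Your treatment of the openness of $\mathcal{A}_2$ is in fact more careful than the paper's: the transversality computation (if $u'(\hat r)=0$ and $u(\hat r)=1$ then $u''(\hat r)=\frac{N^2}{\hat r^2}(v(\hat r)-1)^2>0$, forcing $u'<0$ just left of $\hat r$) is the same device the paper deploys only to exclude a maximum at height $1$ in its $\mathcal{A}_1$ discussion, while for $\mathcal{A}_2$ the paper simply asserts openness from continuous dependence without addressing the non-strict inequality $u'\ge0$; your concern there is legitimate and worth writing out. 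Your large-$a$ argument for $\mathcal{A}_2$ is the unscaled version of the paper's rescaling $t=a^{1/N}r$, under which \eqref{4.26} degenerates to \eqref{4.30} with solution $t^N$; the two computations are equivalent.

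The genuine gap is in your nonemptiness argument for $\mathcal{A}_1$. In the identity $ru'(r)=\int_0^r\bigl(\frac{N^2}{s}(v-1)^2u+\frac{\lambda s}{2}(u^2-1)u\e^{\eta}\bigr)\dd s$ both the positive and the negative contributions are proportional to $u$, so ``integrable weight versus non-integrable weight'' does not by itself decide the sign: assuming $u\le 1/2$ and $u'\ge0$ one gets $ru'(r)\le C+N^2\ln r-c\,u(1)\,r^{2-\delta}$, which does eventually turn negative, but only at a radius $R(a)\sim u(1,a)^{-1/(2-\delta)}$ that recedes to infinity as $a\to0$, whereas your control ``$u$ remains below $1$ by smallness of $a$'' is available only on $a$-independent compact intervals; moreover the claimed integrability of $\int\frac{(v-1)^2}{s}u\,\dd s$ at infinity is unjustified, since Lemma \ref{lem.4.1} assumes no decay rate for $1-v$. (You should not calibrate against the paper here either: its claim that $u(r,0)<0$, $u'(r,0)<0$ on $(0,\tau)$, which would force a turnaround inside the fixed interval $(0,\tau)$, is inconsistent with $u(\cdot,0)\equiv0$ and with \eqref{4.13}, which gives $u>0$, $u'>0$ throughout $(0,\tau)$ for every $a>0$.) The clean way to close your step is to linearize at $u=0$: $w=u/a$ converges on compacts to the solution of $w''+\frac{w'}{r}=\bigl(\frac{N^2}{r^2}(v-1)^2-\frac{\lambda}{2}\e^{\eta}\bigr)w$ with $w\sim r^N$, and the substitution $F=r^{1/2}w$ together with $\e^{\eta}\sim C_2r^{-\delta}$, $0<\delta\le1$, shows exactly as in Lemma \ref{lem.4.4} that $F$ oscillates, so $w'$ becomes negative at some fixed finite radius $R_1$ while $w$ stays below $1$ in magnitude times a constant; continuous dependence on $[0,R_1+1]$ then places all sufficiently small $a$ in $\mathcal{A}_1$.
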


\begin{proof} First, we show that $\mathcal{A}_{1}$ contains small $a$. In fact, from the proof of Lemma \ref{lem.4.2}, we get that \eqref{4.13} is true for all $a\in(0,\infty)$ and $0\leq u(r,a)<1,\,\forall r\in[0,\tau]$, where the definition of $\tau$ is in $\eqref{4.23}$. It is clear that $\tau$ is positive as $a\rightarrow0$ and we have $u(r,a)<0,\,u'(r,a)<0$ for all $r\in(0,\tau)$ when $a=0$. Therefore, for given $r_1\in(0,\tau)$, using the continuous dependence of solution on the parameter $a$, there is a small $\varepsilon>0$ such that $u(r_1,a)<0,\,u'(r_1,a)<0$ for all $a\in(0,\varepsilon)$. On the other hand, for given $a>0$, from $\eqref{4.14}$, we obtain $u(r,a)>0,\,u'(r,a)>0$ initially. Thus there exists a $r_2\in(0,r_1)\subset(0,\tau)$ so that $u(r_2,a)$ is a maximum. Note that $u(r_2,a)=1$ cannot happen (If otherwise, inserting $u'(r_2,a)=0$ and $u(r_2,a)=1$ into \eqref{4.11}, we have $u''(r_2,a)>0$ which contradicts $u''(r_2,a)\leq0$.), then there exists a $\tilde{r}\in(r_2,\tau]\subset(0,\tau)$ such that $u'(\tilde{r},a)<0$ and $0<u(r,a)<1$ for all $r\in(0,\tilde{r}]$. Consequently, $(0,\varepsilon)\subset\mathcal{A}_{1}$ is nonempty. According to the continuous dependence of the solution on the parameter $a$, we can see that $\mathcal{A}_{1}$ is open.

Second, we prove that $\mathcal{A}_{2}$ contains large $a$. Setting $t=a^{\frac{1}{N}}r$ and $\tilde{u}(t,a)=u(r,a)$, then $\eqref{4.11}$ can be converted into
\bea
\label{4.26}
&&\tilde{u}''(t,a)+\frac{\tilde{u}'(t,a)}{t}-\frac{N^{2}}{t^{2}}\tilde{u}(t,a)\nn\\
&=&\frac{N^{2}}{t^{2}}v(a^{-\frac{1}{N}}t)(v(a^{-\frac{1}{N}}t)-2)\tilde{u}(t,a)+\frac{\lambda\mathrm{e}^{\eta(a^{-\frac{1}{N}}t)}}{2a^{\frac{2}{N}}}(\tilde{u}^{2}(t,a)-1)\tilde{u}(t,a),
\eea
for all $t\in(0,a^{\frac{1}{N}}R_{a})$, with the initial condition $\tilde{u}(t,a)\sim t^{N}$ as $t\rightarrow0$. From Lemma \ref{4.2}, when $a$ large enough, we get
\begin{eqnarray}
\label{4.27}
0\leq u(r,a)<1,~~~~~r\in[0,\tau],\nn
\end{eqnarray}
where $\tau$ is defined as in $\eqref{4.23}$. Then, we obtain
\begin{eqnarray}
\label{4.28}
0\leq\tilde{u}(t,a)<1,~~~~~t\in[0,\tau a^{\frac{1}{N}}].\nn
\end{eqnarray}
Noting
\be
\left|\frac{\lambda\mathrm{e}^{\eta(a^{-\frac{1}{N}}t)}}{2a^{\frac{2}{N}}}(\tilde{u}^{2}(t,a)-1)\tilde{u}(t,a)\right|\sim \frac{\lambda C_{1}t^{N-\delta}}{2a^{\frac{2-\delta}{N}}}\rightarrow 0,~~t\rightarrow 0,\nn
\ee
thus
\be
\label{4.29*}
\left|\frac{\lambda\mathrm{e}^{\eta(a^{-\frac{1}{N}}t)}}{2a^{\frac{2}{N}}}(\tilde{u}^{2}(t,a)-1)\tilde{u}(t,a)\right|\leq\frac{\tilde{C}}{a^{\frac{2}{N}}},~~t\in[0,\tau a^{\frac{1}{N}}],\nn
\ee
where $\tilde{C}=\tilde{C}(\lambda)>0$ is a constant. Due to the fact that $0<v(r)\leq Mr^{2}$, $\forall r\leq 1$, we see
\begin{eqnarray}
\label{4.29}\nn
\left|\frac{N^{2}}{t^{2}}v(a^{-\frac{1}{N}}t)(v(a^{-\frac{1}{N}}t)-2)\tilde{u}(t,a)\right|\leq2\frac{N^{2}}{t^{2}}Ma^{-\frac{2}{N}}t^{2}=\frac{2MN^{2}}{a^{\frac{2}{N}}},~~t\in[0, \tau a^{\frac{1}{N}}].
\end{eqnarray}
According to
\begin{equation}
\label{4.29a}
\frac{2MN^{2}}{a^{\frac{2}{N}}}\rightarrow 0~\text{and}~\frac{\tilde{C}}{a^{\frac{2}{N}}}\rightarrow 0~\text{as}~a\rightarrow \infty,
\end{equation}
we know that $\eqref{4.26}$ is equal to
\begin{eqnarray}
\label{4.30}
\tilde{u}''(t)+\frac{\tilde{u}'(t)}{t}-\frac{N^{2}}{t^{2}}\tilde{u}(t)=0,
\end{eqnarray}
with $\tilde{u}(t)\sim t^{N}$ as $t\rightarrow0$. It is clear that $\tilde{u}(t)=t^{N}$ is a solution of \eqref{4.30}. Hence $u(t_{0})>1$ when $t_0\to 1^+$ and $u'(t)>0$ for all $t>0$. Thus $\mathcal{A}_{2}$ is nonempty. Using the continuous dependence of the solution on the parameter $a$, we arrive at that $\mathcal{A}_{2}$ is open.
\end{proof}

Since $\mathcal{A}_{1}$ and $\mathcal{A}_{2}$ are two open disjoint nonempty sets, by connectedness, there exists at least one $a\in\mathcal{A}_{3}$ (denoted by $a_{0}$) such that $u(r,a)$ satisfies $u'(r,a_0)\geq0$ and $0<u(r,a_0)<1$ for all $r>0$. Next, we will show that $u(r,a_0)$ is a solution of \eqref{4.5} satisfying $u(\infty,a_0)=1$.

\begin{lemma}\label{lem.4.4}
For $v(r)$ given as in Lemma \ref{lem.4.1} and $a_{0}\in\mathcal{A}_{3}$, $u(r,a_{0})$ is a solution to \eqref{4.5} satisfying $u(\infty,a_0)=1$.
\end{lemma}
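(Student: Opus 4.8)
The plan is to exploit the monotonicity built into the definition of $\mathcal{A}_3$. Since $a_0 \in \mathcal{A}_3$, the solution $u(\cdot, a_0)$ exists for all $r > 0$ (so $R_{a_0}=\infty$, and no separate continuation or blow-up discussion is needed), is non-decreasing because $u'(r,a_0) \geq 0$, and satisfies $0 < u(r, a_0) < 1$. A bounded monotone function converges, so the limit $L := \lim_{r\to\infty} u(r, a_0)$ exists with $0 < L \leq 1$; proving the lemma amounts to ruling out $L < 1$. I would argue by contradiction: assume $L < 1$ and show that the equation \eqref{4.5} then forces $u'$ to become negative for large $r$, contradicting $u'(\cdot, a_0) \geq 0$.

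To do so, rewrite \eqref{4.5} in the divergence form $(r u')' = r\left[\frac{N^2}{r^2}(v-1)^2 u + \frac{\lambda}{2}(u^2-1)u\,\e^{\eta}\right]$ and examine the bracket as $r \to \infty$. Because $v(\infty) = 1$, the coefficient $(v-1)^2 u$ stays bounded, so the first term is $O(r^{-2})$. For the second term, recall that $\e^{\eta(r)} = C_2 r^{-\delta}$ for $r \gg 1$ with $0 < \delta \leq 1$, while $(u^2 - 1)u \to (L^2-1)L < 0$ by continuity; hence for $r$ large the second term is bounded above by $-c\,r^{-\delta}$ for some constant $c > 0$. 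Since $\delta \leq 1 < 2$, the $O(r^{-2})$ contribution is negligible compared with $r^{-\delta}$, and therefore $(r u')' \leq -\frac{c}{2}\, r^{1-\delta}$ for all $r$ past some $r_3$. Integrating from $r_3$ to $r$ gives $r\,u'(r) \leq r_3\,u'(r_3) - \frac{c}{2}\int_{r_3}^r s^{1-\delta}\,\dd s$, and because $1 - \delta \geq 0$ the integral diverges as $r \to \infty$, forcing $r u'(r) \to -\infty$. This contradicts $u'(r, a_0) \geq 0$, so $L = 1$, that is, $u(\infty, a_0) = 1$.

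The computation is largely routine once the divergence form is in hand; the one point that needs care, and the main obstacle, is verifying that the positive angular term $\frac{N^2}{r^2}(v-1)^2 u$ genuinely cannot overturn the sign of the bracket. Here the only structural input needed is that $v \to 1$, so $(v-1)^2$ is bounded, together with the quadratic decay $r^{-2}$ of its prefactor, which dominates the slower $r^{-\delta}$ decay of the gravitational factor since $\delta \leq 1$; no refined decay rate for $v-1$ is required at this stage. It is worth remarking that the same mechanism explains why the borderline case $\delta = 1$ still works, since there $s^{1-\delta} \equiv 1$ and the integral $\int_{r_3}^r s^{1-\delta}\,\dd s = r - r_3$ already diverges.
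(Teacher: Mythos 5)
Your argument is correct, but it reaches the conclusion by a genuinely different route than the paper. You integrate the divergence form $(ru')'=r\bigl[\tfrac{N^2}{r^2}(v-1)^2u+\tfrac{\lambda}{2}(u^2-1)u\,\e^{\eta}\bigr]$ directly: under the hypothesis $L<1$ the bracket is eventually dominated by the gravitational term $-c\,r^{-\delta}$ (here you correctly use $L>0$, which follows from monotonicity and positivity, so that $(L^2-1)L<0$), and since $\int^{\infty}s^{1-\delta}\,\dd s$ diverges for $0<\delta\le 1$ you force $ru'\to-\infty$, contradicting $u'\ge 0$. The paper instead sets $F(r)=r^{1/2}u(r,a_0)$, observes that $F''=q(r)F$ with $q(r)<0$ for large $r$, and invokes an oscillation argument to produce a point where $u'<0$. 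Your version is more elementary and self-contained: it avoids appealing to oscillation theory, and it also sidesteps a slightly imprecise step in the paper, namely the claim that $F''\le -C_0F$ for a \emph{constant} $C_0>0$ (the coefficient in fact decays like $r^{-\delta}$, so oscillation must be justified by an integral criterion such as Fite--Wintner, $\int^{\infty}|q|\,\dd r=\infty$, rather than by a uniform lower bound). The only thing the paper's route buys is that the oscillation phrasing generalizes immediately to settings where one cannot sign the full bracket pointwise; in the present problem your direct integration is the cleaner argument. Your handling of the borderline case $\delta=1$ and of the harmless $O(r^{-2})$ angular term is exactly the point that needs care, and you address it correctly.
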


\begin{proof} For $a_{0}\in\mathcal{A}_{3}$, it is easy to see that $\lim\limits_{r\rightarrow\infty}u(r,a_{0})\equiv U_{0}$. We assert $U_{0}=1$. Otherwise, suppose $U_{0}<1$. Set $F(r)=r^{\frac{1}{2}}u(r,a_{0})$, then $F(r)$ satisfies
\be
F''=\left(\frac{N^{2}}{r^{2}}(v-1)^{2}-\frac{1}{4r^{2}}+\frac{\lambda}{2}(u^{2}-1)\mathrm{e}^{\eta(r)}\right)F,\nn
\ee
and there exists a sufficiently large $R>0$ such that $F''\leq-C_{0}F$ for $r>R$, where $C_{0}>0$ is a constant. This indicates that $F(r)$ oscillates when $r>R$. Hence, there is a certain $\bar{r}>R$ so that $u'(\bar{r},a_{0})<0$ which contradicts the fact that $a_{0}\in\mathcal{A}_{2}$. Therefore, $U_{0}=1$ and the assertion of the lemma follows.
\end{proof}

We now give a series of properties of the obtained solution $u(r,a_0)$ in Lemma \ref{lem.4.2}--\ref{lem.4.4}.

\begin{lemma}\label{lem.4.5}
If $a_{0}\in\mathcal{A}_{3}$, then the solution $u(r,a_{0})$ of \eqref{4.5} satisfying $u(0,a_0)=0$ and $u(\infty,a_0)=1$ is unique.
\end{lemma}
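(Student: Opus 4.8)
The plan is to run a shooting-parameter argument built around a Wronskian-type identity that bypasses the non-monotonicity of the cubic nonlinearity $f(u)=u^{3}-u$ on $(0,1)$. Suppose, for contradiction, that \eqref{4.5} admits two distinct solutions $u_{1},u_{2}$ with $u_{i}(0)=0$, $u_{i}(\infty)=1$, both increasing and valued in $(0,1)$; by Lemma \ref{lem.4.2} each is the unique local solution attached to its own shooting value, so they correspond to distinct parameters $a_{1}\neq a_{2}$ in $\mathcal{A}_{3}$, and I may assume $a_{1}<a_{2}$. The asymptotics \eqref{4.7} then give $u_{2}>u_{1}>0$ for small $r>0$.

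First I would derive the identity. Writing \eqref{4.5} as $(ru_{i}')'=r\big[\tfrac{N^{2}}{r^{2}}(v-1)^{2}u_{i}+\tfrac{\lambda}{2}(u_{i}^{2}-1)u_{i}\e^{\eta}\big]$ and setting $W=u_{1}u_{2}'-u_{1}'u_{2}$, a direct computation gives
\be\nn
\frac{\dd}{\dd r}(rW)=\frac{\lambda}{2}\,r\,\e^{\eta}\,u_{1}u_{2}\,(u_{2}^{2}-u_{1}^{2}),
\ee
where the first-order terms cancel and the factor $f(u_{2})u_{1}-f(u_{1})u_{2}=u_{1}u_{2}(u_{2}^{2}-u_{1}^{2})$ carries all the sign information. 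The virtue of this identity is that its right-hand side is governed solely by the sign of $u_{2}-u_{1}$ (the functions being positive), so the failure of $f$ to be monotone on $(0,1)$ becomes irrelevant. Using \eqref{4.7} one checks $rW=O(r^{2N+2-\delta})\to0$ as $r\to0$, since the leading $r^{2N-1}$ contributions to $W$ cancel exactly.

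Next I would establish the global ordering $u_{2}>u_{1}$ on $(0,\infty)$. If it failed, let $r_{*}$ be the first zero of $u_{2}-u_{1}$; then $u_{2}>u_{1}$ on $(0,r_{*})$ forces $(u_{2}-u_{1})'(r_{*})\leq0$, whence $W(r_{*})=u_{1}(r_{*})\big(u_{2}'(r_{*})-u_{1}'(r_{*})\big)\leq0$. Integrating the identity over $(0,r_{*})$ gives $r_{*}W(r_{*})=\int_{0}^{r_{*}}\tfrac{\lambda}{2}r\e^{\eta}u_{1}u_{2}(u_{2}^{2}-u_{1}^{2})\,\dd r>0$, a contradiction. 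Hence $u_{2}>u_{1}$ everywhere, so $rW$ is nondecreasing from its limit $0$ at the origin and, the integrand being strictly positive, satisfies $rW\geq c>0$ for all $r\geq r_{1}$ with some $r_{1}>0$.

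Finally, since $\frac{\dd}{\dd r}(u_{2}/u_{1})=W/u_{1}^{2}\geq c/(r u_{1}^{2})\geq c/r$ for $r\geq r_{1}$ (using $0<u_{1}<1$), integration yields $u_{2}/u_{1}\to\infty$ as $r\to\infty$, contradicting $u_{1},u_{2}\to1$ from \eqref{4.8}. Therefore $a_{1}=a_{2}$ and $u_{1}\equiv u_{2}$, which proves uniqueness. The main obstacle is exactly the non-monotonicity of $f(u)=u^{3}-u$ on $(0,1)$, which precludes a direct maximum-principle comparison; the Wronskian identity is the device that resolves it, and the remaining care lies in verifying the vanishing of the boundary term $rW$ at the origin from \eqref{4.7} and in replacing any delicate decay estimate for $u_{i}'$ at infinity by the monotone-ratio argument for $u_{2}/u_{1}$.
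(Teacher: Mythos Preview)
Your argument is correct and complete: the Wronskian identity $(rW)'=\tfrac{\lambda}{2}\,r\,\e^{\eta}\,u_{1}u_{2}(u_{2}^{2}-u_{1}^{2})$ is derived correctly, the cancellation of the leading $r^{2N-1}$ term in $W$ (so that $rW\to0$ at the origin) follows from the asymptotics of Lemma~\ref{lem.4.2}, and the ratio argument $(u_{2}/u_{1})'\geq c/r$ cleanly produces the contradiction with $u_{i}\to1$. The paper, however, proceeds differently: it forms $U=u_{2}-u_{1}$, writes down the linear equation it satisfies with potential term $-\tfrac{\lambda}{2}\e^{\eta}(u_{1}^{2}+u_{2}^{2}+u_{1}u_{2}-1)$, and invokes the Sturm comparison theorem against the equation for $u_{1}$ (whose potential is strictly larger by $\tfrac{\lambda}{2}\e^{\eta}(u_{2}^{2}+u_{1}u_{2})>0$) to force an interior zero of $u_{1}$ between the boundary zeros $U(0)=U(\infty)=0$, contradicting $u_{1}>0$. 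Your route is more self-contained and sidesteps the delicate issue of applying Sturm comparison with a ``zero at infinity''; the paper's route is shorter but leans on that classical tool. Conceptually the two are cousins---your Wronskian monotonicity is precisely the mechanism underlying Sturm comparison---but the executions are genuinely distinct.
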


\begin{proof} Suppose otherwise that there are two different solutions $u_1(r)$ and $u_2(r)$. Let $U(r)=u_2(r)-u_1(r)$, then $U(r)$ satisfies the boundary conditions $U(0)=U(\infty)=0$ and
\begin{eqnarray}
\label{4.32}
U''+\frac{U}{r}-\frac{N^{2}}{r^{2}}(v-1)^{2}U-\frac{\lambda}{2}\mathrm{e}^{\eta(r)}(u_{1}^{2}+u_{2}^{2}+u_{1}u_{2}-1)U=0,~~\forall r\in(0,\infty).
\end{eqnarray}
From $\eqref{4.5}$, we have
\begin{eqnarray}
\label{4.33}
u_{1}''+\frac{u_{1}'}{r}-\frac{N^{2}}{r^{2}}(v-1)^{2}u_{1}-\frac{\lambda}{2}\mathrm{e}^{\eta(r)}(u_{1}^{2}-1)u_{1}=0,~~\forall r\in(0,\infty).
\end{eqnarray}
Since $u_{2}^{2}+u_{1}u_{2}>0$, applying the Sturm comparison theorem \cite{Wa} (see chapter VI) to $\eqref{4.32}$ and $\eqref{4.33}$, we see that $u_{1}(r)$ have more zero points than $U(r)$. Note that $U(0)=U(\infty)=0$, we have $u_{1}(r_{0})=0$ for some $r_{0}\in(0,+\infty)$. This contradicts that $u_{1}(r)>0$ for all $r>0$, and concludes the proof.
\end{proof}

In view of \eqref{4.29a}, we can find a sufficiently large positive constant $K$ independent of $\tilde{u}(r)$ such that $\tilde{u}(r,a)>1$ at $r_{0}=a^{-\frac{1}{N}}t_0$ when
\be
\frac{2MN^{2}}{a^{\frac{2}{N}}}<\frac{1}{K}~~\text{and}~~\frac{\tilde{C}}{a^{\frac{2}{N}}}<\frac{1}{K}.\nn
\ee
In other words, if
\be
a^{\frac{2}{N}}>K\tilde{M},~~,\tilde{M}=\max\{2MN^{2},\tilde{C}\},\nn
\ee
then $a\in\mathcal{A}_{2}$. Since $a_{0}\notin\mathcal{A}_{2}$, thus $a_{0}^{\frac{2}{N}}\leq K\tilde{M}$.

\begin{lemma}\label{lem.4.6}
If $a_{0}\in\mathcal{A}_{3}$, then there holds
\begin{eqnarray*}
|r^{-N}u(r,a_{0})|\leq M^{*},~~\forall r\leq1,
\end{eqnarray*}
where $M^{*}=M^{*}(C_{1}, K, \tilde{M})>0$ is a constant. Besides, we have $u'(0)=0$.
\end{lemma}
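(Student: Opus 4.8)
I want to bound $r^{-N}u(r,a_0)$ from above on $(0,1]$ and show $u'(0)=0$. The natural approach is to return to the integral equation \eqref{4.14} established in Lemma \ref{lem.4.2}, which for the shooting parameter $a_0$ reads
\be\nn
u(r,a_0)=a_0 r^{N}+\frac{1}{2N}\int_0^r\left(\frac{r^{N}}{s^{N-1}}-\frac{s^{N+1}}{r^{N}}\right)\left\{\frac{N^{2}}{s^{2}}\left(v^{2}(s)-2v(s)\right)+\frac{\lambda}{2}\e^{\eta(s)}\left(u^{2}(s,a_0)-1\right)\right\}u(s,a_0)\dd s.
\ee
Dividing by $r^{N}$ and writing $\psi(r)=r^{-N}u(r,a_0)$, the first term contributes exactly $a_0$, while in the integral the factor $r^{-N}(r^{N}/s^{N-1}-s^{N+1}/r^{N})$ is bounded by $s^{1-N}$ times a constant, and $u(s,a_0)=s^{N}\psi(s)$ absorbs the remaining negative power. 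The key inputs are $0<u(r,a_0)<1$ for all $r>0$ (since $a_0\in\mathcal{A}_3$), the hypothesis $0<v(s)\le M s^{2}$ for $s\le 1$, and the asymptotics $\e^{\eta(s)}=C_1 s^{-\delta}$ for small $s$; these make the curly brace $O(s^{-\delta})$ after the $s^{-2}$ cancellation from $v^2-2v=O(s^2)$, so the whole integrand is $O(s^{1-\delta})$, which is integrable since $0<\delta\le 1$.

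Carrying this out, I expect a Gronwall-type integral inequality of the form
\be\nn
\psi(r)\le a_0+C\int_0^r s^{1-\delta}\psi(s)\,\dd s,\quad r\le 1,
\ee
where $C$ depends only on $M,N,\lambda,C_1$. Applying the integral (Gronwall) inequality yields $\psi(r)\le a_0\exp\!\big(C\int_0^1 s^{1-\delta}\dd s\big)=a_0\,\e^{C/(2-\delta)}$ on $(0,1]$. I then invoke the bound $a_0^{2/N}\le K\tilde M$ established just before the lemma (so $a_0\le (K\tilde M)^{N/2}$), which lets me package the exponential factor and $a_0$ into a single constant $M^{*}=M^{*}(C_1,K,\tilde M)$ depending only on the indicated data, giving $|r^{-N}u(r,a_0)|\le M^{*}$ for all $r\le 1$, as claimed. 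This matches the constant structure asserted in the statement.

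**The derivative at the origin.** For $u'(0)=0$, I would differentiate the integral equation, or simply use the established estimate together with the local expansion. Since $r^{-N}u(r,a_0)$ is bounded and $N>1$, we have $u(r,a_0)=O(r^{N})$ near $r=0$; differentiating \eqref{4.14} gives $u'(r,a_0)=N a_0 r^{N-1}+O(r^{N-1})$ plus a boundary term that vanishes, so $u'(r,a_0)=O(r^{N-1})\to 0$ as $r\to0$ because $N>1$. Alternatively, one can read this directly off the sharper asymptotic $u(r,a_0)=a_0 r^{N}+O(r^{N+2-\delta})$ from \eqref{4.13}, differentiation of which gives $u'(r,a_0)=N a_0 r^{N-1}+O(r^{N+1-\delta})$, so $u'(0)=0$ since $N-1>0$.

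**Main obstacle.** The routine part is the power-counting that makes the kernel integrable; the one point needing care is controlling the nonlinear term $u^2-1$ uniformly. Because $0<u(s,a_0)<1$ globally, $|u^2-1|\le 1$, so that term is harmless and does not reintroduce a $\psi$-dependence beyond the linear one already tracked—this is exactly why the global bracketing $a_0\in\mathcal{A}_3$ is essential and why the estimate fails for general $a$. The only genuinely delicate step is verifying that the constant $C$ in the Gronwall inequality can be taken independent of $r$ uniformly up to $r=1$ and depends only on $(M,N,\lambda,C_1)$, which follows from the explicit small-$r$ form of $\e^{\eta}$; I would isolate this and then let Gronwall do the rest.
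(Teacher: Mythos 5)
Your proof is correct and follows essentially the same route as the paper: both start from the integral representation \eqref{4.14}, use the a priori bound $a_0^{2/N}\le K\tilde M$ together with $0<v(s)\le Ms^2$, $|u^2-1|\le 1$ and $\e^{\eta(s)}=C_1 s^{-\delta}$ to reduce the curly brace to $O(s^{-\delta})$, and conclude $u'(0)=0$ from $u=O(r^N)$ with $N>1$. The only difference is cosmetic: the paper substitutes the pointwise bound $u(s)\le a_0 s^N$ directly into the integrand to get an explicit $M^*$, whereas you close the estimate with a Gronwall inequality, which is if anything slightly more careful.
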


\begin{proof} Using $\eqref{4.14}$, $a_{0}^{\frac{2}{N}}\leq K\tilde{M}$ and the assumption on $v(r)$, we get
\begin{eqnarray}
\label{4.36}
\left|\frac{u(r,a_0)}{r^{N}}\right|
&\leq&a_{0}+\frac{1}{2N}\int_0^r\frac{1}{s^{N-1}}\left|\frac{N^{2}}{s^{2}}2Ms^{2}+\lambda C_{1}s^{-\delta}\right|a_{0}s^{N}\dd s\notag\\[1mm]
&\leq&a_{0}+\frac{a_{0}}{2N}\int_0^r s^{1-\delta}\left|2MN^{2}+C_{1}\lambda\right|\dd s\nn\\[2mm]
&\leq&(K\tilde{M})^{\frac{N}{2}}\left(1+\frac{2MN^{2}+C_{1}\lambda}{2N}\right)\equiv M^{*},~~\forall r\leq1.\nn
\end{eqnarray}
It is obvious that $r^{-N}u(r)$ decreasing near $r=0$. Applying the above results, we get $u(r)=O(r^{N})(r\rightarrow0)$. Furthermore, $u'(0)=0$.
\end{proof}

\begin{lemma}\label{lem.4.7}
If $a_{0}\in\mathcal{A}_{3}$, then
\be\label{4.37}
u(r,a_{0})=1+O(r^{-\alpha}),~~r\rightarrow\infty,\nn
\ee
where $1<\alpha<2+2\beta-\delta$, $\beta>0$ and $0<\delta\leq1$.
\end{lemma}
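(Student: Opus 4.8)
The plan is to turn the claim into a one--sided decay estimate for $w(r):=1-u(r,a_0)$ and then trap $w$ between $0$ and an explicit power barrier by a maximum principle, in the same spirit as Lemmas \ref{le.8}--\ref{le.9}. Since $a_0\in\mathcal{A}_3$ we have $0<u<1$ with $u(\infty)=1$ by Lemma \ref{lem.4.4}, so $w>0$ and $w(r)\to 0$. Substituting $u=1-w$ into \eqref{4.5} and using the identity $(u^2-1)u=-(2-w)(1-w)w$, I would rewrite the equation in the linear form
\[
w''+\frac{w'}{r}-q(r)\,w=-h(r),\qquad q(r)=\frac{\lambda}{2}(2-w)(1-w)\,\e^{\eta(r)},\quad h(r)=\frac{N^2}{r^2}(v-1)^2(1-w).
\]
The first task is to record the decay of the two coefficients. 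Because $w\to 0$, there is an $R_0$ with $w\le\frac12$ for $r\ge R_0$, so the zeroth--order coefficient is genuinely positive and bounded below, $q(r)\ge c_0 r^{-\delta}$ with $c_0>0$, using $\e^{\eta}=C_2 r^{-\delta}$ for $r\gg1$. The forcing is controlled by the decay of the datum $v$: from $v-1=O(r^{-\beta})$ and $0<1-w=u<1$ one obtains $0\le h(r)\le C_h\, r^{-2-2\beta}$.

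Next I would compare $w$ with the barrier $W(r)=B\,r^{-\alpha}$. A direct computation gives $W''+W'/r=B\alpha^2 r^{-\alpha-2}$, hence $W''+W'/r-q W\le B\alpha^2 r^{-\alpha-2}-c_0 B\,r^{-\alpha-\delta}$. To make $W$ a supersolution relative to $w$, i.e. $W''+W'/r-qW\le -h$, it suffices that
\[
B\alpha^2\, r^{\delta-2}+C_h\, r^{\alpha+\delta-2-2\beta}\le c_0 B\qquad\text{for large }r.
\]
Both exponents on the left are negative precisely because $\delta<2$ (always true here) and $\alpha+\delta<2+2\beta$ (the hypothesis $\alpha<2+2\beta-\delta$); thus the left side tends to $0$ and the inequality holds for all $r\ge R_1$ and any fixed $B>0$. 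Choosing $B$ large enough that $W(R_1)\ge w(R_1)$, and noting that $W$ and $w$ both vanish at infinity, I would apply the minimum principle to $\Psi=W-w$: since $\Psi''+\Psi'/r-q\Psi\le 0$ with $q>0$, a negative interior minimum of $\Psi$ is impossible, so $\Psi\ge 0$ on $[R_1,\infty)$. This yields $0<1-u(r,a_0)\le B r^{-\alpha}$, which is exactly \eqref{4.37}.

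The main obstacle is the exponent bookkeeping in the barrier step, and this is where the admissibility condition $\alpha<2+2\beta-\delta$ is forced. The decaying solution of the homogeneous equation $w''+w'/r-c_0 r^{-\delta}w=0$ in fact decays faster than any power (a WKB balance gives $\exp(-c\,r^{1-\delta/2})$ since $\delta\le 1$), so the genuine polynomial rate of $w$ is dictated entirely by the forcing $h=O(r^{-2-2\beta})$ measured against the slowly decaying potential $q\sim r^{-\delta}$, producing the threshold $\alpha=2+2\beta-\delta$. A secondary subtlety is that $q$ itself depends on the unknown $w$; this is harmless once one restricts to $r\ge R_0$, where $w$ is already small and $q$ admits the clean lower bound $c_0 r^{-\delta}$. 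Finally, the interval $1<\alpha<2+2\beta-\delta$ is nonempty because $\delta\le 1$ forces $2+2\beta-\delta\ge 1+2\beta>1$, and any such $\alpha$ makes the energy \eqref{4.4} converge, so the lower restriction $\alpha>1$ is merely a convenient choice guaranteeing a finite--energy vortex.
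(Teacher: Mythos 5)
Your argument is correct and is essentially the paper's own proof: the paper sets $\omega=u-1=-w$, uses the same power barrier $\sigma(r)=D_1r^{-\alpha}$, and applies the maximum principle to $\omega+\sigma$ with exactly the exponent comparison $\alpha+\delta<2+2\beta$ that you make explicit. The only differences are the sign convention and that you spell out the bookkeeping the paper leaves implicit.
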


\begin{proof} Set $\omega(r)=u(r)-1$, then $\eqref{4.5}$ gives us the relation
\be
\label{4.38}\nn
\omega''+\frac{\omega'}{r}-\frac{N^{2}}{r^{2}}(v-1)^{2}(\omega+1)-\frac{\lambda}{2}\mathrm{e}^{\eta(r)}\omega(\omega+1)(\omega+2)=0.
\ee
Taking the comparison function
\be
\sigma(r)=D_{1}r^{-\alpha},~~D_{1}>0,~~1<\alpha<2+2\beta-\delta,\nn
\ee
we have
\begin{eqnarray}
\label{4.40}
(\omega+\sigma)''+\frac{(\omega+\sigma)'}{r}\leq\left[\frac{N^{2}}{r^{2}}(v-1)^{2}+\frac{\lambda}{2}\mathrm{e}^{\eta(r)}(\omega+1)(\omega+2)\right](\omega+\sigma),~r\geq r_{1},
\end{eqnarray}
where $r_{1}>0$ is sufficiently large. For such fixed $r_{1}$, we can choose $D_{1}>0$ large enough to make $(\omega+\sigma)(r_{1})\geq0$. In view of this and the boundary condition $(\omega+\sigma)(r)\rightarrow0~(r\rightarrow\infty)$, we get by using the maximum principle in \eqref{4.40} that $-D_{1}r^{-\alpha}\leq\omega<0~(r>r_{1})$ as expected in \eqref{4.8}.
\end{proof}

\subsection{The equation governing the $\tilde{v}$--component}

In this subsection, we pay attention to establish the existence and uniqueness of the solution of the $\tilde{v}$--component equation.

\begin{lemma}\label{lem.4.8}
Given $v(r)$ and the associated function $u(r)$ in Lemma \ref{lem.4.1}, we can find a unique continuously differentiable solution $\tilde{v}(r)$ satisfying
\be\label{4.41}
\tilde{v}''(r)-\frac{\tilde{v}'(r)}{r}-u^{2}(r)(\tilde{v}(r)-1)\mathrm{e}^{\eta(r)}=0,~~~r>0,
\ee
subjects to the boundary conditions
\be\label{4.42}
\tilde{v}(0)=0,~~~\tilde{v}(\infty)=1,\nn
\ee
along with $\tilde{v}(r)$ increasing and $r^{-2}\tilde{v}(r)$ decreasing in $r$ and bounded near $r=0$. Besides, $r^{-2}\tilde{v}(r)\leq M^{**}$ for all $r\leq1$, where $M^{**}=M^{**}(L,\hat{C},M^{*},a_{0},N)>0$, the integer $N>1$ and $M^{*}, a_{0}, L, \hat{C}$ are the positive constants. Furthermore, we have the sharp asymptotic estimates
\bea
\tilde{v}(r)&=&br^{2}+O(r^{2+2N-\delta}),~~r\to 0,~~0<b<L\hat{C}{M^{*}}^{2},\label{4.43}\nn\\
\tilde{v}(r)&=&1+O(r^{-\beta}),~~r\to\infty,~~\beta>0.\label{4.44}\nn
\eea
\end{lemma}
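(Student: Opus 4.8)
The plan is to mirror, step by step, the single--parameter shooting analysis already carried out for the $u$--component in Lemmas \ref{lem.4.2}--\ref{lem.4.7}, with the monomial $r^{2}$ now playing the role that $r^{N}$ played there. The homogeneous operator $\tilde v\mapsto\tilde v''-\tilde v'/r$ can be written as $(\tfrac1r\tilde v')'=\tfrac1r u^{2}(\tilde v-1)\e^{\eta}$ and has the two linearly independent solutions $1$ and $r^{2}$; the condition $\tilde v(0)=0$ selects the $r^{2}$--branch, so I introduce a shooting parameter $b>0$ and recast \eqref{4.41} in the Volterra form
\[
\tilde v(r,b)=br^{2}+\int_{0}^{r}\frac{r^{2}-s^{2}}{2s}\,u^{2}(s)\bigl(\tilde v(s)-1\bigr)\e^{\eta(s)}\,\dd s ,
\]
which one checks differentiates back to \eqref{4.41}. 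First I would establish local existence and uniqueness near $r=0$ by Picard iteration exactly as in Lemma \ref{lem.4.2}: using $u(r)\le M^{*}r^{N}$ from Lemma \ref{lem.4.6} together with $\e^{\eta}=C_{1}r^{-\delta}$, the integrand is $O(s^{2N-\delta})$ near the origin, the kernel $\tfrac{r^{2}-s^{2}}{2s}$ contributes one factor of $r^{2}$, and the iteration contracts on a short interval $[0,\tau]$. Passing to the limit yields both the solution $\tilde v(r,b)$ and the sharp expansion $\tilde v(r,b)=br^{2}+O(r^{2+2N-\delta})$ as $r\to0$.

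Second, I would run the connectedness dichotomy. In analogy with $\mathcal A_{1},\mathcal A_{2},\mathcal A_{3}$, define the sets $\mathcal B_{1},\mathcal B_{2},\mathcal B_{3}$ of parameters $b>0$ for which, respectively, $\tilde v$ turns back below $1$, attains $1$ while still increasing, or stays in $(0,1)$ and increasing for all $r>0$; openness of $\mathcal B_{1}$ and $\mathcal B_{2}$ is immediate from continuous dependence on $b$. For small $b$, the limiting profile at $b=0$ satisfies $(\tfrac1r\tilde v')'=\tfrac1r u^{2}(\tilde v-1)\e^{\eta}<0$ while $\tilde v<1$, and since $\tfrac1r\tilde v'\to 2b=0$ at the origin this forces $\tilde v'(r,0)<0$; continuous dependence then places small $b$ in $\mathcal B_{1}$, as in Lemma \ref{lem.4.3}. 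For large $b$ I would rescale $t=b^{1/2}r$, $\tilde w(t)=\tilde v(r)$, turning \eqref{4.41} into
\[
\tilde w''(t)-\frac{\tilde w'(t)}{t}=\frac1b\,u^{2}(b^{-1/2}t)\,\e^{\eta(b^{-1/2}t)}\bigl(\tilde w(t)-1\bigr),
\]
whose right--hand side is $O\!\bigl(b^{-1-N+\delta/2}t^{2N-\delta}\bigr)\to0$ as $b\to\infty$, because $N>1$ and $0<\delta\le1$ make the exponent of $b$ strictly negative; the limit equation $\tilde w''-\tilde w'/t=0$ has the solution $\tilde w=t^{2}$, which crosses $1$ at $t=1$, so large $b$ forces an overshoot and $b\in\mathcal B_{2}$. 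Quantifying this threshold with $u\le M^{*}r^{N}$ gives $b\in\mathcal B_{2}$ whenever $b>L\hat C(M^{*})^{2}$, whence the selected parameter satisfies the bound $0<b<L\hat C(M^{*})^{2}$ recorded in the statement. Connectedness of $(0,\infty)=\mathcal B_{1}\cup\mathcal B_{2}\cup\mathcal B_{3}$ then yields some $b\in\mathcal B_{3}$, giving an increasing solution with $0<\tilde v<1$. To upgrade this to $\tilde v(\infty)=1$, I argue by contradiction in the spirit of Lemma \ref{lem.4.4}: if $\tilde v\to V_{0}<1$, then for large $r$ the forcing $-u^{2}\e^{\eta}(\tilde v-1)\sim C_{2}(1-V_{0})r^{-\delta}$ has a definite positive sign, which integrated against the monotonicity constraint $\tilde v'\ge0$ forces $\tilde v$ to leave $(0,1)$ — a contradiction.

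Finally I would read off the remaining properties. Uniqueness is easier here than for $u$: equation \eqref{4.41} is \emph{linear} in $\tilde v$, so the difference $V$ of two admissible solutions solves $V''-V'/r-u^{2}\e^{\eta}V=0$ with $V(0)=V(\infty)=0$, and since the zeroth--order coefficient $-u^{2}\e^{\eta}\le0$ the maximum principle (or the Sturm argument of Lemma \ref{lem.4.5}) forces $V\equiv0$. The bound $r^{-2}\tilde v(r)\le M^{**}$ for $r\le1$ and the monotonicity of $r^{-2}\tilde v$ near the origin follow by estimating the Volterra integral above with $u\le M^{*}r^{N}$ and the bound on $b$, exactly paralleling Lemma \ref{lem.4.6}, while the decay $\tilde v(r)=1+O(r^{-\beta})$ comes from the comparison function $\sigma(r)=Dr^{-\beta}$ and the maximum principle applied to $\tilde v-1$, as in Lemma \ref{lem.4.7}. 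I expect the genuinely delicate step to be the large--$b$ rescaling: one must control $u$ and the singular weight $\e^{\eta}=C_{1}r^{-\delta}$ uniformly on the shrinking interval $r\in[0,Cb^{-1/2}]$ so that the forcing term really does vanish in the limit and the comparison with $\tilde w=t^{2}$ is legitimate, and it is precisely the interplay of $N>1$ with $0<\delta\le1$ in the exponent $-1-N+\delta/2$ that makes this go through.
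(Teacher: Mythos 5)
Your proposal is correct and follows essentially the same route as the paper: the Volterra integral form solved by Picard iteration near $r=0$, the three-set shooting dichotomy in the parameter $b$ with the rescaling $t=b^{1/2}r$ for large $b$ and continuous dependence for openness, a sign/monotonicity contradiction to force $\tilde v(\infty)=1$, uniqueness from the linearity of \eqref{4.41} via the maximum principle, the integral estimate for the bound on $r^{-2}\tilde v$, and the comparison function $Dr^{-\beta}$ for the decay at infinity. The only substantive difference is cosmetic: your uniform control of the forcing on the rescaled interval is tracked through the exponent $-1-N+\delta/2$ rather than the paper's cruder bound $\hat C (M^{*})^{2}/b$, and both suffice.
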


First, we show that the existence and uniqueness of the local solution to the initial value problem
\begin{eqnarray}
\label{4.45}
\tilde{v}''(r)-\frac{\tilde{v}'(r)}{r}-u^{2}(r)(\tilde{v}(r)-1)\mathrm{e}^{\eta(r)}=0,~~\tilde{v}(0)=0,~~\forall r>0.
\end{eqnarray}
Noting that the leading--order part of \eqref{4.45} could be written as
\begin{eqnarray}
\label{4.46}
\tilde{v}''-\frac{\tilde{v}'}{r}=0,\,\,\,\,r>0\nn
\end{eqnarray}
for which the two linearly independent solutions are $r^{2}$ and $r^{0}$. Since $\tilde{v}(0)=0$, we can show that the solution of equation $\eqref{4.45}$ must satisfy $\tilde{v}(r)=O(r^{2})(r\rightarrow0)$. Furthermore, differential equation $\eqref{4.45}$ can be transformed into the integral form
\begin{eqnarray}
\label{4.47}
\tilde{v}(r,b)=br^{2}+\frac{1}{2}\int_0^r(r^{2}s^{-1}-s)\left[u^{2}(s)(\tilde{v}(s,b)-1)\mathrm{e}^{\eta(s)}\right]\dd s,
\end{eqnarray}
which can be solved by Picard iteration for a certain constant $b>0$, at least near $r=0$. Similar to the approach of Lemma \ref{lem.4.2}, we see that the initial value problem $\eqref{4.45}$ has a locally unique continuous solution
\be
\label{4.47a}
\tilde{v}(r,b)=br^{2}+O(r^{2+2N-\delta}),~~r\rightarrow0,
\ee
where $0<\delta\leq1$ and the integer $N>1$. Applying the continuous dependence of the solution on the parameters theorem, we know that the solution $\tilde{v}(r,b)$ depends continuously on the parameter $b$ which may be viewed as a shooting parameter in our method. A standard theorem on continuation of solutions assures us that the local solution can be extended to its maximum existence interval $[0,R_{b})$, where either $R_{b}=\infty$ or $\lim\limits_{r\rightarrow R_{b}^{-}}\tilde{v}(r)=\infty$.

With the analysis above, we are interested in $b>0$ and define
\begin{eqnarray*}
\mathcal{B}_{1}&=&\left\{b>0\mid\,\exists~\tilde{r}\in(0,R_{b})\,\,\mbox{such that}\,\,\tilde{v}'(\tilde{r},b)<0\,\,\mbox{and}\,\,0<\tilde{v}(r,b)<1, \forall r\in(0,\tilde{r}]\right\},\\[1mm]
\mathcal{B}_{2}&=&\left\{b>0\mid\,\exists~\hat{r}\in(0,R_{b})\,\,\mbox{such that}\,\,\tilde{v}(\hat{r},b)=1\,\,\mbox{and}\,\,\tilde{v}'(r,b)\geq0, \forall r\in(0,\hat{r}]\right\},\\[1mm]
\mathcal{B}_{3}&=&\left\{b>0\mid\,\tilde{v}'(r,b)\geq0\,\,\mbox{and}\,\,0<\tilde{v}(r,b)<1, \forall r>0\right\}.
\end{eqnarray*}
Clearly,
\begin{equation}
\label{4.48}\nn
\mathcal{B}_{1}\cup\mathcal{B}_{2}\cup\mathcal{B}_{3}=(0,\infty),~~\mathcal{B}_{1}\cap\mathcal{B}_{2}=\mathcal{B}_{2}\cap\mathcal{B}_{3}=\mathcal{B}_{3}\cap\mathcal{B}_{1}=\emptyset.
\end{equation}

Second, we prove the existence of the global solution to \eqref{4.41} satisfying $\tilde{v}(0)=0$ and $\tilde{v}(\infty)=1$.

\begin{lemma}\label{lem.4.9}
For given $v(r)$ and the associated function $u(r)$ in Lemma \ref{lem.4.1}, the sets $\mathcal{B}_{1},\mathcal{B}_{2}$ are both open and nonempty.
\end{lemma}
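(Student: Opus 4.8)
The plan is to follow the same topological shooting scheme used for the $u$--component in Lemma \ref{lem.4.3}: I will show that $\mathcal{B}_1$ contains all sufficiently small $b>0$ and that $\mathcal{B}_2$ contains all sufficiently large $b>0$, and then deduce that both sets are open from the continuous dependence of the local solution $\tilde v(r,b)$ of \eqref{4.45} on the shooting parameter $b$. A recurring tool will be the observation that $\tilde v\equiv 1$ solves the equation \eqref{4.41}, which is regular for $r>0$; by uniqueness for the associated Cauchy problem, a solution with $\tilde v(0)=0$ cannot touch the line $\tilde v=1$ tangentially, so wherever $\tilde v=1$ one necessarily has $\tilde v'\ne 0$. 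In the situation of $\mathcal{B}_2$, where $\tilde v'\ge 0$ up to the first hitting time, this yields the strict crossing $\tilde v'>0$ needed for openness, and it replaces the second--derivative sign argument available for the $u$--equation in Lemma \ref{lem.4.3}.

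For $\mathcal{B}_1$ I would first analyse the limiting profile $b=0$. Setting $b=0$ in the integral form \eqref{4.47} and using $\tilde v(s,0)\to 0$ as $s\to 0$, the factor $\tilde v(s,0)-1$ is negative, the weight $u^2(s)\e^{\eta(s)}$ is positive, and the kernel $r^2 s^{-1}-s=(r^2-s^2)/s$ is positive for $0<s<r$; hence the integrand is negative and $\tilde v(r,0)<0$ for small $r>0$. Differentiating \eqref{4.47} (the boundary contribution vanishes because the kernel is zero at $s=r$) yields $\tilde v'(r,0)=r\int_0^r s^{-1}u^2(s)(\tilde v(s,0)-1)\e^{\eta(s)}\dd s<0$ near the origin. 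Fixing $r_1$ in this range and invoking continuous dependence on $b$, I obtain $\tilde v(r_1,b)<0$ and $\tilde v'(r_1,b)<0$ for all small $b>0$. Meanwhile, from \eqref{4.47a} the solution starts as $\tilde v(r,b)=br^2+O(r^{2+2N-\delta})>0$ and increasing, while the Picard bound $|\tilde v(r,b)|\le 2br^2$ on the convergence interval $[0,\tau']$ (obtained exactly as in Lemma \ref{lem.4.2}) keeps $\tilde v<1$ there once $b$ is small. Taking $r_1\in(0,\tau')$, the solution therefore has an interior maximum $r_2\in(0,r_1)$ with $0<\tilde v(r_2,b)<1$, after which it decreases, giving $\tilde r\in(r_2,r_1]$ with $\tilde v'(\tilde r,b)<0$ and $0<\tilde v(r,b)<1$ on $(0,\tilde r]$. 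Hence small $b$ lie in $\mathcal{B}_1$.

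For $\mathcal{B}_2$ I would rescale by $t=b^{1/2}r$, $\tilde w(t,b)=\tilde v(r,b)$, the scaling compatible with $\tilde v\sim br^2\sim t^2$. Equation \eqref{4.45} becomes
\[
\tilde w''-\frac{\tilde w'}{t}-\frac{1}{b}\,u^2(b^{-1/2}t)\,(\tilde w-1)\,\e^{\eta(b^{-1/2}t)}=0,
\]
with $\tilde w\sim t^2$ as $t\to 0$. On the relevant range $r=b^{-1/2}t\le 1$ the bounds $u(r)\le M^*r^N$ and $\e^{\eta(r)}=C_1r^{-\delta}$ from Lemma \ref{lem.4.1} give $b^{-1}u^2(b^{-1/2}t)\e^{\eta(b^{-1/2}t)}\le (M^*)^2C_1\,b^{-1-N+\delta/2}t^{2N-\delta}$, whose prefactor tends to $0$ as $b\to\infty$ because $N>1$ and $0<\delta\le 1$. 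Consequently the rescaled equation converges, uniformly on compact $t$--intervals, to the Bessel--type equation $\tilde w''-\tilde w'/t=0$, whose solution with $\tilde w\sim t^2$ is precisely $\tilde w(t)=t^2$, which crosses the value $1$ while strictly increasing. By continuous dependence this crossing persists for large $b$, so $\tilde v(r,b)$ attains the value $1$ while remaining increasing on the way up; hence large $b$ lie in $\mathcal{B}_2$.

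Finally, openness of both sets follows from continuous dependence of $(\tilde v,\tilde v')$ on $b$ over compact $r$--intervals together with the tangency obstruction of the first paragraph: for $b\in\mathcal{B}_1$ the strict inequalities $\tilde v'(\tilde r,b)<0$ and $0<\tilde v<1$ on $[\varepsilon,\tilde r]$ are stable under small perturbations of $b$ (positivity near $0$ being guaranteed by the leading $br^2$ term), and for $b\in\mathcal{B}_2$ the crossing $\tilde v(\hat r,b)=1$ with $\tilde v'(\hat r,b)>0$ is likewise stable. I expect the main obstacle to be the large--$b$ step: one must make the convergence of the rescaled equation to $\tilde w''-\tilde w'/t=0$ rigorous despite the singular gravitational weight $\e^{\eta}\sim C_1 r^{-\delta}$, controlling the nonlinear term uniformly in $b$ and ensuring that the transversal crossing of $1$ by the limiting profile $t^2$ is inherited by the perturbed problem.
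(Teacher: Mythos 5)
Your proposal is correct and follows essentially the same route as the paper: small $b$ are placed in $\mathcal{B}_1$ by comparing with the $b=0$ profile via the integral form \eqref{4.47} and continuous dependence, large $b$ are placed in $\mathcal{B}_2$ by the rescaling $t=b^{1/2}r$ and the limit equation $\hat v''-\hat v'/t=0$, and openness follows from continuous dependence on $b$. Your tangency argument against touching $\tilde v=1$ at a critical point is the same uniqueness-based observation the paper uses to exclude $\tilde v(r_2,b)=1$, and your explicit bound $b^{-1-N+\delta/2}t^{2N-\delta}$ on the nonlinear term is just a sharper form of the paper's $\hat C (M^*)^2/b$ estimate.
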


\begin{proof} Firstly, we show that $\mathcal{B}_{1}$ contains small $b$. Indeed, there is a $\kappa$ such that $0\leq \tilde{v}(r,b)<1$ for all $r\in[0,\kappa]$. It is obvious that $\kappa$ is positive as $b\rightarrow0$ and we obtain $\tilde{v}(r,b)<0,\,\tilde{v}'(r,b)<0$ for all $r\in(0,\kappa)$ as $b=0$. Hence, for given $r_1\in(0,\kappa)$, applying the continuous dependence of solution on the parameter $b$, there exists a small $\varepsilon>0$ so that $\tilde{v}(r_1,b)<0,\,\tilde{v}'(r_1,b)<0$ for all $b\in(0,\varepsilon)$. Besides, for given $b>0$, from $\eqref{4.47}$, we have $\tilde{v}(r,b)>0,\,\tilde{v}'(r,b)>0$ initially. Therefore, there exists a $r_2\in(0,r_1)\subset(0,\kappa)$ such that $\tilde{v}(r_2,b)$ is a maximum. Noting that $\tilde{v}(r_2,b)=1$ cannot happen. If otherwise, in view of \eqref{4.41}, we get $\tilde{v}''(r_{2},b)=0$ which implies $\tilde{v}(r_{2},b)\equiv1$, contradicting \eqref{4.47a}. Then there is a $\tilde{r}\in(r_2,\kappa]\subset(0,\kappa)$ such that $\tilde{v}'(\tilde{r},b)<0$ and $0<\tilde{v}(r,b)<1$ for all $r\in(0,\tilde{r}]$. Thus, $(0,\varepsilon)\subset\mathcal{B}_{1}$ is nonempty. By the continuous dependence of the solution on the parameter $b$, we see that $\mathcal{B}_{1}$ is open.

Secondly, we prove that $\mathcal{B}_{2}$ contains large $b$. To show this, make the change of scale $t=b^{\frac{1}{2}}r$ and $\hat{v}(t,b)=\tilde{v}(r,b)$ in \eqref{4.41} so that
\begin{eqnarray}
\label{4.50}
\hat{v}''(t,b)-\frac{\hat{v}'(t,b)}{t}=\frac{u^{2}(b^{-\frac{1}{2}}t)(\hat{v}(t,b)-1)\mathrm{e}^{\eta(b^{-\frac{1}{2}}t)}}{b},
\end{eqnarray}
for all $t\in(0,b^{\frac{1}{2}}R_{b})$, with the initial condition $\hat{v}(t,b)\sim t^{2}$ as $t\rightarrow0$. In view of
\begin{eqnarray}
0\leq\hat{v}(t,b)<1,~~~~~t\in[0, \kappa b^{\frac{1}{2}}]\nn
\end{eqnarray}
and the fact that $0<u(r)\leq M^{*}r^{N}$ for all $r\leq 1$, we have
\begin{eqnarray}
\label{4.51}
\left|\frac{u^{2}(b^{-\frac{1}{2}}t)(\hat{v}(t,b)-1)\mathrm{e}^{\eta(b^{-\frac{1}{2}}t)}}{b}\right|\leq\frac{\hat{C}{M^{*}}^{2}}{b},~~\forall t\in[0, \kappa b^{\frac{1}{2}}],
\end{eqnarray}
where $\hat{C}, M^{*}>0$ are constants. Let $b\rightarrow\infty$, using \eqref{4.51}, then the equation \eqref{4.50} is equal to
\begin{eqnarray}
\label{4.52}
\hat{v}''(t,b)-\frac{\hat{v}'(t,b)}{t}=0,
\end{eqnarray}
with $\hat{v}(t)\sim t^{2}$ as $t\rightarrow0$. It is clear that $\hat{v}(t)=t^{2}$ is a solution of \eqref{4.52}. Hence $\hat{v}(t_{0})>1$ when $t_{0}\to 1^+$ and $\hat{v}'(t)>0$ for all $t>0$. Thus, $\mathcal{B}_{2}$ is nonempty. Moreover, there is some sufficiently large positive constant $L$ independent of the choice of $\hat{v}(r)$ such that when
\be\label{4.53}
\frac{\hat{C}{M^{*}}^{2}}{b}<\frac{1}{L},\nn
\ee
we have $b\in\mathcal{B}_{2}$. The openness of $\mathcal{B}_{2}$ can be inferred from the continuous dependence of the solution on the parameter $b$.
\end{proof}

By connectedness, there exists a $b\in\mathcal{B}_{3}$, say, $b_{0}$, such that $\tilde{v}(r,b)$ satisfies conditions $\tilde{v}'(r,b_0)\geq0$ and $0<\tilde{v}(r,b_0)<1$ for all $r>0$, and we have
\begin{equation}
\label{4.55}
b_{0}\leq L\hat{C}{M^{*}}^{2}.
\end{equation}

\begin{lemma}\label{lem.4.10}
For given $v(r)$ and the associated function $u(r)$ in Lemma \ref{lem.4.1}. If $b_{0}\in\mathcal{B}_{3}$, then $\tilde{v}(r,b_{0})$ is a solution to \eqref{4.41} satisfying $\tilde{v}(\infty,b_0)=1$.
\end{lemma}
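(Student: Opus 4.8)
The plan is to exploit the monotonicity built into the definition of $\mathcal{B}_3$ together with the explicit spatial decay of the gravitational factor. Since $b_0\in\mathcal{B}_3$, the solution $\tilde v(\cdot,b_0)$ satisfies $\tilde v'(r,b_0)\ge 0$ and $0<\tilde v(r,b_0)<1$ for all $r>0$; being increasing and bounded above it therefore possesses a limit $V_0:=\lim_{r\to\infty}\tilde v(r,b_0)\in(0,1]$. The entire content of the lemma is thus to rule out $V_0<1$. The convenient reformulation is to write \eqref{4.41} in the divergence-type form
\be\nn
\frac{\dd}{\dd r}\!\left(\frac{\tilde v'}{r}\right)=\frac{1}{r}\,u^2(\tilde v-1)\,\e^{\eta},
\ee
which follows from the identity $\tilde v''-\tilde v'/r=r\,\frac{\dd}{\dd r}(\tilde v'/r)$; this turns the left-hand operator into an exact derivative and makes the sign of the forcing transparent.

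Assume, for contradiction, that $V_0<1$. By Lemma \ref{lem.4.4} we have $u(r)\to1$, and by the ansatz $\e^{\eta}=C_2 r^{-\delta}$ for $r\gg1$, while $\tilde v-1\to V_0-1<0$. Hence the right-hand side above is negative for all large $r$, so $\tilde v'/r$ is eventually decreasing; being nonnegative it converges to some $\ell\ge 0$. If $\ell>0$ then $\tilde v'(r)\ge\frac{\ell}{2}r$ for large $r$, forcing $\tilde v\to\infty$ and contradicting $\tilde v<1$; therefore $\ell=0$, i.e. $\tilde v'/r\to0$.

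With $\tilde v'/r\to0$ in hand, I would integrate the displayed identity from $r$ to $\infty$. Since the integrand behaves like $(V_0-1)C_2 s^{-1-\delta}$, which is integrable at infinity, this yields
\be\nn
-\frac{\tilde v'(r)}{r}=\int_r^\infty\frac{1}{s}\,u^2(\tilde v-1)\,\e^{\eta}\,\dd s\sim (V_0-1)\,\frac{C_2}{\delta}\,r^{-\delta},
\ee
so that $\tilde v'(r)\sim (1-V_0)\,\frac{C_2}{\delta}\,r^{1-\delta}>0$. Because $0<\delta\le1$ we have $1-\delta\ge0$, hence $\tilde v'(r)$ does not tend to zero; indeed $\int^\infty \tilde v'(r)\,\dd r=\infty$, which again forces $\tilde v\to\infty$ and contradicts $\tilde v<1$. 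Consequently $V_0=1$, which is the asserted boundary behaviour $\tilde v(\infty,b_0)=1$.

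The main obstacle is the asymptotic bookkeeping at infinity. In contrast to the $u$-component in Lemma \ref{lem.4.4}, where the substitution $F=r^{1/2}u$ led to an oscillatory equation and hence to a contradiction with monotonicity, the substitution $w=\tilde v-1$ here produces the \emph{non-oscillatory} equation $w''-w'/r=u^2\e^{\eta}w$, the coefficient $u^2\e^{\eta}$ being positive, so no oscillation contradiction is available and one must instead read off the growth of $\tilde v'$ from the equation itself. The delicate points are justifying $\tilde v'/r\to0$ \emph{before} integrating and then evaluating the tail integral with the precise weight $\e^{\eta}=C_2 r^{-\delta}$. The borderline exponent $\delta=1$ is the worst case, since there $\tilde v'$ tends only to a positive constant rather than growing; nevertheless $\int^\infty\dd r=\infty$ still contradicts $\tilde v<1$, so the conclusion is uniform over $0<\delta\le1$.
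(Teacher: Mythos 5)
Your argument is correct and follows essentially the same route as the paper: assume $V_0=\lim_{r\to\infty}\tilde v(r,b_0)<1$, feed the asymptotics $u\to1$ and $\e^{\eta}\sim C_2 r^{-\delta}$ into the equation, integrate, and contradict the monotonicity and boundedness of $\tilde v$. In fact your version is the more careful one: the paper integrates via the (incorrect for this operator) combination $(r\tilde v')'$ rather than $(\tilde v'/r)'=\frac{1}{r}u^2(\tilde v-1)\e^{\eta}$, whereas you use the right integrating factor, justify $\tilde v'/r\to0$ before integrating the tail, and correctly handle the borderline case $\delta=1$.
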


\begin{proof} For $b_{0}\in\mathcal{B}_{3}$, we can get $\lim\limits_{r\rightarrow\infty}\tilde{v}(r,b_{0})\equiv V_{0}$. We need only to prove $V_{0}=1$. If otherwise $V_{0}<1$, then \eqref{4.41} yields $(r\tilde{v}'(r,b_{0}))'=C_{2}r^{1-\delta}(V_{0}-1+o(1))$ which implies that
\be
\tilde{v}'(r,b_{0})=C_{2}\left[\frac{1}{2-\delta}r^{1-\delta}(V_{0}-1)+o(r^{1-\delta})\right],~~r\rightarrow\infty,\nn
\ee
where $0<\delta\leq1$ and $C_{2}$ is a positive constant. This is impossible since $\tilde{v}'(r,b_{0})>0$ for all $r>0$. Therefore $V_{0}=1$.
\end{proof}

In the following, some properties related to the solution $\tilde{v}(r,b_{0})$ obtained in Lemma \ref{lem.4.8}--\ref{lem.4.10} are given.

\begin{lemma}\label{lem.4.11}
If $b_{0}\in\mathcal{B}_{3}$, then the solution $\tilde{v}(r,b_{0})$ of \eqref{4.41} satisfying $\tilde{v}(0,b_0)=0$ and $\tilde{v}(\infty,b_0)=1$ is unique.
\end{lemma}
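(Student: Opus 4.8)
The plan is to establish uniqueness of the solution $\tilde{v}(r,b_0)$ to the boundary value problem \eqref{4.41} with $\tilde{v}(0)=0,\ \tilde{v}(\infty)=1$ by the same comparison strategy used in Lemma \ref{lem.4.5}, but adapted to the structure of the $\tilde{v}$-equation. First I would argue by contradiction: suppose $\tilde{v}_1(r)$ and $\tilde{v}_2(r)$ are two distinct solutions satisfying the same boundary conditions, and set $W(r)=\tilde{v}_2(r)-\tilde{v}_1(r)$. Since both solutions meet the conditions at the endpoints, $W$ satisfies $W(0)=W(\infty)=0$. Subtracting the two copies of \eqref{4.41}, the nonlinear term $u^2(\tilde{v}-1)\e^{\eta}$ is \emph{linear} in $\tilde{v}$, so the difference is especially clean: $W$ satisfies the linear homogeneous equation
\be\nn
W''-\frac{W'}{r}-u^2(r)\e^{\eta(r)}W=0,~~r\in(0,\infty).
\ee

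The key structural observation is that here the coefficient $u^2(r)\e^{\eta(r)}$ is \emph{nonnegative}, which makes this an even more favorable situation than in Lemma \ref{lem.4.5} (where the sign of the zeroth-order coefficient required the Sturm comparison theorem to handle). The plan is to exploit this sign directly via a maximum-principle argument. Writing the operator in self-adjoint form, note that $W''-\frac{W'}{r}=r\,\frac{d}{dr}\!\left(\frac{1}{r}W'\right)$, so multiplying the equation by $\frac{1}{r}$ and rearranging gives $\frac{d}{dr}\!\left(\frac{1}{r}W'\right)=\frac{1}{r}u^2\e^{\eta}W$. I would then suppose $W$ is not identically zero; since $W$ vanishes at both ends, it attains a nonzero interior extremum, say a positive maximum at some $r_*\in(0,\infty)$. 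At $r_*$ one has $W(r_*)>0$, $W'(r_*)=0$, and $W''(r_*)\leq 0$, but the equation forces $W''(r_*)=u^2(r_*)\e^{\eta(r_*)}W(r_*)\geq 0$; combined with $W(r_*)>0$ and the fact that $u^2\e^\eta>0$ for $r>0$ this yields $W''(r_*)>0$, a contradiction. The symmetric argument rules out a negative interior minimum, forcing $W\equiv 0$.

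Alternatively, to match the style of Lemma \ref{lem.4.5} precisely, I would invoke the Sturm comparison theorem from \cite{Wa}: comparing the equation for $W$ against the equation satisfied by $\tilde{v}_1$ (or against a subsolution), the nonnegative coefficient guarantees that $W$ cannot oscillate between its two zeros at $0$ and $\infty$ unless it is trivial, again yielding $\tilde{v}_1\equiv\tilde{v}_2$. The main obstacle I anticipate is not conceptual but technical: one must verify that $W$ and its derivative are well enough behaved at the two singular endpoints $r=0$ and $r=\infty$ for the extremum/comparison argument to apply. Near $r=0$ this is controlled by the local expansion $\tilde{v}(r,b)=br^2+O(r^{2+2N-\delta})$ from \eqref{4.47a}, which shows $W=O(r^2)\to 0$ with $W'\to 0$; near infinity the decay $\tilde{v}(r)=1+O(r^{-\beta})$ ensures $W\to 0$. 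Care is needed because the interior maximum could a priori escape to the boundary, so I would confirm that the $O(r^2)$ behavior at the origin and the decay at infinity genuinely preclude the extremum from being attained only in the limit. Once these boundary behaviors are in hand, the uniqueness follows immediately.
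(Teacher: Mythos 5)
Your proposal is correct and follows essentially the same route as the paper: the paper likewise forms the difference $V=\tilde{v}_1-\tilde{v}_2$, notes that it satisfies the linear homogeneous equation $V''-\frac{V'}{r}=u^2\e^{\eta}V$ with $V(0)=V(\infty)=0$, and concludes $V\equiv 0$ by the maximum principle. Your write-up simply makes explicit the interior-extremum argument and the endpoint behavior that the paper leaves implicit.
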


\begin{proof} Assume that there are two different solutions $\tilde{v}_{1}(r)$ and $\tilde{v}_{2}(r)$. Set $V(r)=\tilde{v}_{1}(r)-\tilde{v}_{2}(r)$, then $V(r)$ satisfies the boundary conditions $V(0)=V(\infty)=0$ and the equation
\begin{eqnarray}
\label{4.54}\nn
V''(r)-\frac{V'(r)}{r}=u^{2}(r)\mathrm{e}^{\eta(r)}V(r),~~\forall r\in(0,\infty).
\end{eqnarray}
So the maximum principle implies that $V(r)\equiv0$, $\forall r\geq0$.
\end{proof}

\begin{lemma}\label{lem.4.12}
If $b_{0}\in\mathcal{B}_{3}$, then
\begin{eqnarray*}
|r^{-2}\tilde{v}(r,b_{0})|\leq M^{**},~~\forall r\leq1,
\end{eqnarray*}
where $M^{**}=M^{**}(L,\hat{C},M^{*},a_{0},N)$ is a positive constant. Besides, we have $\tilde{v}'(0)=0$.
\end{lemma}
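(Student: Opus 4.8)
The plan is to mirror the proof of Lemma \ref{lem.4.6}, which established the analogous $r^{-N}$ bound for the $u$--component, working directly from the integral representation \eqref{4.47} of $\tilde v(r,b_0)$. Three quantitative inputs are now available and suffice: the a priori bound $b_0\le L\hat C(M^{*})^{2}$ from \eqref{4.55}; the characterization of $b_0\in\mathcal B_3$, which gives $0\le\tilde v(r,b_0)<1$ and hence $|\tilde v(s,b_0)-1|\le1$; and the sharp control $u(r)\le M^{*}r^{N}$ for $r\le1$ coming from Lemma \ref{lem.4.6}. Together with the small--$r$ form $\e^{\eta(s)}=C_1 s^{-\delta}$ these convert \eqref{4.47} into an elementary singular integral estimate.

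First I would divide \eqref{4.47} by $r^{2}$ and take absolute values. For $0<s<r$ the kernel is nonnegative and satisfies $0\le r^{2}s^{-1}-s\le r^{2}s^{-1}$, so after cancelling the factor $r^{2}$ the integrand is dominated by $\tfrac12 s^{-1}u^{2}(s)|\tilde v(s,b_0)-1|\e^{\eta(s)}$. On $r\le1$ one inserts $u(s)\le M^{*}s^{N}$, $|\tilde v-1|\le1$ and $\e^{\eta(s)}=C_1 s^{-\delta}$, so that $u^{2}(s)|\tilde v-1|\e^{\eta(s)}\le\hat C(M^{*})^{2}s^{2N-\delta}$, where $\hat C$ absorbs the constant $C_1$ exactly as in \eqref{4.51}. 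The integrand is therefore bounded by $\tfrac12\hat C(M^{*})^{2}s^{2N-1-\delta}$.

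Since $N$ is an integer with $N>1$ and $0<\delta\le1$, the exponent satisfies $2N-\delta>0$, which guarantees that the singular kernel remains integrable at $s=0$ and that $\int_0^r s^{2N-1-\delta}\,\dd s=r^{2N-\delta}/(2N-\delta)$ stays bounded by $1/(2N-\delta)$ on $r\le1$. Combining this with the bound on $b_0$ yields
\be\nn
\left|\frac{\tilde v(r,b_0)}{r^{2}}\right|\le b_0+\frac{\hat C(M^{*})^{2}}{2(2N-\delta)}\le L\hat C(M^{*})^{2}+\frac{\hat C(M^{*})^{2}}{2(2N-\delta)}\equiv M^{**},\qquad\forall r\le1,
\ee
which is the claimed estimate, with $M^{**}$ depending only on $L,\hat C,M^{*},N$ and hence on $a_0$ through $M^{*}$. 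In particular $\tilde v(r,b_0)=O(r^{2})$ as $r\to0$, so $\tilde v'(0)=0$ follows immediately, either by differentiating \eqref{4.47} under the integral sign or directly from the quadratic vanishing.

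This argument is essentially a bookkeeping exercise once \eqref{4.47} and the bound $u(r)\le M^{*}r^{N}$ are in hand; the only points requiring genuine attention are checking that $2N-\delta>0$ so the singular kernel stays integrable, and verifying that the constants truly collapse to the advertised dependence. Both are secured by $N>1$ and $\delta\le1$, so I expect no real analytic obstacle here.
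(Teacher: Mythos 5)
Your proposal is correct and follows essentially the same route as the paper: bound $|r^{-2}\tilde v(r,b_0)|$ directly from the integral representation \eqref{4.47} using $b_0\le L\hat C (M^*)^2$, $|\tilde v-1|\le 1$, the power-law form of $\e^{\eta}$, and the $O(s^N)$ vanishing of $u$ near the origin, then deduce $\tilde v'(0)=0$ from the quadratic vanishing. The only cosmetic difference is that you bound $u(s)\le M^{*}s^{N}$ where the paper uses $u(s)\lesssim a_0 s^{N}$, which merely shifts which of the admissible constants appears in $M^{**}$.
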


\begin{proof} In view of \eqref{4.47}, \eqref{4.55} and Lemma \ref{lem.4.1}, we arrive at
\begin{eqnarray}
\label{4.56}
\left|\frac{\tilde{v}(r,b_0)}{r^{2}}\right|
&\leq&b_{0}+\frac{1}{2}\int_0^r s^{-1}\left|u^{2}(s)(\tilde{v}(s,b_0)-1)\mathrm{e}^{\eta(s)}\right|\dd s\notag\\[1mm]
&\leq&b_{0}+a_{0}^{2}\int_0^r s^{2N-1-\delta}\dd s\nn\\[2mm]
&\leq&L\hat{C}{M^{*}}^{2}+\frac{a_{0}^{2}}{2N-1}\equiv M^{**},~~\forall r\leq1.\nn
\end{eqnarray}
It is easy to see that $r^{-2}\tilde{v}(r)$ decreasing near $r=0$. Consequently, we have $\tilde{v}'(0)=0$.
\end{proof}

\begin{lemma}\label{lem.4.13}
If $b_{0}\in\mathcal{B}_{3}$, then we have the sharp estimate
\be\label{4.57}
\tilde{v}(r,b_{0})=1+O(r^{-\beta}),~~r\rightarrow\infty,\nn
\ee
where $\beta$ is a positive constant.
\end{lemma}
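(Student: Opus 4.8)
The plan is to reproduce, for the $\tilde v$--component, the comparison-function argument already used for the $u$--component in Lemma~\ref{lem.4.7}. Set $\psi(r)=\tilde v(r,b_0)-1$. Since $b_0\in\mathcal{B}_3$ we have $0<\tilde v(r,b_0)<1$ for all $r>0$, while Lemma~\ref{lem.4.10} gives $\tilde v(\infty,b_0)=1$; hence $\psi<0$ throughout and $\psi(r)\to0$ as $r\to\infty$. Substituting into \eqref{4.41}, $\psi$ solves the homogeneous linear equation
\be
\psi''-\frac{\psi'}{r}-u^{2}(r)\e^{\eta(r)}\psi=0,\qquad r>0,\nn
\ee
so it suffices to produce an upper bound $-\psi(r)\le D r^{-\beta}$ for some constant $D>0$ and any prescribed $\beta>0$.

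Recall that $u(r)\to1$ as $r\to\infty$ (Lemma~\ref{lem.4.4}) and that $\e^{\eta(r)}=C_2 r^{-\delta}$ for $r\gg1$ with $0<\delta\le1$, so the coefficient $c(r):=u^{2}(r)\e^{\eta(r)}$ is strictly positive on $(0,\infty)$ and satisfies $c(r)\sim C_2 r^{-\delta}$. I would then take the comparison function $\sigma(r)=D r^{-\beta}$ with $D>0$, for which a direct computation gives $\sigma''-\sigma'/r=\beta(\beta+2)Dr^{-\beta-2}$. Forming $w=\psi+\sigma$ and using the equation for $\psi$, one obtains
\be
w''-\frac{w'}{r}-c(r)w=\beta(\beta+2)Dr^{-\beta-2}-c(r)\sigma(r).\nn
\ee
Because $c(r)\sigma(r)\sim C_2 D\,r^{-\delta-\beta}$ and $-\delta-\beta>-\beta-2$ (as $\delta<2$), the negative second term dominates the positive first term for large $r$; consequently there is $r_1>0$ with $w''-w'/r-c(r)w<0$ on $[r_1,\infty)$.

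Finally I would invoke the maximum principle on the half-line, exactly as for \eqref{4.40}. For the fixed $r_1$ above, choose $D$ large enough that $w(r_1)=\psi(r_1)+Dr_1^{-\beta}\ge0$. Since $w(r_1)\ge0$ and $w(r)\to0$ as $r\to\infty$, a negative infimum of $w$ on $[r_1,\infty)$ would have to be attained at an interior point $r^{\ast}$, where $w'(r^{\ast})=0$ and $w''(r^{\ast})\ge0$, so that $w''(r^{\ast})-w'(r^{\ast})/r^{\ast}\ge0$; but the differential inequality forces this quantity to be strictly less than $c(r^{\ast})w(r^{\ast})<0$ (using $c>0$ and $w(r^{\ast})<0$), a contradiction. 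Hence $w\ge0$, that is $1-\tilde v(r,b_0)\le D r^{-\beta}$, which together with $\tilde v<1$ yields $\tilde v(r,b_0)=1+O(r^{-\beta})$ as claimed. The only delicate point is the sign analysis of the perturbed operator applied to $\sigma$ together with the verification that the negative infimum cannot escape to infinity; both are handled in the same way as in the proof of Lemma~\ref{lem.4.7}, and the relation $\delta<2$ guarantees the argument goes through for every $\beta>0$.
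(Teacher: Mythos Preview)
Your proof is correct and follows essentially the same route as the paper: the substitution $\psi=\tilde v-1$ (the paper calls it $\nu$), the comparison function $\sigma(r)=Dr^{-\beta}$, the resulting differential inequality $(\psi+\sigma)''-(\psi+\sigma)'/r\le u^{2}\e^{\eta}(\psi+\sigma)$ for large $r$, and the maximum principle on $[r_1,\infty)$ with a large enough $D$. Your version is simply more explicit about the computation $\sigma''-\sigma'/r=\beta(\beta+2)Dr^{-\beta-2}$ and about why the interior-minimum argument goes through, which the paper leaves implicit.
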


\begin{proof} Set $\nu(r)=\tilde{v}(r)-1$, we may convert $\eqref{4.41}$ into
\be
\label{4.58}
\nu''(r)-\frac{\nu'(r)}{r}-u^{2}(r)\nu\mathrm{e}^{\eta(r)}=0.\nn
\ee
Taking the comparison function
\be
\sigma(r)=D_{2}r^{-\beta},~~D_{2}>0,~~\beta>0,\nn
\ee
then there exists a sufficiently large $r_{1}>0$ such that
\begin{eqnarray}
\label{4.60}
(\nu+\sigma)''-\frac{(\nu+\sigma)'}{r}\leq u^{2}\mathrm{e}^{\eta(r)}(\nu+\sigma),~~r>r_{1}.\nn
\end{eqnarray}
We can choose $D_{2}>0$ large enough to make $(\nu+\sigma)(r_{1})\geq0$. According to the boundary condition $(\nu+\sigma)(r)\rightarrow0~(r\rightarrow\infty)$, we see $-D_{2}r^{-\beta}\leq\nu<0~(r>r_{1})$ by applying the maximum principle theorem. Therefore, the decay estimate for $\tilde{v}(r,b_{0})$ at infinity is established.
\end{proof}

\subsection{Proof of Theorem \ref{th4.1}}

We are now prepared to prove Theorem \ref{th4.1}. That is, we are going to use the Schauder fixed point theorem to establish the existence of the symmetric vortex solutions arising from the magnetic Ginzburg--Landau theory in curved space.

Define a Banach space $\mathscr{X}$
\be\nn
\mathscr{X}=\left\{v(r)\,|\,v(r)\in C([0,+\infty)),r^{-k}\left(1+r^{k}\right)v(r)\,\mbox{is bounded},\,0<k<2\right\}
\ee
with the norm
\begin{eqnarray*}
\left\|v(r)\right\|_{\mathscr{X}}=\sup_{r\in[0,+\infty)}\left|r^{-k}\left(1+r^{k}\right)v(r)\right|,
\end{eqnarray*}
and the non--empty, bounded, closed, convex subset $\mathcal{S}$ of $\mathscr{X}$
\begin{eqnarray*}
\mathcal{S}=\big\{v(r)\in\mathscr{X}\,\big|~\,|r^{-2}v(r)|\leq \hat{M},\,\forall0<r\leq1;\,v'(r)\geq0,\,0<v(r)\leq1,\,\forall r>0;\,v(\infty)=1\},
\end{eqnarray*}
where $\hat{M}=\max\{M, M^{**}\}$.

The process of the two--step shooting argument gives us a mapping $T{\rm:}\,v\rightarrow \tilde{v}$ on $\mathcal{S}$ which takes $\mathcal{S}$ into itself by applying Lemma \ref{lem.4.1} and \ref{lem.4.8}. In the following, we will show the mapping $T$ is compact and continuous.

To prove the compactness of $T$, we will demonstrate that if $\left\{v_{n}(r)\right\}$ is an arbitrary bounded sequence in $\mathcal{S}$, then $\{\tilde{{v}_{n}}(r)\}$ have a convergent sub--sequence in $\mathcal{S}$. For any $\left\{v_{n}(r)\right\}$ in $\mathcal{S}$,
using Lemma \ref{lem.4.8} and $\{\tilde{{v}_{n}}(r)\}\subset\mathcal{S}$, we see
\begin{eqnarray*}
{\tilde{{v}_{n}}}'(r)=2b_{0}r+\int_0^r rs^{-1}\left[u_{n}^{2}(s)(\tilde{v}_{n}(s)-1)\mathrm{e}^{\eta(s)}\right]\dd s\nn
\end{eqnarray*}
is bounded on any closed subinterval of $(0,+\infty)$. That is, there exists a constant $P>0$ independent of $n$ such that $|\tilde{v_{n}}'(r)|\leq P$ on any closed subinterval of $(0,+\infty)$. By the mean value theorem, for any $\varepsilon>0,\,r_{1},r_{2}\in[\gamma,R]\subset(0,+\infty)$, there is a $l=\frac{\varepsilon}{P+1}>0$ so that $|\tilde{v_{n}}(r_{1})-\tilde{v_{n}}(r_{2})|
=|\tilde{v_{n}}'(\xi)||r_{1}-r_{2}|<\varepsilon$ when $|r_{1}-r_{2}|<l$, where $\xi$ lies between $r_{1}$ and $r_{2}$. Therefore, the equicontinuity of $\{\tilde{v_{n}}(r)\}$ is established. It is clear that $\{\tilde{v_{n}}(r)\}$ is uniformly bounded in view of $\{\tilde{v_{n}}(r)\}\subset\mathcal{S}$. For any $\left\{v_{n}(r)\right\}$ in $\mathcal{S}$, applying the Arzela--Ascoli theorem, there exists a subsequence of $\{\tilde{v_{n}}(r)\}$~$($ still denoted as $\{\tilde{v_{n}}(r)\})$, which uniformly converges in any compact subinterval of $(0,+\infty)$ $($denoted as $[\gamma,R])$. We may assume that $\tilde{v_{n}}(r)\rightarrow\tilde{v}(r)$ uniformly over $[\gamma,R]\subset(0,+\infty)$, thus
\begin{eqnarray}
\label{4.61}
\sup_{r\in[\gamma,R]}\left|r^{-k}(1+r^{k})(\tilde{v_{n}}(r)-\tilde{v}(r))\right|
\leq\left(1+\gamma^{-k}\right)\sup_{r\in[\gamma,R]}
\left|\tilde{v_{n}}(r)-\tilde{v}(r)\right|
\rightarrow0,~~n\rightarrow\infty.\nn
\end{eqnarray}
Then the only question is whether for the limit functions, say $\tilde{v}(r)$, it is true that
\begin{equation*}
||\tilde{v_{n}}(r)-\tilde{v}(r)||_{\mathscr{X}}\rightarrow0,~~n\rightarrow\infty.
\end{equation*}
As $r\rightarrow0$, since $\{\tilde{v}_n(r)\}\subset\mathcal{S}$ and the set $\mathcal{S}$ is closed, we have $\tilde{v}_n(r)\leq r^{2} \hat{M},$\,$\tilde{v}(r)\leq r^{2} \hat{M}$ for all $0<r\leq1$. Furthermore, for given $\varepsilon>0$, let $\gamma>0$ small enough so that $4\hat{M}\gamma^{2-k}<\varepsilon$, then
\begin{eqnarray*}
\sup_{r\in[0,\gamma)}|r^{-k}(1+r^{k})(\tilde{v}_n(r)-\tilde{v}(r))|<2\sup_{r\in[0,\gamma)}|r^{2-k}r^{-2}(\tilde{v}_n(r)-\tilde{v}(r))|<4\hat{M}\gamma^{2-k}<\varepsilon.
\end{eqnarray*}
With $\gamma>0$ fixed, we can find $n$ large enough that
\begin{eqnarray}
\label{4.62}
\sup_{r\in[0,\gamma)}\left|r^{-k}(1+r^{k})(\tilde{v}_n(r)-\tilde{v}(r))\right|<\varepsilon.\nn
\end{eqnarray}
As $r\rightarrow+\infty$, in view of the set $\mathcal{S}$ is closed and the asymptotic estimate in Lemma \ref{lem.4.13}, we get $|\tilde{v}_n(r)-\tilde{v}(r)|\rightarrow0$ as $r\rightarrow\infty$. Thus, for $R>0$ sufficiently large, we obtain $\sup_{r\in(R,+\infty)}|r^{-k}(1+r^{k})(\tilde{v}_n(r)-\tilde{v}(r))|\rightarrow0$ uniformly as $n\rightarrow\infty$. In conclusion, $||\tilde{v_{n}}(r)-\tilde{v}(r)||_{\mathscr{X}}\rightarrow0$ as $n\rightarrow\infty$ and the mapping $T$ is compact.

To show $T$ is continuous, we shall prove that if $\left\|v_{1}-v_{2}\right\|_{\mathscr{X}}\rightarrow0$, then  $\left\|Tv_{1}-Tv_{2}\right\|_{\mathscr{X}}\rightarrow0$ for any $v_{1},v_{2}\in\mathcal{S}$. Since $Tv_{1}(\infty)=Tv_{2}(\infty)=1$, then for any $\varepsilon>0$, there is an adequately large $R>0$ (independent of $v_{1}$ and $v_{2}$) so that
\begin{eqnarray}
\label{4.63}
\sup_{r\in[R,+\infty)}\left|r^{-k}\left(1+r^{k}\right)\left(Tv_{1}-Tv_{2}\right)\right|<\varepsilon.\nn
\end{eqnarray}
For the above given $\varepsilon$, in view of $Tv_{1}(0)=Tv_{2}(0)=0$, there exists $\gamma>0$ small such that
\begin{eqnarray}
\label{4.64}
\sup_{r\in(0,\gamma]}\left|r^{-k}\left(1+r^{k}\right)\left(Tv_{1}-Tv_{2}\right)\right|<\varepsilon.\nn
\end{eqnarray}
According to the continuity of $T$ over $[\gamma,R]$, we arrive at $T$ is continuous on $\mathscr{X}$.

Finally, applying the Schauder fixed--point theorem, there exists a pair of solution $(u,v)$ to the two--point boundary value problem \eqref{4.5}--\eqref{4.6} and \eqref{4.2}--\eqref{4.3}. Hence the conclusion follows exactly as in Theorem \ref{th4.1}.

\end{document}